\newcolumntype{Y}{>{\centering\arraybackslash}X}
\newcommand{\cmark}{\textcolor{Green}{\ding{51}}}
\newcommand{\xmark}{\textcolor{Red}{\ding{55}}}
\newcommand{\boxmark}{\textcolor{Goldenrod}{\ding{110}}}
\newcommand{\calA}{\mathcal A}
\newcommand{\calF}{\mathcal F}
\newcommand\pavscore{\operatorname{\mathrm{sc}_{\mathrm{PAV}}}}
\begin{document}
\title{Justified Representation: From Hare to Droop}
\titlerunning{From Hare to Droop}

 \author{Matthew M. Casey \and
 Edith Elkind} 

\authorrunning{M. M. Casey and E. Elkind}
 
%
\institute{Northwestern University, Evanston IL, USA}
%
\maketitle              
\begin{abstract}
The study of proportionality in multiwinner voting with approval ballots has received much attention in recent years~\cite{lackner2023multi}. Typically, proportionality is captured by variants of the Justified Representation axiom~\cite{aziz_2017}, which say that cohesive groups of at least $\ell\cdot\frac{n}{k}$ voters (where $n$ is the total number of voters and $k$ is the desired number of winners) deserve $\ell$ representatives.
The quantity $\frac{n}{k}$ is known as the {\em Hare quota} in the social choice literature. Another---more demanding---choice of quota is the {\em Droop quota}, defined as $\lfloor\frac{n}{k+1}\rfloor+1$. This quota is often used in multiwinner voting with ranked ballots: in algorithms such as Single Transferable Voting, and in proportionality axioms, such as Droop's Proportionality Criterion. A few authors have considered it in the context of approval ballots~\cite{janson_2018,brill_phragmen_2024,aziz_expanding_2020,peters_strategyproofness_2018,peters_core_2025}, but the existing analysis is far from comprehensive. The contribution of our work is a systematic study of JR-style axioms (and voting rules that satisfy them) defined using the Droop quota instead of the Hare quota. For each of the standard JR axioms (namely, JR, PJR, EJR, FPJR, FJR, PJR+ and EJR+), we identify a voting rule that satisfies the Droop version of this axiom. In some cases, it suffices to consider known rules (modifying the corresponding Hare proof, sometimes quite substantially), and in other cases it is necessary to modify the rules from prior work. Each axiom is more difficult to satisfy when defined using the Droop quota, so our results expand the frontier of satisfiable proportionality axioms. We complement our theoretical results with an experimental study, showing that for many probabilistic models of voter approvals, Droop JR/EJR+ are considerably more demanding than standard (Hare) JR/EJR+.
   
\keywords{Multiwinner approval voting  \and Justified representation \and Droop quota.}
\end{abstract}
\newpage

\section{Introduction}
Multiwinner voting with approval ballots is an active subfield of computational social choice~\cite{lackner2023multi}.
An important desideratum in this context is {\em proportionality}, 
i.e., the idea that if a group of voters with shared preferences constitutes a $\sfrac{1}{x}$ fraction of the electorate, they should be able to control a $\sfrac{1}{x}$ fraction of the elected representatives. 

In the context of approval voting, this idea is typically captured by the Justified Representation (JR) axiom~\cite{aziz_2017} and its extensions. This axiom says that in an election with $n$ voters where the goal is to select a size-$k$ committee, if a group of voters $S$ has size at least $\frac{n}{k}$ and all voters in $S$ approve a common candidate, then at least one voter in $S$ must approve some member of the selected committee. Various extensions of this axiom have also been proposed. Typically, they consider larger groups agreeing on multiple candidates,  
and require that such groups get multiple representatives; this includes PJR, EJR, FPJR, FJR, PJR+, and EJR+~\cite{PJR,FPJR,FJR,EJR+}.
Each of these axioms has been shown to be satisfiable, in the sense that every election admits a committee that satisfies it, and for almost all of them (except FJR) there exist polynomial-time computable voting rules that always output committees satisfying these axioms.

A key quantity in the definitions of all these proportionality concepts is the fraction $\frac{n}{k}$, which is known as the {\em Hare quota}; it is named after Thomas Hare, who proposed it in the context of Single Transferable Vote. This quota has a natural interpretation: $\frac{1}{k}$-th of the electorate controls one of the $k$ seats. However, for small values of $k$ the resulting proportionality axiom is very weak: e.g., for $k=1$ the JR axiom is binding only if there is a candidate that is approved by all voters. This weakness of the Hare quota has been recognized by researchers and practitioners alike~\cite{tideman2000better,lundell2007notes}. Thus, a more popular choice of quota in many multiwinner settings is the {\em Droop quota}, named after Henry Richmond Droop, and defined as $\lfloor\frac{n}{k+1}\rfloor+1$. Indeed, this is the quota used in most practical implementation of STV, including national elections in Australia, Ireland and Malta\footnote{\tt https://en.wikipedia.org/wiki/Droop_quota}. For $k=1$, replacing the Hare quota with the Droop quota in the definition of JR results in a meaningful axiom: it rules out outcomes where a majority of voters agree on a candidate, yet none of them approve the election winner. In addition, there cannot exist $k+1$ disjoint groups of voters of size $\lfloor\frac{n}{k+1}\rfloor+1$ each, i.e., satisfying the Droop version of proportionality axioms is not apriori infeasible.

While some work on proportionality in multiwinner voting with approval ballots considers the Droop quota in addition to the Hare quota (see Section~\ref{sec:related} for a literature review), the existing literature has many gaps. This is partly because much of the research on this topic (notably, the important work of Janson~\cite{janson_2018}) precedes the papers that put forward more demanding notions of justified representation---such as FPJR~\cite{FPJR}, FJR~\cite{FJR}, and EJR+~\cite{EJR+}---and sophisticated voting rules that satisfy them, such as, e.g., the Method of Equal Shares~\cite{MES}.
Against this background, the goal of our work is to obtain a comprehensive picture of the Droop proportionality landscape in approval-based multiwinner voting.

\subsection{Our Contribution}
We put forward Droop quota versions of all of the axioms in the JR family, and reproduce many of the key results in the literature for these new axioms. Since the Droop quota is smaller than the Hare quota, we thus show that by slightly modifying existing rules (or, sometimes, keeping the rules unchanged, but tightening the proofs), we can actually satisfy a stronger axiom, guaranteeing representation to smaller groups, and selecting more proportional outcomes.

Our results are summarized in Table~\ref{tbl:results}. Specifically, in Section~\ref{sec:ejr} we focus on Droop EJR/EJR+\cite{aziz_2017,EJR+} (see Section~\ref{sec:prelim} for the definitions of all axioms mentioned below; the voting rules are defined in the sections that prove results for them). We
show that these axioms are satisfied by: (1)~$\varepsilon$-lsPAV~\cite{aziz_2018} with an appropriately chosen value of $\varepsilon$, (2) a natural modification of Greedy Justified Candidate Rule~\cite{EJR+}, and (3) two variants of the Method of Equal Shares~\cite{MES,kraiczy_streamlining_2025} executed with artificially inflated budgets. In Section~\ref{sec:fpjr}, we focus on the recently proposed FPJR axiom~\cite{FPJR}. We show that Droop-FPJR is satisfied by modifications of the Monroe rule and its greedy variant~\cite{monroe} when $k+1$ divides $n$, as well as by all priceable rules~\cite{MES} that select committees of size $k$;
this class of rules includes the Method of Equal Shares (MES)~\cite{MES} completed with SeqPhragm{\'e}n~\cite{brill_phragmen_2024},  
but, contrary to the claim of Kalayc{\i} et al.~\cite{FPJR}, does not include the `vanilla' MES. We therefore provide a separate proof that MES (and the recently proposed Exact Equal Shares~\cite{kraiczy_streamlining_2025}) satisfy (Droop-)FPJR. 
In Section~\ref{sec:fjr} we show that a modification of the Greedy Cohesive Rule~\cite{MES} satisfies Droop FJR. In Section~\ref{sec:experiments} we describe our experiments; the key observation here is that, for the probabilistic models we consider, for many parameter ranges, the Droop versions of the axioms are substantially more demanding than their Hare versions. Section~\ref{sec:concl} concludes and provides future directions. 
Some of the additional experiments are relegated to the appendix, which also includes a number of negative results. E.g., for several rules we show that, even though they satisfy the Hare version of some proportionality axiom, they fail the Droop version of the same axiom; this justifies our proposed modifications of these rules.

\subsection{Related Work}\label{sec:related}
Our paper adapts definitions and theorems from several recent papers on proportional representation in approval-based multiwinner voting~\cite{aziz_2017,brill_phragmen_2024,aziz_2018,MES,EJR+,FPJR,FJR,janson_2018}; we discuss specific papers in relevant sections.

The Droop quota was first proposed by Droop in 1881~\cite{droop}, and multiwinner voting with Droop quota has been extensively studied in the context of apportionment~\cite{pukelsheim2017proportional,brill2024approval} and for ranked ballots---focusing primarily on Single Transferable Vote~\cite{lundell2007notes,tideman2000better}. Aziz and Lee~\cite{aziz_expanding_2020} and Delemazure and Peters~\cite{stv-ties} consider the Droop quota for weak order ballots. In particular, Aziz and Lee put forward the Expanding Approval Rule, and show that it satisfies the Droop Proportionality Criterion \cite{woodall}, a Droop variant of the Proportionality for Solid Coalitions axiom; 
this implies that the approval version of this rule satisfies Droop-PJR. Janson~\cite{janson_2018} studies a variety of properties and voting rules for both approval and ranked ballots, establishing for each property and rule pair the lowest quota (Hare, Droop, or something in between) that the property can be defined with such that the rule satisfies it. Of particular relevance for our work is their proof that PAV satisfies Droop-EJR. Some of the papers on multiwinner voting with approval ballots, while focusing primarily on Hare quota, mention that some of their results extend (or fail to extend) to Droop quota. In particular, Brill et al.~\cite{brill_phragmen_2024} prove that SeqPhragm{\'e}n satisfies Droop-PJR. In contrast, Peters~\cite{peters_strategyproofness_2018} shows that there are stronger impossibility results for satisying strategyproofness and proportionality defined using the Droop quota. Separately, Peters~\cite{peters_core_2025} notes that in the context of core stability there are stronger impossibility results for the variant of the core defined with respect to the Droop quota. 
Masa{\v r}{\'i}k et al.~\cite{masarik_generalised_2024} consider a more expressive model of multiwinner approval voting with constraints; in the absence of constraints the proportionality notion they consider is equivalent to Droop-EJR, and they show that it is satisfied by PAV.
Kehne et al.~\cite{kehne_robust_2025}
consider a variant of Greedy Justified Candidate Rule~\cite{EJR+} that allows groups of size greater than $\frac{\ell n}{k+1}$ to make objections, and claim that it still satisfies EJR+. However, they do not formally define a Droop variant of the EJR+ axiom.

\begin{table}[ht]
\caption{A summary of our main results. Each column refers to the Droop version of the listed axiom. A \cmark\, indicates the rule satisfies the axiom, a \xmark\, indicates the rule does not always satisfy the axiom, and a ? indicates we have no positive or negative results for that pair. \boxmark\textsuperscript{a}\ indicates these only satisfiy PJR when $k+1$ divides $n$, and we do not know if they satisfy PJR+. \boxmark\textsuperscript{b}\ indicates these only satisfiy FPJR when $k+1$ divides $n$. \boxmark\textsuperscript{c}\ indicates that it satisfies PJR/EJR, but we are not sure if it satisfies PJR+/EJR+. \boxmark\textsuperscript{d}\ indicates that it satisfies JR if $k$ divides $n$. The \xmark* for Greedy Monroe indicates that depending on tiebreaking, it may not satisfy JR.} 
\centering
\begin{tabularx}{\textwidth}{@{}l*{5}{lX}@{}}
\toprule
                     & \multicolumn{2}{c}{JR}     & \multicolumn{2}{c}{PJR+}                           & \multicolumn{2}{c}{EJR+}                    & \multicolumn{2}{c}{FPJR}     & \multicolumn{2}{c}{FJR}    \\ \midrule
SeqPhragm{\'e}n  & \cmark  &   & \cmark &\cite{brill_phragmen_2024} & \xmark & \cite{brill_phragmen_2024} & \cmark & Cor. \ref{cor:seq-phragmen-droop-fpjr}  &  \xmark  &    \\
PAV/$\varepsilon$-lsPAV &  \cmark  &    &  \cmark  &   & \cmark & Th. \ref{thm:pav-ejrplus} &  \xmark  & \cite{FPJR} &  \xmark  &    \\
Droop Monroe &  \cmark  &    &  \boxmark\textsuperscript{a}  &  & ? &  & \boxmark\textsuperscript{b} & Th. \ref{thm:droop-monroe-droop-fpjr}, Pr. \ref{prop:droop-monroe-not-droop-pjr} &   ?  &   \\
Droop Greedy Monroe &  \cmark  &    & \boxmark\textsuperscript{a} &  & ? &   & \boxmark\textsuperscript{b} & Th. \ref{thm:droop-monroe-droop-fpjr}, Pr. \ref{prop:droop-monroe-not-droop-pjr} &  ? &   \\
Droop MES/EES &  \cmark  &    &  \cmark &  & \cmark & Th. \ref{thm:droop-mes-droop-ejrplus} & \cmark & Th. \ref{thm:mes-fpjr} & ? &  \\
Droop GJCR  &  \cmark  &    & \cmark &  & \cmark & Th. \ref{thm:droop-gjcr-droop-ejrplus}  & ? &  & ? &  \\
Droop GCR  & \cmark &   & \boxmark\textsuperscript{c} &  & \boxmark\textsuperscript{c} &  & \cmark &  & \cmark & Th. \ref{thm:droop-gcr-droop-fjr} \\ 
Monroe & \boxmark\textsuperscript{d} & Pr. \ref{prop:monroe-droop-jr} & \xmark & Pr. \ref{prop:monroe-not-droop-pjr}  & \xmark &  & \xmark &  & \xmark & \\
Greedy Monroe & \xmark* & Pr. \ref{prop:greedy-monroe-not-droop-jr} & \xmark & Pr. \ref{prop:monroe-not-droop-pjr}& \xmark &  & \xmark &  & \xmark & \\
MES/EES & \xmark & Pr. \ref{prop:mes-not-droop-jr} & \xmark & & \xmark &  & \xmark &  & \xmark &  \\
GJCR & \xmark & Th. \ref{thm:gjcr-gcr-not-droop-jr} & \xmark & & \xmark &  & \xmark & & \xmark & \\
GCR & \xmark & Th. \ref{thm:gjcr-gcr-not-droop-jr} & \xmark & & \xmark &  & \xmark & & \xmark & \\
\bottomrule
\end{tabularx}
\label{tbl:results}
\end{table}


\section{Preliminaries}\label{sec:prelim}
We first give the formal definition of a multiwinner election with approval ballots. 

\begin{definition}[Multiwinner election with approval ballots]
A \emph{multiwinner election with approval ballots} is a tuple $(C,N,\calA,k)$, where $C$ is the set of {\em candidates}, $N$ is the set of {\em voters}, $\calA = (A_i : i\in N)$ is a list of {\em approval ballots}, where $A_i\subseteq C$ is the set of candidates that voter $i$ approves, and $k$ is the target size of the output committee. An {\em outcome} of $(C,N,\calA,k)$ is a subset of $C$ of size at most $k$.
\end{definition}

We additionally define $n = |N|$, and for each $c\in C$ we write $N_c = \{i\in N : c \in A_i\}$ to denote the set of voters that approve candidate $c$.

\subsection{Voting Rules}
A {\em multiwinner voting rule} is a mapping $\calF$ that, given a tuple $(C, N, \calA, k)$, outputs a non-empty set of size-$k$ subsets of $C$; these are the {\em winning  committees} under $\calF$. We consider several multiwinner voting rules in this paper; to help the reader build intuition, we will now define two of these rules, and postpone the definitions of other rules to the sections where we prove technical results about them.

\begin{definition}[Approval Voting (AV)]\label{def:av}
    The {\em Approval Voting (AV) rule~\cite{approval-handbook}} outputs all size-$k$ committees $W$ 
    that satisfy $|N_w|\ge |N_c|$ for all $w\in W, c\in C\setminus W$. Intuitively, a winning committee contains  $k$ candidates that receive the most votes (up to tie-breaking).
\end{definition}

\begin{definition}[Proportional Approval Voting (PAV)]\label{def:pav}
    The {\em Proportional Approval Voting (PAV) rule~\cite{approval-handbook}} assigns a score to each size-$k$ committee as follows: 
    $\pavscore(W)
    =\sum_{i\in N}\sum_{j=1}^{|W\cap A_i|}\frac{1}{j}$.
    It then outputs all size-$k$ committees with the maximum score.
\end{definition}


\subsection{Representation Axioms}
Next, we formulate the representation axioms studied in this paper. We start by defining what it means for a group of voters in an election $(C, N, \calA, k)$ to be cohesive or weakly cohesive; these are the requirements that a group has to meet to deserve representation. Compared to cohesiveness, weak cohesiveness places fewer constraints on the group, and hence leads to stronger proportionality axioms. We present both the Hare and Droop quota versions, which differ only in the size requirement of the cohesive group.

\begin{definition}[Hare/Droop $\ell$-cohesive group]
     For a positive integer $\ell$, we say that a group $S \subseteq N$ is {\em Hare (resp., Droop) $\ell$-cohesive} if $|\bigcap_{i\in S} A_i| \geq \ell$ and $|S| \ge \ell\cdot \frac{n}{k}$ (resp., $|S| > \ell\cdot \frac{n}{k+1}$).
\end{definition}

\begin{definition}[Hare/Droop weakly $(\ell, T)$-cohesive group] For a positive integer $\ell$ and a candidate set $T\subseteq C$, we say that a group $S\subseteq N$ is {\em Hare (resp., Droop) weakly $(\ell,T)$-cohesive} if for each $i\in S$ we have $|A_i\cap T| \geq \ell$ and $|S| \ge |T|\cdot \frac{n}{k}$ (resp., $|S| > |T|\cdot \frac{n}{k+1}$). 
\end{definition}

Observe that, if a group of voters $S$ is Hare (resp., Droop) $\ell$-cohesive, then it is Hare (resp., Droop) weakly $(\ell, T)$-cohesive for every set $T$ that is a size-$\ell$ subset of $\bigcap_{i\in S} A_i$. Moreover, if a group is Hare $\ell$-cohesive, it is also Droop $\ell$-cohesive, and if it is Hare weakly $(\ell, T)$-cohesive, it is Droop weakly $(\ell, T)$-cohesive. 

Using these definitions of cohesiveness, we formulate the seven representation axioms that have been studied in the literature, 
both in the standard (Hare) version and in the Droop version. 

For each axiom we define, it is immediate that its Droop version is at least as demanding as its Hare version, i.e., if an election outcome satisfies the Droop version of an axiom, it also satisfies its Hare version;
however, the converse is not true, as shown in \Cref{thm:gjcr-gcr-not-droop-jr} (see also the discussion that precedes this proposition).

In line with Aziz et al.~\cite{aziz_2017}, we fix an election $(C, N, \calA, k)$ and define what it means for an outcome $W\subseteq C$ of this election to provide a property X. We say that a rule satisfies X if it always outputs an outcome that provides X.

\begin{definition}[JR \cite{aziz_2017}]
    An outcome $W$ provides {\em Hare (resp., Droop) Justified Representation (JR)} if for every Hare (resp., Droop) $1$-cohesive group $S$, the members of $S$ collectively approve at least one candidate in the outcome, i.e., $\left|\left(\bigcup_{i\in S} A_i\right) \cap W\right| \geq 1$.
\end{definition}

\begin{definition}[PJR \cite{PJR}]
    An outcome $W$ provides {\em Hare (resp., Droop) Proportional Justified Representation (PJR)} if for all $\ell\in[k]$ and for every Hare (resp., Droop) $\ell$-cohesive group $S$, the members of $S$ collectively approve at least $\ell$ candidates in the outcome, i.e., $\left|\left(\bigcup_{i\in S} A_i\right) \cap W\right| \geq \ell$.
\end{definition}

\begin{definition}[FPJR \cite{FPJR}]
    An outcome $W$ provides {\em Hare (resp., Droop) Full Proportional Justified Representation (FPJR)} if for all 
    $\ell\in[k]$, $T\subseteq C$ and
    for every Hare (resp., Droop) weakly $(\ell, T)$-cohesive group $S$, the members of $S$ collectively approve at least $\ell$ candidates in the outcome, i.e., $\left|\left(\bigcup_{i\in S} A_i\right) \cap W\right| \geq \ell$.
\end{definition}

\begin{definition}[EJR \cite{aziz_2017}]
    An outcome $W$ provides {\em Hare (resp., Droop) Extended Justified Representation (EJR)} if for all $\ell\in[k]$ and for every Hare (resp., Droop) $\ell$-cohesive group $S$, there exists a voter $i\in S$ who approves at least $\ell$ candidates in the outcome, i.e., $|A_i \cap W| \geq \ell$.
\end{definition}

\begin{definition}[FJR \cite{FJR}]
    An outcome $W$ provides {\em Hare (resp., Droop) Full Justified Representation (FJR)} if for all $\ell\in[k]$ and for every Hare (resp., Droop) weakly $(\ell,T)$-cohesive group $S$, there exists a voter $i\in S$ who approves at least $\ell$ candidates in the outcome, i.e., $|A_i \cap W| \geq \ell$.
\end{definition}

The JR axiom can be viewed as a special case of PJR and EJR when $\ell=1$, and is fairly easy to satisfy (while AV fails it, it is satisfied by PAV and many other rules); thus, in what follows we will focus on the other axioms.
PJR is the weakest among the other four axioms. EJR and FPJR are both strengthenings of PJR, with EJR requiring a stronger guarantee for groups that are cohesive, and FPJR allowing more groups (namely weakly cohesive ones) to demand representation. EJR and FPJR are known to be incomparable for the Hare quota~\cite{FPJR}, and we will conclude the same for the Droop versions (Corollary~\ref{cor:ejr-fpjr-incomparable}). FJR strengthens EJR and FPJR by combining the requirements of each.

 The next two axioms, PJR+ and EJR+, are strengthenings of PJR and EJR. These axioms relax the requirement for a group to be $\ell$-cohesive---in a different way than weakly cohesive groups. Namely, instead of requiring a group of voters to jointly approve $\ell$ candidates---as in the case of cohesive groups---it only requires them to jointly approve one candidate that is not in the winning committee.

\begin{definition}[PJR+ \cite{EJR+}]
    \label{def:pjrplus}
    An outcome $W$ provides {\em Hare (resp., Droop) PJR+} if for all $\ell \in [k]$, every group of voters $S$ with size $|S|\ge\ell\cdot\frac{n}{k}$ (resp, $|S|>\ell\cdot\frac{n}{k+1}$) such that $ \bigcap_{i\in S} A_i\setminus W\neq\varnothing$, the members of $S$ collectively approve at least $\ell$ candidates in the outcome, i.e., $\left|\left(\bigcup_{i\in S} A_i\right) \cap W\right| \geq \ell$.
\end{definition}

\begin{definition}[EJR+ \cite{EJR+}]
    \label{def:ejrplus}
    An outcome $W$ provides {\em Hare (resp., Droop) EJR+} if for all $\ell \in [k]$, every group of voters $S$ with size $|S|\ge\ell\cdot\frac{n}{k}$ (resp., $|S|>\ell\cdot\frac{n}{k+1}$) such that $ \bigcap_{i\in S} A_i\setminus W\neq\varnothing$
    there exists some voter $i\in S$ who approves at least $\ell$ candidates in the outcome, i.e., $|A_i \cap W| \geq \ell$.
\end{definition}

Brill and Peters~\cite{EJR+} show that for the Hare quota, EJR+ is incomparable to FJR. In this paper we show the same is true for the Droop versions (Corollary~\ref{cor:incomparability-ejrplus-fjr}).

In what follows, for consistency with prior work, we will often omit `Hare' from the name of an axiom, i.e., we will write `EJR' instead of `Hare-EJR'.

The following lemma will be useful in our analysis.

\begin{lemma}
    \label{lem:size-of-cohesive-group}
    For all $z, \ell, n, k\in\mathbb N$ the inequality $z > \frac{\ell n}{k+1}$ implies
    $z \geq \frac{\ell n + 1}{k+1}$.
\end{lemma}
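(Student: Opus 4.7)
The plan is to clear the denominator and then invoke the integrality of both sides. Specifically, starting from $z > \frac{\ell n}{k+1}$, I would multiply through by the positive integer $k+1$ to obtain $z(k+1) > \ell n$. Since $z$, $k$, $\ell$, and $n$ are all natural numbers, both $z(k+1)$ and $\ell n$ are integers, so a strict inequality between them forces the stronger statement $z(k+1) \geq \ell n + 1$. Dividing back by $k+1$ yields $z \geq \frac{\ell n + 1}{k+1}$, as desired.

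There is really no obstacle here beyond recognizing that the content of the lemma is exactly the elementary fact that if an integer strictly exceeds another integer, it must exceed it by at least one; the role of the lemma in the paper is presumably to let later proofs pass freely between the two equivalent forms of the Droop-cohesiveness size bound without repeatedly re-invoking integrality. I would keep the proof to two or three lines and present it in the order above.
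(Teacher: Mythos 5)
Your proof is correct and is essentially identical to the paper's: both clear the denominator, observe that $z(k+1)$ and $\ell n$ are integers so the strict inequality $z(k+1) > \ell n$ upgrades to $z(k+1) \geq \ell n + 1$, and divide back by $k+1$. No differences worth noting.
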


\begin{proof}
    Let $s=z(k+1)$; since $z\in\mathbb N$ so is $s$. Then  $z - \frac{\ell n}{k+1} = \frac{s-\ell n}{k+1}$. Now, $z>\frac{\ell n}{k+1}$ implies $s-\ell n > 0$; as $s$ and $\ell n$ are both integers, we obtain $s- \ell n \geq 1$. Hence $z - \frac{\ell n}{k+1} \geq \frac{1}{k+1}$, and the claim follows.
    \qed
\end{proof}

\section{Extended Justified Representation(+)}\label{sec:ejr}
We will now consider several voting rules that are known to satisfy EJR+, and prove that they satisfy---or can be modified to satisfy---Droop-EJR+.

\subsection{Local Search PAV}
The first voting rule that was shown to satisfy EJR was the PAV rule~\cite{aziz_2017}. Subsequently, Brill and Peters~\cite{EJR+} showed that it also satisfies EJR+ and Janson~\cite{janson_2018} showed that it satisfies Droop-EJR. However, computing the winning committees under PAV is NP-hard~\cite{PAV-hard}, making this rule unsuitable for practical use. To address this, Aziz et al.~\cite{aziz_2018} proposed a bounded local search variant of PAV, which we will refer to as $\varepsilon$-lsPAV. For a suitable choice of $\varepsilon$, this rule is polynomial-time
computable and satisfies EJR (the proof of Aziz et al.~\cite{aziz_2018} can also be used to show that it satisfies EJR+, but to the best of our knowledge this observation has not been made in the literature). However, it was not known whether this rule satisfies Droop-EJR. We will now close this gap, showing that we can choose $\varepsilon$ so that $\varepsilon$-lsPAV satisfies Droop-EJR+ and is polynomial-time computable. We start by giving a formal definition of this rule.

\begin{definition}[$\boldsymbol{\varepsilon}$-lsPAV~\cite{aziz_2018}]
    \label{def:ls-pav}
    The {\em $\varepsilon$-bounded local search PAV rule ($\varepsilon$-lsPAV)} starts with an arbitrary size-$k$ committee $W$ and proceeds in rounds. At each round it checks if there is a pair of candidates $(w, c)\in (W,C\setminus W)$ such that $\pavscore(W\cup\{c\}\setminus\{w\})\ge \pavscore(W)+\varepsilon$;
    if some such pair exists, it sets $W:=W\cup\{c\}\setminus\{w\}$.
    When no such pair can be found, the rule returns $W$.
\end{definition}

Aziz et al.~\cite{aziz_2018} show that $\frac{n}{k^2}$-lsPAV runs in polynomial time and satisfies EJR; we will now extend their result to
Droop-EJR+. Note that while the general proof strategy is similar in spirit to that of Aziz et al.~\cite{aziz_2018}, the particulars of the proofs are different: our proof is forced to take a more careful approach, because of the more stringent Droop quota.

\begin{theorem}
\label{thm:pav-ejrplus}
    $\frac{1}{k^2}$-lsPAV is polynomial-time computable and  satisfies Droop-EJR+.
\end{theorem}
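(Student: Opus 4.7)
The plan has two independent parts. For the polynomial-time bound, I would observe that the PAV score of any size-$k$ committee is at most $n H_k \leq n(\ln k + 1)$. Since every accepted swap raises the score by at least $\varepsilon = 1/k^2$, the number of iterations is $O(nk^2 \log k)$; each iteration scans $O(k|C|)$ pairs $(w,c)$, and evaluating a swap takes $O(nk)$ time, so the overall runtime is polynomial.

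For Droop-EJR+, I would argue by contradiction. Suppose the output committee $W$ admits some $\ell\in[k]$, a group $S$ with $|S| > \ell n/(k+1)$, and a candidate $c \in \bigcap_{i\in S}A_i \setminus W$ such that $t_i := |A_i \cap W| \leq \ell - 1$ for every $i\in S$. Since the local search has terminated, every swap $(w,c)$ with $w \in W$ gains strictly less than $\varepsilon$. My goal would be to derive a contradiction by averaging the gains of these $k$ swaps and showing that the average already exceeds $\varepsilon$.

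A direct case analysis of $\pavscore(W\cup\{c\}\setminus\{w\}) - \pavscore(W)$ shows that, for fixed $c$, voter $i$'s contribution to $\sum_{w\in W}[\pavscore(W\cup\{c\}\setminus\{w\}) - \pavscore(W)]$ equals $(k - t_i)/(t_i+1)$ whenever $c \in A_i$ (the $k-t_i$ swaps with $w\notin A_i$ each gain $1/(t_i+1)$, while the $t_i$ swaps with $w \in A_i$ cancel out) and lies in $\{-1, 0\}$ whenever $c \notin A_i$. Because $(k-t)/(t+1)$ is decreasing in $t$, each $i \in S$ contributes at least $(k-\ell+1)/\ell$; voters in $N_c \setminus S$ contribute non-negatively; and at most $n - |N_c| \leq n - |S|$ voters contribute $-1$. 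Combining gives the lower bound $|S|(k+1)/\ell - n$ on the summed gain.

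The main obstacle---and the reason the proof of Aziz et al.~\cite{aziz_2018} does not transfer directly---is that the Droop threshold $|S| > \ell n/(k+1)$, substituted into $|S|(k+1)/\ell - n$, yields only $\Sigma > 0$, which is not quantitative enough to beat $\varepsilon$. This is exactly where the integrality slack supplied by \Cref{lem:size-of-cohesive-group} comes in: it strengthens the Droop size bound to $|S| \geq (\ell n+1)/(k+1)$, which upgrades the summed gain to at least $1/\ell$. Pigeonhole over the $k$ choices of $w$ then produces a single swap of gain at least $1/(\ell k) \geq 1/k^2 = \varepsilon$, contradicting termination. Thus the extra $1/(k+1)$ from Lemma~\ref{lem:size-of-cohesive-group} is precisely the slack that makes the Droop version go through while leaving the $\varepsilon = 1/k^2$ choice unchanged.
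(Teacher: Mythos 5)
Your proof is correct and follows essentially the same route as the paper's: both arguments average the gain of swapping $c$ in over the candidates of $W$, arrive at the identical lower bound $|S|\cdot\frac{k+1}{\ell}-n$, invoke Lemma~\ref{lem:size-of-cohesive-group} to upgrade this to $\frac{1}{\ell}$, and apply pigeonhole to extract a swap of gain at least $\frac{1}{\ell k}\ge\frac{1}{k^2}$. The only (harmless) difference is bookkeeping: the paper sums over $w\in W\setminus\bigcap_{i\in S}A_i$ and works with marginal contributions $m(w)$, whereas you sum over all $k$ swaps with a direct per-voter case analysis, which is arguably a bit cleaner.
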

\begin{proof}
    The argument that $\frac{1}{k^2}$-lsPAV runs in polynomial time is the same as in the work of Aziz et al.~\cite{aziz_2018}: each swap increases the PAV score by at least $\frac{1}{k^2}$, and the maximum possible PAV score is $n(1+\frac12+\dots+\frac{1}{k})$, so the number of iterations is $O(nk^2\log k)$, and each iteration runs in time $O(k|C|\cdot nk)$.

    To argue that $\frac{1}{k^2}$-lsPAV satisfies Droop-EJR+, we will show that if a size-$k$ committee $W$ fails to provide Droop-EJR+ then there exists a pair of candidates $(w, c)\in (W, C\setminus W)$ such that replacing $w$ with $c$ increases the PAV score by at least $\frac{1}{k^2}$; hence, when the algorithm terminates, $W$ provides Droop-EJR+.
 
    Fix a committee $W$ that fails Droop-EJR+, as witnessed by $\ell\in [k]$, a group $S$ with $|S|>\frac{\ell n}{k+1}$ and $|A_i \cap W| \leq \ell - 1$ for all $i\in S$, and a candidate $c\in \bigcap_{i\in S} A_i\setminus W$. 
    Let $A_{S} = \bigcap_{i\in S} A_i$,  and for every $w\in W$ let $m(w)=\pavscore(W) - \pavscore(W\setminus\{w\})$ be the marginal contribution of $w$. Also, let $W_i=A_i\cap W$ for each $i\in N$.
    
    Note that $|A_S \cap W|\le \ell-1$ and hence $W\setminus A_S\neq\varnothing$. Suppose we replace some $w \in W\setminus A_{S}$ with $c$; let $\Delta(w,c)$ denote the resulting change in PAV score. Since each $i\in S$ approves $c$, we have 
\begin{align*}
        \Delta(w,c) \geq \sum_{i\in S} \frac{1}{|W_i\setminus\{w\}| + 1} - m(w)
        \geq \!\!\sum_{i\in S : w\notin A_i}\frac{1}{|W_i| + 1} + \!\!\sum_{i\in S : w\in A_i}\frac{1}{|W_i|} - m(w).
    \end{align*}
     Taking a sum over all candidates in $W\setminus A_{S}$, we get
\begin{align*}       \sum_{w\in W\setminus A_{S}} \Delta(w,c) &\geq \!\sum_{w\in W\setminus A_{S}}\!\!\left(\sum_{i\in S : w\notin A_i}\frac{1}{|W_i| + 1} + \sum_{i\in S : w\in A_i}\frac{1}{|W_i|} \right) - \sum_{w\in W\setminus A_{S}} m(w)\\
        &= \sum_{i\in S}\left(\sum_{\substack{w\in W\setminus A_{S} \\ w\notin A_i}} \frac{1}{|W_i| + 1} + \sum_{\substack{w\in W\setminus A_{S} \\ w \in A_i}} \frac{1}{|W_i|} \right) - \sum_{w\in W\setminus A_{S}} m(w)\\
        &= \!\!\sum_{\substack{i\in S\\ |W_i| \geq 1}}\left(\frac{|(W\setminus A_{S}) \setminus A_i|}{|W_i| + 1} + \frac{|(W\setminus A_{S}) \cap A_i|}{|W_i|} \right) 
        +\sum_{\substack{i\in S\\ W_i = \varnothing}}\frac{|(W\setminus A_{S}) \setminus A_i|}{|W_i| + 1}- \sum_{w\in W\setminus A_{S}} m(w)\\
        &= \!\!\sum_{\substack{i\in S\\ |W_i| \geq 1}}\!\!\left(\frac{k-|W_i|}{|W_i| + 1} + \frac{|W_i| - |W\cap A_{S}|}{|W_i|} \right)
        +\!\!\sum_{\substack{i\in S\\ W_i = \varnothing}}k - \!\!\!\sum_{w\in W\setminus A_{S}} m(w).
    \end{align*}
    Note that each voter $i\in N$ with $W_i = A_i\cap W\neq\varnothing$ contributes exactly 1 to $\sum_{w\in W} m(w)$: if $|W_i|=j$, then each candidate in $W_i$ provides a marginal contribution of $\frac{1}{j}$ to $i$'s `PAV utility' . Therefore we obtain
  \begin{align*}
     \sum_{w\in W\setminus A_{S}} m(w) &= \sum_{w\in W} m(w) - \sum_{w\in W\cap A_{S}} m(w)
        = \sum_{\substack{i\in N\\ |W_i| \geq 1}}1 - \sum_{w\in W\cap A_{S}} m(w)\\
        &\leq n - \sum_{\substack{i\in S\\W_i = \varnothing}} 1 - \sum_{w\in W\cap A_{S}} m(w)
        \leq n - \sum_{\substack{i\in S\\W_i = \varnothing}} 1 - \sum_{\substack{i\in S\\ |W_i| \geq 1}}\frac{|W\cap A_{S}|}{|W_i|}.
    \end{align*}
    Combining these inequalities, we get
    \begin{align*}
        \sum_{w\in W\setminus A_{S}} \Delta(w,c) &\geq \sum_{\substack{i\in S\\ |W_i| \geq 1}}\left(\frac{k-|W_i|}{|W_i| + 1} + \frac{|W_i| - |W\cap A_{S}|}{|W_i|} \right)+\sum_{\substack{i\in S\\ W_i = \varnothing}}k
         - n + \sum_{\substack{i\in S\\W_i = \varnothing}} 1 + \sum_{\substack{i\in S\\ |W_i| \geq 1}}\frac{|W\cap A_{S}|}{|W_i|}\\
        &= \sum_{\substack{i\in S\\ |W_i| \geq 1}} \left(\frac{k-|W_i|}{|W_i|+1} + \frac{|W_i|}{|W_i|}\right) + \sum_{\substack{i\in S\\ W_i = \varnothing}} (k+1) - n\\
        &= \sum_{\substack{i\in S\\ |W_i| \geq 1}} \frac{k+1}{|W_i| + 1}+ \sum_{\substack{i\in S\\ W_i = \varnothing}} (k+1) -n
        \geq \sum_{\substack{i\in S\\ |W_i| \geq 1}} \frac{k+1}{(\ell -1) + 1}+ \sum_{\substack{i\in S\\ W_i = \varnothing}} (k+1) -n\\
        &\geq \sum_{i\in S} \frac{k+1}{\ell} -n= |S|\cdot \frac{k+1}{\ell} -n\geq \frac{\ell n + 1}{k+1} \cdot \frac{k+1}{\ell} -n= \frac{1}{\ell}, 
    \end{align*}
    where we use Lemma~\ref{lem:size-of-cohesive-group} to lower-bound $|S|$.
    
    Hence, by the pigeonhole principle there is some candidate $w \in W\setminus A_{S}$ for which $\Delta(w,c) \geq \frac{1}{\ell k} \geq \frac{1}{k^2}$.
    \qed
\end{proof}

Brill and Peters~\cite{EJR+} prove  that for the Hare quota, EJR+ is incomparable to FJR; we will now prove an analog of their result for the Droop quota.
\begin{corollary}
\label{cor:incomparability-ejrplus-fjr}
    Droop-EJR+ and Droop-FJR are incomparable.
\end{corollary}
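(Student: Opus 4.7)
The plan is to establish incomparability by exhibiting two explicit elections with specific committees, one for each direction, taking as starting point the two Hare-quota incomparability instances of Brill and Peters and verifying that (possibly after a mild rescaling) they carry over to the Droop setting.

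For the first direction (Droop-EJR+ holds, Droop-FJR fails), I would consider the Brill--Peters election in which a committee $W$ provides Hare-EJR+ but is witnessed to fail Hare-FJR by a weakly $(\ell,T)$-cohesive group $S$ whose voters do not all share a common approved candidate outside $W$. Since $|S|\ge \ell n/k$ and $\ell n/k > \ell n/(k+1)$, the same $S$ is also Droop-weakly $(\ell,T)$-cohesive and hence witnesses a Droop-FJR violation for $W$. Because $\bigcap_{i\in S} A_i\setminus W = \varnothing$, the group $S$ is vacuous for Droop-EJR+, so it suffices to check that no other group triggers Droop-EJR+ in this instance. A clean way to secure this is to instead take $W$ to be the committee returned by $\frac{1}{k^2}$-lsPAV on the election, which by Theorem~\ref{thm:pav-ejrplus} automatically provides Droop-EJR+; one then only needs to confirm that this committee still fails Droop-FJR on the same witness $S$.

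For the second direction (Droop-FJR holds, Droop-EJR+ fails), I would start from the Brill--Peters instance in which some committee $W$ satisfies Hare-FJR but is witnessed to fail Hare-EJR+ by a group $S$ with a shared candidate $c\in \bigcap_{i\in S} A_i\setminus W$. As before, $|S|\ge \ell n/k > \ell n/(k+1)$ immediately yields a Droop-EJR+ violation. The substantive step is to verify that the chosen $W$ continues to satisfy Droop-FJR---a strictly stronger condition than Hare-FJR---which amounts to certifying that no weakly $(\ell',T')$-cohesive group of size in the gap $(\ell' n/(k+1),\,\ell' n/k)$ arises in the instance.

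The main obstacle is precisely this last verification, since the Droop threshold admits groups that the Hare threshold excludes. If an off-the-shelf Brill--Peters instance admits problematic groups in the gap $(\ell' n/(k+1),\,\ell' n/k)$, my remedy is to rescale the construction: replicate the voter population by a suitable integer factor and pad with voters who approve only disjoint ``filler'' candidates that lie outside $W$. This shifts the Droop and Hare thresholds together so that the intended witness still lies above the Droop threshold, while no unintended group sneaks into the Droop--Hare gap, all without disturbing the satisfaction or failure pattern of the two axioms.
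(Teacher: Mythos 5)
Your proposal matches the paper's proof in both directions: the paper handles the first by noting that PAV satisfies Droop-EJR+ yet fails FPJR (hence Droop-FJR), and the second by reusing the two-voter Brill--Peters instance with $A_1=\{c_1,c_2\}$, $A_2=\{c_1,c_3\}$, $k=2$, where the outcome $\{c_2,c_3\}$ provides Droop-FJR but not Droop-EJR+. The Droop-FJR verification you defer does go through on that instance with no rescaling needed, since a Droop weakly $(2,T)$-cohesive group there would force $|T|=3$ and hence $|S|>2=n$, which is impossible.
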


\begin{proof}
    PAV satisfies Droop-EJR+, but does not satisfy FPJR~\cite{FPJR}, which means that it also does not satisfy Droop-FJR. Now, consider the following example due to Brill and Peters~\cite{EJR+}: there are 3 candidates $c_1,c_2,c_3$, two voters with approval ballots $A_1=\{c_1, c_2\}$ and $A_2=\{c_1, c_3\}$ and $k=2$. Then the outcome $\{c_2,c_3\}$ satisfies Droop-FJR but does not satisfy Droop-EJR+.
    \qed
\end{proof}

\subsection{Greedy Justified Candidate Rule}
Brill and Peters~\cite{EJR+} propose another polynomial-time computable rule---Greedy Justified Candidate Rule (GJCR)---that is explicitly designed to satisfy Hare-EJR+. This rule operates by finding groups of size $\ell\cdot\frac{n}{k}$ that are unsatisfied with the current outcome, i.e., it is defined with the Hare quota in mind. Therefore, it is not surprising that it does not satisfy Droop-EJR+; in \Cref{thm:gjcr-gcr-not-droop-jr} (Appendix~\ref{app:neg}) we show that this is indeed the case. However, we will now show that, by replacing the condition $|S|\ge\ell\cdot\frac{n}{k}$ with $|S|>\ell\cdot\frac{n}{k+1}$ in the definition of this rule, we obtain a rule that satisfies Droop-EJR+.
We start by defining both variants of this rule. 

\begin{definition}[Greedy Justified Candidate Rule (GJCR)~\cite{EJR+}]\label{def:gjcr}
    The {\em Hare (resp., Droop) Greedy Justified Candidate Rule (GJCR)} starts by setting $W=\varnothing$ and $\ell = k$, and proceeds iteratively. In each round, it
    checks whether there is a candidate $c\in C\setminus W$ such that there is a group of voters $S\subseteq N_c$ with
    $|S|\ge \ell\cdot\frac{n}{k}$ (resp., $|S|>\ell\cdot\frac{n}{k+1}$) such that each voter in $S$ approves at most $\ell-1$ candidates in $W$. If yes, it adds some such candidate $c$ to $W$; otherwise, it decrements $\ell$ by 1. If $\ell=0$, it adds an arbitrary set of $k-|W|$ candidates to $W$ and outputs the resulting committee.
\end{definition}
 We now adapt the proof of Brill and Peters~\cite{EJR+} that GJCR satisfies Hare-EJR+ to show that Droop GJCR satisfies Droop-EJR+.

\begin{theorem}
    \label{thm:droop-gjcr-droop-ejrplus}
    Droop GJCR selects a committee of size $k$ and satisfies Droop-EJR+.
\end{theorem}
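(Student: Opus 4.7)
The theorem has two parts: the algorithm outputs a committee of size exactly $k$, and this committee satisfies Droop-EJR+. The main content is the EJR+ claim, which I will sketch first.

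For Droop-EJR+, the plan is a proof by contradiction that closely mirrors Brill and Peters' argument for the Hare version. Suppose the output $W$ violates Droop-EJR+, witnessed by some $\ell \in [k]$, a group $S$ with $|S| > \ell n/(k+1)$, and a candidate $c^* \in \bigcap_{i\in S} A_i \setminus W$ such that $|A_i \cap W| \le \ell - 1$ for every $i \in S$. The key step is to examine the moment during the execution of Droop GJCR when the level variable was decremented from $\ell$ to $\ell - 1$, and let $W^\ell$ denote the partial committee at that moment. Since $W^\ell \subseteq W$ (the committee only grows over time), every $i \in S$ has $|A_i \cap W^\ell| \le |A_i \cap W| \le \ell - 1$, and $c^* \notin W$ implies $c^* \notin W^\ell$. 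But then $S \subseteq N_{c^*}$ is a group of size $> \ell n/(k+1)$ whose every member approves at most $\ell - 1$ candidates in $W^\ell$, so $(c^*, S)$ is a valid pair for adding $c^*$ at level $\ell$ relative to $W^\ell$. This contradicts the fact that the algorithm decremented the level at that moment, which occurs only when no such pair exists.

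For the size claim, the plan is to adapt the counting argument used by Brill and Peters for Hare GJCR to the Droop setting. The essential ingredients are the lower bound $|S^t| > \ell^t n/(k+1)$ on each witness set (sharpened via Lemma~\ref{lem:size-of-cohesive-group} when converting strict inequalities into integer statements), combined with an upper bound on each voter's total witness participation obtained by tracking, at each level $\ell$, how the gap between $\ell$ and her current approval count $|A_i \cap W|$ caps her remaining witness memberships. Summing these bounds across all levels shows the algorithm adds at most $k$ candidates before $\ell$ reaches $0$, so the padding step produces a committee of exactly size $k$. The main obstacle I expect to be the size bound: the accounting is delicate because witness sets at different levels can share voters, and the strict Droop threshold is tighter than its Hare counterpart, so we must be careful to extract exactly one unit of ``slack'' from Lemma~\ref{lem:size-of-cohesive-group} when we need it.
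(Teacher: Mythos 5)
Your argument for the Droop-EJR+ half is correct and is essentially the paper's: both consider the point in the execution where the level counter sits at the witness value $\ell$ and observe that, since the committee only grows, the violating pair $(c^*,S)$ would already have met the selection condition there, so $c^*$ would have been added --- a contradiction.

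The size bound, however, is where your proposal stops being a proof and becomes a declared intention, and that is where the real work of this theorem lies. The paper closes it with a per-voter pricing argument whose decisive computation your sketch does not contain. Give each selected candidate a price of $1$, split equally among its witness set $N'$; since $|N'|>\frac{\ell n}{k+1}$, Lemma~\ref{lem:size-of-cohesive-group} gives $|N'|\ge\frac{\ell n+1}{k+1}$, so each member pays at most $\frac{k+1}{\ell n+1}$. A voter pays only for candidates she approves, and she can join a witness set at level $\ell$ only while $|A_i\cap W|\le\ell-1$; since the level is non-increasing, each of her at most $\ell-1$ earlier payments was also at most $\frac{k+1}{\ell n+1}$, so after the current purchase she has spent at most $\ell\cdot\frac{k+1}{\ell n+1}=\frac{k+1}{n}\cdot\frac{\ell n}{\ell n+1}\le\frac{k+1}{n}-\frac{1}{n^2}$, the last inequality holding uniformly over $\ell\in[k]$ because $\frac{k+1}{n(\ell n+1)}\ge\frac{k+1}{n(\ell n+n)}\ge\frac{1}{n^2}$. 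This is the point you flag as the obstacle but do not resolve: the ``slack'' from the strict Droop inequality is not extracted once at the end of a global count, but shows up as a uniform lifetime gap of $\frac{1}{n^2}$ per voter below $\frac{k+1}{n}$. Summing over voters, total spending is at most $k+1-\frac{1}{n}<k+1$, so the main loop adds at most $k$ candidates and the padding step yields exactly $k$. (Your worry that witness sets at different levels share voters is dissolved by this per-voter accounting.) As written, your size claim is unsupported without this or an equivalent computation.
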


\begin{proof}
    Let $W$ be an output of Droop GJCR. It is immediate that $W$ satisfies Droop-EJR+. Indeed, if there is an $\ell'\in \mathbb N$, a candidate $c\in C\setminus W$, and a group of voters $S$ with $|S| > \frac{\ell' n}{k+1}$ such that $c\in A_i$, $|A_i\cap W|<\ell'$ for all $i\in S$, then
    $c$ would have been selected by the algorithm when $\ell=\ell'$, a contradiction.

    Now we show that $|W| = k$. To this end, we set up a pricing scheme with total budget of $k+1 - \frac{1}{n}$  and a per-candidate cost of 1. The existence of this scheme proves that the rule selects at most $k$ candidates. We start by giving each voter a budget of $\frac{k+1}{n} - \frac{1}{n^2}$. If a candidate $c$ is selected because of a voter set $N' = \{i\in N_c \,:\, |A_i \cap W| < \ell\}$ with $|N'| > \frac{\ell n }{k+1}$ (and hence by Lemma~\ref{lem:size-of-cohesive-group} $|N'|\ge \frac{\ell n+1}{k+1}$), we split the (unit) cost of $c$ equally among the voters in $N'$, so that each voter in $N'$ pays $\frac{1}{|N'|} \le\frac{k+1}{\ell n+1}$ for $c$. Consider a voter $i\in N'$. Note that up to this point $i$ only spent their budget on candidates whose costs were shared by groups of size greater than $\frac{\ell n }{k+1}$, and $|A_i\cap W|<\ell$. Hence, before $c$ is selected, $i$'s total spending is at most $(\ell-1)\cdot\frac{k+1}{\ell n+1}$. 
    Since $k\ge \ell$, we have
    $$
    \frac{k+1}{n}\cdot \left(1-\frac{\ell n}{\ell n+1}\right) = \frac{k+1}{n}\cdot\frac{1}{\ell n+1}\ge \frac{1}{n}\cdot\frac{k+1}{\ell n+n}\ge \frac{1}{n^2}. 
    $$
    Therefore, we can bound the total spending of voter $i$ after purchasing $c$ as 
    $$
    \ell\cdot\frac{k+1}{\ell n+1} = \ell\cdot\frac{k+1}{\ell n}\cdot\frac{\ell n}{\ell n+1}=\frac{k+1}{n}\cdot\frac{\ell n}{\ell n + 1}\le \frac{k+1}{n}-\frac{1}{n^2}.
    $$
    As this is true for every voter at every point in the execution of the algorithm, no voter ever 
    overspends their budget. Since the total budget of the voters is less than $k+1$, we have that at most $k$ candidates are purchased.
    \qed
\end{proof}

\subsection{Equal Shares Rules}

So far in the section, we considered PAV, $\varepsilon$-lsPAV and GJCR. The PAV rule is not defined in terms of quotas, in the sense that the quantity $\frac{n}{k}$ (or $\frac{n}{k+1}$) does not appear in the definition of the rule. Accordingly, we did not have to modify the rule for it to satisfy Droop-EJR+ (though for the local search version we did have to use a smaller value of $\varepsilon$, compared to the one used to make this rule satisfy Hare-EJR). In contrast, GJCR is defined in terms of a quota, so, to create a version of GCJR that satisfies Droop-EJR+, we had to tweak the rule itself. 

The next rule we consider is the Method of Equal Shares (MES)~\cite{MES}, together with its recently proposed simplification, Exact Equal Shares (EES)~\cite{kraiczy_streamlining_2025}. 

\begin{definition}[Method of Equal Shares (MES)~\cite{MES}]
The rule proceeds in a sequential manner, starting with $W=\varnothing$. Each candidate has a cost of $\frac{n}{k}$, and initially each voter's budget $b_i$ is $1$. At each iteration, the rule computes the {\em affordability threshold} $q(c)$ of each candidate $c\in C\setminus W$ as the smallest value $q$ that satisfies $\sum_{i\in N_c}\min\{b_i, q\}=\frac{n}{k}$ (i.e., the voters who `purchase' $c$ have to share its cost equally, except that if a voter would run out of money by doing so, they can contribute their entire remaining budget instead). It then selects a candidate $c\in\arg\min q(c)$, adds it to $W$ and updates the budgets of voters in $N_c$ as $b_i:=\max\{b_i-q, 0\}$.
The algorithm terminates and returns $W$ when no candidate in $C\setminus W$ has a bounded affordability threshold; importantly, it may happen that $|W|<k$.
\end{definition}

Exact Equal Shares (EES)~\cite{kraiczy_streamlining_2025} is a variant of MES where the cost of a candidate $c$ must be split exactly equally amongst all voters who pay for it, 
i.e., the affordability threshold of $c$ is defined as $\min\{\frac{n}{k\cdot |S|}: S\subseteq N_c, b_i\ge \frac{n}{k\cdot|S|}\}$ (this quantity can be computed efficiently by a greedy algorithm). Our proof techniques are general enough to apply to both MES and EES. 

 The quantity $\frac{n}{k}$ appears in the definition of this rule, so it is perhaps  not surprising that MES fails Droop-EJR+ (see \Cref{prop:mes-not-droop-jr} in Appendix~\ref{app:neg}). A natural approach to address this would be to set the candidate costs to $\frac{n}{k+1}+\varepsilon$ for a carefully chosen value of $\varepsilon$: $\varepsilon$ should be positive, so that the voters cannot afford more than $k$ candidates, but small enough that a group of size greater than $\ell\cdot\frac{n}{k+1}$ can afford $\ell$ candidates. It turns out that this indeed results in a rule that satisfies Droop-EJR+. For presentation purposes, instead of scaling down the candidate costs, we will scale up the voters' budgets. 

In more detail, we will run MES with a \emph{virtual budget}, allocating each voter $i$ a budget of $b > 1$. We note that executing MES with a virtual budget is a common technique used to force this rule to fill as many of the $k$ seats as possible: indeed, when run with $b=1$, MES frequently selects much fewer than $k$ candidates (see the discussion in the work of Kraiczy et al.~\cite{kraiczy_adaptive_2024,kraiczy_streamlining_2025}).  
It turns out that, by setting $b=\frac{(k+1)n}{kn+1}=1+\frac{n-1}{kn+1}$, 
we can ensure that  
 MES with budget $b$ selects at most $k$ candidates and satisfies Droop-EJR+. We will refer to the variants of MES/EES that use this value of $b$ as {\em Droop MES/EES}.

\begin{theorem}
    \label{thm:droop-mes-droop-ejrplus}
     When run at a virtual budget of $\frac{(k+1)n}{kn+1}$, MES/EES select at most $k$ candidates and satisfy Droop-EJR+.
\end{theorem}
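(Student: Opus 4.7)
The plan is to handle the two claims of the theorem separately. For the claim that $|W|\le k$, I would use a direct budget argument: the total voter budget is $nb = \frac{(k+1)n^2}{kn+1}$, and each candidate costs $\frac{n}{k}$, so the total number of candidates that can be collectively afforded is at most $\frac{nb}{n/k} = \frac{k(k+1)n}{kn+1} = (k+1) - \frac{k+1}{kn+1} < k+1$. Hence at most $k$ candidates are ever selected.

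For Droop-EJR+, I would proceed by contradiction. Assume that the output $W$ fails Droop-EJR+, as witnessed by some $\ell\in[k]$, $S\subseteq N$ with $|S|>\frac{\ell n}{k+1}$, and $c\in\bigcap_{i\in S}A_i\setminus W$ with $|A_i\cap W|\le \ell-1$ for all $i\in S$. By Lemma~\ref{lem:size-of-cohesive-group}, we may assume $|S|\ge\frac{\ell n+1}{k+1}$. Since MES/EES terminated without purchasing $c$, the candidate $c$ must be unaffordable at termination, so $\sum_{i\in N_c}r_i<\frac{n}{k}$ (where $r_i$ denotes the final budget of voter $i$), and in particular $\sum_{i\in S}r_i<\frac{n}{k}$.

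The key technical step is to set $\rho=\frac{n}{k|S|}$ and maintain the following invariant across the execution: for every $i\in S$ and every time $t$, $b_i^{(t)}\ge b - |T_i^{(t)}|\cdot\rho$, where $T_i^{(t)}$ is the set of committee members at time $t$ that $i$ has paid for. A short calculation using $|S|\ge \frac{\ell n+1}{k+1}$ and $k\ge\ell$ shows that $b=\frac{(k+1)n}{kn+1}\ge\ell\rho$, so (since $|T_i^{(t)}|\le\ell-1$) the invariant implies $b_i^{(t)}\ge\rho$ for each $i\in S$ throughout the process. Given this, $\sum_{i\in S}\min(b_i^{(t)},\rho)=|S|\rho=\frac{n}{k}$ at every step, which bounds $c$'s affordability threshold by $q(c)^{(t)}\le\rho$. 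By the greedy selection rule of MES/EES, any candidate $c'$ selected at time $t$ satisfies $q(c')^{(t)}\le q(c)^{(t)}\le\rho$, so each voter in $S$ pays at most $\rho$ for $c'$, which is exactly what is needed to propagate the invariant. Consequently, every voter $i\in S$ terminates with $r_i\ge b-(\ell-1)\rho\ge\rho$, yielding $\sum_{i\in S}r_i\ge|S|\rho=\frac{n}{k}$, contradicting our earlier strict inequality.

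The main delicate point is verifying that the greedy selection bound $q(c')\le q(c)\le\rho$ applies uniformly to both MES and EES, since the two rules use slightly different definitions of the affordability threshold. In both cases, however, the structural fact needed is the same: if voters in $S$ alone can collectively cover the cost $\frac{n}{k}$ at per-voter rate $\rho$, then $q(c)\le\rho$. This lets the same inductive argument handle both rules simultaneously, matching the paper's claim that the techniques generalize.
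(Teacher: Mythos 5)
Your proposal is correct and rests on the same two pillars as the paper's proof: the budget bound $nb < (k+1)\cdot\frac{n}{k}$ for $|W|\le k$, and for EJR+ the key inequality $\frac{\ell n}{k|S|}\le\frac{(k+1)n}{kn+1}$ (via Lemma~\ref{lem:size-of-cohesive-group} and $\ell\le k$) combined with the greedy selection property to show every voter in $S$ retains at least $\frac{n}{k|S|}$. The only difference is organizational: you run a single forward invariant that uniformly covers all $\ell$, whereas the paper splits off $\ell=1$ and argues via the first purchase whose affordability threshold exceeds a derived bound $q$; your version is a slightly cleaner packaging of the same argument.
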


\begin{proof}
    First note that $\frac{n}{kn+1}<\frac{1}{k}$ and hence $n\cdot \frac{(k+1)n}{kn+1}< (k+1)\cdot\frac{n}{k}$. Therefore, collectively the voters can purchase at most $k$ candidates.
        
    To prove that MES/EES with this virtual budget satisfy Droop EJR+, we assume for contradiction 
    that on some election these rules output a committee $W$ that does not provide Droop-EJR+. That is,  there exists a group of voters $S$ with $|S| > \ell\cdot\frac{n}{k+1}$ and $|W \cap A_i| \leq \ell -1$ for all $i\in S$, and a candidate $c \in \bigcap_{i\in S} A_i\setminus W$. 
    
    Suppose first that $\ell=1$. Then each voter in $S$ approves no candidates in $W$ and therefore still has her original budget. By Lemma~\ref{lem:size-of-cohesive-group} we have $|S|\ge \frac{n+1}{k+1}$, so voters in $S$ collectively have at least 
    $$
    \frac{n+1}{k+1}\cdot\frac{(k+1)n}{kn+1}= \frac{n(n+1)}{nk+1}\ge \frac{n(n+1)}{nk+k}=\frac{n}{k}
    $$ 
    dollars and can afford to share the cost of $c$ equally, a contradiction. 

    Thus, from now on we will assume $\ell\ge 2$.
    We claim that the budget of some voter $i'\in S$ is less than $\frac{n}{k|S|}$. Indeed, if not, then the rule would purchase $c$, splitting its cost $\frac{n}{k}$ among the voters in $S$.
    Since $i'$ paid for at most $\ell-1$ candidates in $W$,  there exists a candidate $c'\in W$ such that $i'$ spent more than 
    \begin{equation}
    \label{eq:MES spent more than}
        q = \frac{1}{\ell - 1}\cdot\left(\frac{(k+1)n}{kn+1} - \frac{n}{k|S|}\right)
    \end{equation}
    dollars on $c'$. Consider the first time the algorithm bought a candidate $c^*$ whose affordability threshold was  greater than $q$. We claim that at this time, each voter $i\in S$ had at least $\frac{n}{k|S|}$ dollars left. Indeed, we have
    \begin{align*}
     \frac{(k+1)n}{kn+1} - (\ell -1)\cdot\frac{1}{\ell -1}\left(\frac{(k+1)n}{kn+1} - \frac{n}{k|S|}\right)
        = \frac{n}{k|S|}.
    \end{align*}

    Thus at this point in time, the members of $S$ could buy $c$ at affordability threshold of at most $\frac{n}{k|S|}$. To obtain a contradiction, we will show that $\frac{n}{k|S|}\le q$. Indeed, by Lemma~\ref{lem:size-of-cohesive-group} we have $|S|\ge \frac{\ell n+1}{k+1}$ and hence with $\ell\le k$, this implies
    $
    \frac{n\ell}{k|S|}\le \frac{n(k+1)\ell}{k(\ell n+1)} \leq \frac{n(k+1)\ell}{k\ell n + \ell} = \frac{n(k+1)}{kn+1}
    $.
    Therefore, 
    $$
    q(\ell-1) - \frac{n(\ell-1)}{k|S|} = 
    \frac{(k+1)n}{kn+1}-\frac{n}{k|S|} - \frac{n(\ell-1)}{k|S|} = 
    \frac{(k+1)n}{kn+1} - \frac{n\ell}{k|S|}\ge \frac{(k+1)n}{kn+1} - \frac{n(k+1)}{kn+1} = 0;
    $$
    dividing both sides by $\ell-1$, we conclude that 
$q\ge \frac{n}{k|S|}$.
    But this is a contradiction, since the algorithm buys the candidate with the lowest affordability threshold, and we know that the affordability threshold of $c^*$ is strictly larger than $q$, i.e., higher than that of $c$. We conclude that MES/EES satisfy Droop-EJR+.
    \qed
\end{proof}

\section{Full Proportional Justified Representation}\label{sec:fpjr}
Next, we consider Proportional Justified Representation~\cite{PJR} and the recently introduced axiom of Full Proportional Justified Representation~\cite{FPJR}.

\subsection{Monroe Rules}

The first positive result for PJR was established by S\'anchez-Fern\'andez et al.~\cite{PJR}, who showed that the Monroe rule and its greedy variant satisfy PJR if the target committee size $k$ divides the number of voters $n$. Subsequently, Kalayc{\i} et al.~\cite{FPJR} extended this result to FPJR.
We will now define the Monroe rule and the Greedy Monroe rule. As these rules 
are defined in terms of quotas, we give both the standard definition (corresponding to the Hare quota) and a modified definition (corresponding to the Droop quota). We then show that the Droop variants of both rules provide Droop-FPJR as long as $k+1$ divides $n$.

\begin{definition}[Monroe Rule~\cite{monroe}]\label{def:monroe}
    Fix a dummy candidate $d\not\in C$.
    A {\em Hare valid assignment} is a pair $(W, \pi)$, where $W$ is a size-$k$ subset of $C$ and $\pi: N\rightarrow W$ is a mapping that satisfies $\lfloor \frac{n}{k}\rfloor \leq |\pi^{-1}(c)| \leq \lceil \frac{n}{k} \rceil$ for all $c\in W$.
    A {\em Droop valid assignment} is a pair $(W, \pi)$, where $W$ is a size-$k$ subset of $C$ and $\pi: N\rightarrow W\cup\{d\}$ is a mapping that satisfies (1)
        $\lfloor \frac{n}{k+1}\rfloor \leq |\pi^{-1}(c)| \leq \lceil \frac{n}{k+1} \rceil$ for all $c\in W$ and (2) $|\pi^{-1} (d)| = \lfloor \frac{n}{k+1} \rfloor$.
    The {\em Monroe score} of a Hare/Droop valid assignment $(W, \pi)$ is computed as 
    $\sum_{i\in N} \bbbone{\{\pi(i) \in A_i\}}$, and the {\em Hare (resp., Droop) Monroe score} of a committee $W$ is the maximum Monroe score of a Hare (resp., Monroe) valid assignment $(W, \pi)$, computed over all possible choices of $\pi:N\to W$ (resp., $\pi: N\rightarrow W\cup\{d\}$).
    The {\em Hare (resp., Droop) Monroe rule} outputs the set of all size-$k$ committees
    that maximize the Hare (resp., Droop) Monroe score.
\end{definition}

\begin{definition}[Greedy Monroe Rule~\cite{PJR}]
    The Hare (resp., Droop) Greedy Monroe rule starts with $W=\varnothing$ and all voters marked as active. It proceeds in $k$ rounds. In round $t$, it does the following:
    \begin{enumerate}
        \item Finds a candidate $c\in C\setminus W$ that receives the maximum number of approvals from the active voters.
        \item 
        Assigns roughly $\frac{n}{k}$ (resp, $\frac{n}{k+1}$) active voters to $c$. 
        Specifically, Hare Greedy Monroe assigns $\lceil \frac{n}{k}\rceil$ voters
        to $c$ if $t\le n-k\lfloor\frac{n}{k}\rfloor$ and $\lfloor\frac{n}{k}\rfloor$ voters
        otherwise. Droop Greedy Monroe assigns $\lceil \frac{n}{k+1}\rceil$ voters to $c$
        if $t\le n-(k+1)\cdot\lfloor\frac{n}{k+1}\rfloor$ and $\lfloor\frac{n}{k+1}\rfloor$ voters
        otherwise.
        As many of the assigned voters as possible should be selected from $N_c$, and the rest are arbitrarily chosen;
        \item Adds $c$ to $W$, and marks all voters assigned to $c$ as inactive.
    \end{enumerate}
\end{definition}
Note that Greedy Monroe implicitly constructs a valid assignment of voters to candidates (in case of Droop Greedy Monroe we can think of voters that remain active at the end as being assigned to a dummy candidate $d\not\in C$), i.e., we can speak of an assignment $(W, \pi)$ associated with the winning committee $W$. 

We begin by proving a useful fact about the Droop Monroe rule. 

\begin{lemma}
    \label{lem:monroe-unsatisfied-voters}
     Consider an election $(N,C,\calA, k)$ such that $k+1$ divides $n$, 
     a valid Droop assignment $(W, \pi)$ with the maximum Monroe score, and a group of voters $S\subseteq N$ with size $|S| > \frac{n}{k+1}$ such that for all voters $i\in S$, $\pi(i) \notin A_i$. Then $\bigcap_{i\in S} A_i \subseteq W$.
\end{lemma}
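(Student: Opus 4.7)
The plan is to proceed by contradiction. Suppose, towards contradiction, there is a candidate $c \in \bigcap_{i \in S} A_i$ with $c \notin W$. I will then exhibit a valid Droop assignment $(W', \pi')$ whose Monroe score strictly exceeds that of $(W, \pi)$, contradicting maximality of $(W,\pi)$.

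To build the witness, I would first locate a voter $i^* \in S$ whose $\pi$-image lies in $W$. Such a voter exists because, since $k+1$ divides $n$, the dummy $d$ absorbs exactly $q := n/(k+1)$ voters, and $|S| > q$. Let $w := \pi(i^*)$; note $w \notin A_{i^*}$ by the hypothesis on $S$. The new committee is $W' := (W \setminus \{w\}) \cup \{c\}$. To define $\pi'$, I would choose a subset $S' \subseteq S$ of size exactly $q$ with $i^* \in S'$ (possible since $|S| \geq q+1$), assign every voter in $S'$ to $c$, keep every voter outside $S' \cup \pi^{-1}(w)$ at their original candidate, and distribute the voters in $\pi^{-1}(w) \setminus S'$ arbitrarily into the vacated slots in $(W \setminus \{w\}) \cup \{d\}$.

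A short counting check confirms $(W', \pi')$ is a valid Droop assignment: for each $c' \in (W \setminus \{w\}) \cup \{d\}$, exactly $|\pi^{-1}(c') \cap S'|$ of its slots are freed by sending those voters to $c$, for a total of $q - |\pi^{-1}(w) \cap S'|$ freed slots, which matches $|\pi^{-1}(w) \setminus S'|$. For the score change I only need a pessimistic estimate: every voter in $S'$ contributes $+1$ (moving from unsatisfied under $\pi$ to satisfied by $c$ under $\pi'$), every voter in $\pi^{-1}(w) \setminus S'$ contributes at least $-1$, and all remaining voters' contributions are unchanged. Hence the total change is at least
\[
|S'| - |\pi^{-1}(w) \setminus S'| \;=\; q - \bigl(q - |\pi^{-1}(w) \cap S'|\bigr) \;=\; |\pi^{-1}(w) \cap S'|.
\]

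The main obstacle I anticipate is securing a \emph{strict} gain rather than merely a weak one, which is exactly the reason for insisting that $i^* \in S'$: this forces $\pi^{-1}(w) \cap S' \neq \varnothing$, so the estimate above is at least $1$, delivering the required contradiction. Everything else amounts to bookkeeping: the slot-count identity for validity and the per-voter case split for the score change. The hypothesis that $k+1 \mid n$ is used essentially to guarantee $|\pi^{-1}(d)| = q$ exactly, so that $|S| > q$ really does force some voter in $S$ onto a non-dummy candidate.
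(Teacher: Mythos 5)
Your proposal is correct and follows essentially the same route as the paper's proof: assume a missing commonly-approved candidate, locate a voter $i^*\in S$ assigned to a real candidate $w$ (using that the dummy absorbs exactly $n/(k+1)$ voters), swap $w$ for $c$, reroute $n/(k+1)$ voters of $S$ to $c$ while filling the vacated slots with the displaced voters, and observe a net score gain of at least one. The only difference is cosmetic bookkeeping: the paper sends $\pi^{-1}(w)\cap S$ plus a matched set $S'$ to the new candidate and bounds the gain as $q-(q-1)\ge 1$, whereas you send an arbitrary $q$-subset of $S$ containing $i^*$ and bound the gain by $|\pi^{-1}(w)\cap S'|\ge 1$; both are valid.
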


\begin{proof}
    Assume for contradiction that there is a group of voters $S$ with $|S| > \frac{n}{k+1}$, $\pi(i) \notin A_i$ for all $i\in S$, and a candidate
    $c' \in (\bigcap_{i\in S} A_i)\setminus W$. Since $|S|>\frac{n}{k+1}$ and the Droop Monroe rule assigns exactly $\frac{n}{k+1}$ voters to each candidate  in $W\cup\{d\}$, we have $\pi(i^*)\neq d$ for some $i^*\in S$; let $c=\pi(i^*)$.
    
    We will now argue that replacing $c$ with $c'$ in $W$, i.e., setting $W'=W\cup\{c'\}\setminus\{c\}$,  increases the Droop Monroe score. To this end, we construct the mapping $\pi': N\to W'$ as follows.
    Let $N'=\{i\in N\setminus S : \pi(i)=c\}$; since there are exactly $\frac{n}{k+1}$ voters with $\pi(i)=c$, and one of these voters is $i^*$, who is not in $N'$, we have $|N'|\le \frac{n}{k+1} -1$.
    As $|S|>\frac{n}{k+1}$, there is a proper subset $S'\subset S$ with $|S'|=|N'|$ such that $\pi(i)\neq c$ for all $i\in S'$.
    Let $\sigma:N'\to S'$ be a bijection between $N'$ and $S'$, and set 
    $\pi'(i):=\pi(\sigma(i))$, 
    $\pi'(\sigma(i)):=c'$
    for each $i\in N'$.
    Further, for each $i\in S$ with $\pi(i)=c$ set $\pi'(i):=c'$. For all voters in $N\setminus N'$ and all voters in $S\setminus S'$ with $\pi(i)\neq c$ set $\pi'(i):=\pi(i)$.
    Note that $(W', \pi')$ is a valid assignment with $\frac{n}{k+1}$ voters in $S$ assigned to $c'$.

    We claim that the Monroe score of $(W', \pi')$ is strictly higher than that of $(W, \pi)$. Indeed, compared to $(W, \pi)$, the assignment $(W', \pi')$ benefits the $\frac{n}{k+1}$ voters in $S$ that are now assigned to $c'$ and harms at most $|N'|\le \frac{n}{k+1}-1$ voters in $N'$; all other voters are assigned to the same candidates under $\pi$ and $\pi'$. This is a contradiction with our choice of $(W, \pi)$.
    %
    \qed
\end{proof}

\begin{theorem}
    \label{thm:droop-monroe-droop-fpjr}
    The Droop Monroe rule and Droop Greedy Monroe rule satisfy Droop-FPJR if $k+1$ divides $n$.
\end{theorem}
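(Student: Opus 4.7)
The plan is to argue the contrapositive for both rules at once, then split only at the very last step to invoke the appropriate structural property of each rule. Both rules output a Droop valid assignment $(W,\pi)$ in which, since $k+1\mid n$, every candidate of $W$ and the dummy $d$ receive exactly $\frac{n}{k+1}$ voters. Fix a Droop weakly $(\ell,T)$-cohesive group $S$. Without loss of generality I will assume $T\subseteq\bigcup_{i\in S}A_i$, because discarding from $T$ any candidate approved by nobody in $S$ preserves $|A_i\cap T|\ge \ell$ and only relaxes $|S|>|T|\cdot\frac{n}{k+1}$. I will then suppose toward contradiction that $W_S:=\bigl(\bigcup_{i\in S}A_i\bigr)\cap W$ has size at most $\ell-1$, and drive out a single candidate outside $W$ that is approved by more than $\frac{n}{k+1}$ voters of $S$ whose assigned candidate they do not approve; this is the target each rule-specific argument shoots down.

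Writing $t:=|T\cap W|\le|W_S|\le \ell-1$, every $i\in S$ with $\pi(i)\in A_i$ must have $\pi(i)\in W_S$, so at most $(\ell-1)\cdot\frac{n}{k+1}$ voters of $S$ are satisfied, and hence $S':=\{i\in S:\pi(i)\notin A_i\}$ has size strictly greater than $(|T|-\ell+1)\cdot\frac{n}{k+1}$. Each $i\in S'$ approves at least $\ell-t$ candidates of $T\setminus W$, so double counting gives $\sum_{c'\in T\setminus W}|\{i\in S':c'\in A_i\}|\ge(\ell-t)|S'|$; by pigeonhole some $c'\in T\setminus W$ satisfies $|S'_{c'}|>\frac{(\ell-t)(|T|-\ell+1)}{|T|-t}\cdot\frac{n}{k+1}$, where $S'_{c'}:=\{i\in S':c'\in A_i\}$. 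Setting $u=\ell-t\ge 1$ and $v=|T|-\ell+1\ge 1$ so that $|T|-t=u+v-1$, the identity $uv-(u+v-1)=(u-1)(v-1)\ge 0$ makes this bound at least $\frac{n}{k+1}$, so $|S'_{c'}|>\frac{n}{k+1}$.

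For Droop Monroe, Lemma~\ref{lem:monroe-unsatisfied-voters} applied to $S'_{c'}$ (every member of which has $\pi(i)\notin A_i$) gives $\bigcap_{i\in S'_{c'}}A_i\subseteq W$, but $c'$ lies in that intersection, contradicting $c'\in T\setminus W$. For Droop Greedy Monroe, only $\frac{n}{k+1}$ voters end up assigned to $d$, so some $i^*\in S'_{c'}$ is deactivated in some round $t^*$ by being assigned to a candidate $c_{t^*}\in W$ with $c_{t^*}\notin A_{i^*}$. Choosing $t^*$ minimal, every voter of $S'_{c'}$ is still active at the start of round $t^*$, so the number of active approvers of $c'$ exceeds $\frac{n}{k+1}$, and by the greedy choice the same holds for $c_{t^*}$. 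But then Greedy Monroe assigns to $c_{t^*}$ only voters from $N_{c_{t^*}}$, contradicting the existence of such an $i^*$.

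The main obstacle I expect is the combinatorial bookkeeping in the middle paragraph: one must juggle the four quantities $|T|$, $\ell$, $t=|T\cap W|$, and $|W_S|$, and the crucial observation is that even when almost all of $W_S$ already lies inside $T$ (so $t$ is close to $\ell-1$) or when $T$ is chosen minimally (so $|T|=\ell$), the residual unsatisfied mass still concentrates on a single external candidate above the Droop threshold, exactly because $(u-1)(v-1)\ge 0$ whenever $u,v\ge 1$. Once $|S'_{c'}|>\frac{n}{k+1}$ is in hand, the Monroe case closes immediately via the preceding lemma, while the Greedy case reduces to a clean first-deactivation argument.
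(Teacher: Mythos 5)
Your proof is correct and follows essentially the same route as the paper's: the same reduction to a single candidate $c'\in T\setminus W$ approved by more than $\frac{n}{k+1}$ voters of $S$ whose assigned candidate they do not approve, followed by Lemma~\ref{lem:monroe-unsatisfied-voters} for Droop Monroe and a first-deactivation/greedy-choice contradiction for Droop Greedy Monroe. The only difference is cosmetic bookkeeping in the pigeonhole step (the paper averages over a set $T'\subseteq T\setminus W_S$ of size $|T|-|W_S|$ so that each voter of $S'$ contributes at least one approval, whereas you average over $T\setminus W$ and invoke $(u-1)(v-1)\ge 0$); both yield the same threshold.
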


\begin{proof}
    The proofs for the two rules are very similar, so we combine them, noting explicitly when they diverge. Assume for the sake of contradiction that there is an election $(C, N, \calA, k)$ on which the Droop Greedy Monroe rule (resp., Droop Monroe rule) outputs a committee $W$ associated with assignment $\pi$ that does not provide Droop-FPJR. Then there is a Droop weakly $(\ell,T)$-cohesive group $S$ such that $|W\cap \bigcup_{i\in S} A_i| < \ell$. 
    Let $A_{\cup S} = \bigcup_{i\in S} A_i$ be the set of candidates approved by at least one voter in $S$, and let $W_S=W\cap A_{\cup S}$; then $|W_S|<\ell\le |T|$.
    We can assume without loss of generality that $T \subseteq A_{\cup S}$. Indeed, $S$ is Droop weakly $(\ell,T)$-cohesive if and only if it is Droop weakly $(\ell,T\cap A_{\cup S})$-cohesive, so we can replace $T$ with $T\cap A_{\cup S}$. 
    Let $S' = \{i\in S \mid \pi(i) \notin A_i\}$ be the set of voters in $S$ that are assigned to a candidate they do not approve. Note that $|T\setminus W_S|\ge |T|-|W_S|>0$, 
    and let $T' \subseteq T\setminus W_S$ be an arbitrary subset of $T\setminus W_S$ of size $|T| - |W_S|$. 
    
    Since $|S|>|T|\cdot\frac{n}{k+1}$ and $|\pi^{-1}(c)|=\frac{n}{k+1}$ for all $c\in W_S$, we can lower-bound the number of voters in $S'$ as
    \[
    |S'| \geq |S| - |W_S|\cdot\frac{n}{k+1} > \left(|T| - |W_S|\right)\cdot\frac{n}{k+1} = |T'|\cdot\frac{n}{k+1}.
    \] 
    Furthermore, each voter in $S'$ approves at least $\ell$ candidates in $T$, so she approves
    at least
    $\ell - |W_S|$ candidates in $T'$. Therefore, the total number of approvals given by voters in $S'$ to candidates in $T'$ is at least $|S'| \cdot (\ell - |W_S|) \geq |S'|$. Thus, by the pigeonhole principle there exists a candidate $c'\in T'$ who is approved by at least $\frac{|S'|}{|T'|} > \frac{n}{k+1}$ voters in $S'$. Now we deal with the two rules separately.

    For the Droop Monroe rule, by applying Lemma~\ref{lem:monroe-unsatisfied-voters} to $N_{c'}\cap S'$ we conclude that $c'\in W$, a contradiction with $c' \in T' \subseteq T\setminus W$.

    For the Droop Greedy Monroe rule, we note that, since the algorithm assigns exactly $\frac{n}{k+1}$ voters to each of the $k$ candidates in $W$, there are exactly $\frac{n}{k+1}$ unassigned voters. As $|S'|> |T'|\cdot\frac{n}{k+1} \geq \frac{n}{k+1}$, it cannot be the case that all voters in $S'$ remain unassigned. Consider the first point in time when the algorithm assigns a voter from $S'$ (say, $i$) to a candidate (say, $c$). By definition of $S'$ we have $c\not\in A_i$. 
    This means that the algorithm was unable to find $\frac{n}{k+1}$ active voters in $N_c$.
    However, $|N_{c'}\cap S'|\ge \frac{n}{k+1}$ and all voters in $N_{c'}\cap S'$ remain active at this point, a contradiction with the algorithm choosing $c$, since the algorithm always selects a candidate with the largest number of active approvers.

    In both cases we reach a contradiction, so we conclude that the Droop Monroe rule and Droop Greedy Monroe rule satisfy Droop-FPJR if $k+1$ divides $n$.
    \qed
\end{proof}

We note that the condition that $k+1$ divides $n$ is not an artifact of the proof: both Droop Monroe and Droop Greedy Monroe can violate Droop FPJR if it is not satisfied, even if $k$ divides $n$ (Proposition~\ref{prop:droop-monroe-not-droop-pjr} in the appendix). Moreover, just as for other rules defined in terms of a quota, the Hare versions of Monroe and Greedy Monroe do not satisfy Droop-FPJR, even if both $k$ and $k+1$ divide $n$ (Proposition~\ref{prop:monroe-not-droop-pjr} in the appendix). Interestingly, Hare Monroe satisfies the weaker Droop-JR axiom if $k$ divides $n$ (Proposition~\ref{prop:monroe-droop-jr} in the appendix), but this result does not extend to Hare Greedy Monroe (\Cref{prop:greedy-monroe-not-droop-jr} in the appendix).

\subsection{Priceable Rules}

Next, we consider rules 
that satisfy the priceability axiom;
this includes, in particular, SeqPhragm\'{e}n~\cite{brill_phragmen_2024}, the Maximin Support Method~\cite{MMS}, MES~\cite{MES}, and EES~\cite{kraiczy_streamlining_2025}. Intuitively, this axiom is satisfied by committees that can be purchased by the voters if all voters are given equal budgets and can spend them on candidates they approve.

\begin{definition}[Priceability~\cite{MES}]\label{def:priceability}
    A {\em price system} is a pair $(p, (p_i)_{i\in N})$ where $p\in{\mathbb Q}^+$ is a \emph{price} and $p_i:C\to{\mathbb Q}^+\cup\{0\}$ is the {\em payment function} of voter $i\in N$. 
    For each $i\in N$ the payment function $p_i$ satisfies (1) $p_i(c) =0$ for $c\notin A_i$, and (2) $\sum_{c\in C}p_i(c) \leq 1$. 
    A
    price system $(p, (p_i)_{i\in N})$ {\em supports} a committee $W$ if 
    \begin{itemize}
        \item $\sum_{i\in N} p_i(c) = p$ for each $c\in W$;
        \item $p_i(c) = 0$ for each $i\in N$, $c\notin W$; 
        \item For each candidate $c\notin W$, the remaining budget of the supporters of $c$ is at most $p$:\\ $\sum_{i\in N_c} (1 - \sum_{c\in W}p_i(c)) \leq p$ for each $c\notin W$.
    \end{itemize}

    A committee $W$ is {\em priceable} if it is supported by a price system. A voting rule $\mathcal R$ is {\em priceable} if it outputs priceable committees.
\end{definition}


The notion of priceability was introduced by Peters and Skowron~\cite{MES}, 
who proved that any priceable size-$k$ committee provides PJR.
Subsequently, Kalayc{\i} et al.~\cite{FPJR} 
strengthened this result, showing that any priceable 
size-$k$ committee provides FPJR. We will now extend this result to Droop-FPJR.

We will use the following technical lemma.

\begin{lemma}\label{lem:xy}
 For any pair of positive integers $x, y\in\mathbb Z$ it holds that $xy+1\ge x+y$.   
\end{lemma}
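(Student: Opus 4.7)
The plan is to prove the inequality by a simple algebraic rearrangement that reduces it to an obviously nonnegative quantity. Specifically, I would rewrite $xy + 1 - x - y$ as $(x-1)(y-1)$, and then observe that since $x$ and $y$ are positive integers, both factors are nonnegative.

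In more detail, I would first expand $(x-1)(y-1) = xy - x - y + 1$, which immediately yields the identity $xy + 1 - (x+y) = (x-1)(y-1)$. Next, I would use the hypothesis that $x, y \in \mathbb{Z}$ with $x, y \geq 1$ to conclude $x - 1 \geq 0$ and $y - 1 \geq 0$, so their product is nonnegative. Rearranging gives $xy + 1 \geq x + y$, as desired.

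There is no real obstacle here: the only subtlety is the choice of the factoring trick, which makes the proof a single line. I would not expect this lemma to require more than two or three sentences in the final write-up.
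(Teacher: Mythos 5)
Your proof is correct, but it takes a different (and arguably cleaner) route than the paper. The paper proves the lemma by case analysis: it disposes of the case $x=1$ or $y=1$ immediately, and for $x,y\ge 2$ it writes $xy+1 > xy = \frac{xy}{2}+\frac{xy}{2} \ge \frac{2y}{2}+\frac{2x}{2} = x+y$. You instead observe the identity $xy+1-(x+y)=(x-1)(y-1)$ and note that both factors are nonnegative for positive integers $x,y$. Your factorization argument is a one-liner with no case split, and it also makes transparent that the inequality holds with equality exactly when $x=1$ or $y=1$ and, more generally, for all reals $x,y\ge 1$; the paper's argument in the second case actually yields the strict inequality $xy+1>x+y$ for $x,y\ge 2$, a marginally stronger conclusion there, though nothing in the paper relies on strictness. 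Either proof fully suffices for how the lemma is used (in Theorem~\ref{thm:priceable-droop-fpjr}).
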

\begin{proof}
    If $x=1$ or $y=1$, our claim is immediate. Now, suppose that $x, y\ge 2$. Then $xy+1>xy= \frac{xy}{2}+\frac{xy}{2}\ge \frac{2y}{2}+\frac{2x}{2}=x+y$.
    \qed
\end{proof}

\begin{theorem}
    \label{thm:priceable-droop-fpjr}
    Every priceable committee $W$ for an election with $k=|W|$ provides Droop-FPJR.
\end{theorem}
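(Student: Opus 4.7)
The plan is to adapt the Hare-FPJR argument for priceable committees (Kalayc{\i} et al.) to the Droop setting, leveraging a tighter budget-balance bound. Assume for contradiction that $W$ is priceable with $|W|=k$ and fails Droop-FPJR, witnessed by a weakly $(\ell,T)$-cohesive group $S$ with $|S|>|T|\cdot n/(k+1)$ and $|W\cap\bigcup_{i\in S}A_i|<\ell$. Let $(p,(p_i)_{i\in N})$ be a supporting price system and set $W_S = W\cap\bigcup_{i\in S}A_i$, so $w:=|W_S|\le \ell-1$. Without loss of generality assume $T\subseteq \bigcup_{i\in S}A_i$ (otherwise shrink $T$, preserving weak cohesion), and write $t=|T|$, $T'=T\setminus W$, $w'=|T\cap W|\le w$, $a=\ell-w'\ge 1$, $b=t-w'\ge 1$, and $r_i = 1 - \sum_{c\in W}p_i(c)$. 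Note that $|A_i\cap T'|\ge \ell-w'=a$ for every $i\in S$.

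Next I would extract three inequalities from priceability. (a) Since voters in $S$ only pay for candidates in $W_S$ and each $c\in W_S$ collects at most $p$ dollars in total, the total spending $\sigma_S$ of $S$ satisfies $\sigma_S\le wp$. (b) Total spending on $W$ equals $kp$, while voters outside $S$ contribute at most $n-|S|$ dollars; hence $\sigma_S\ge kp-(n-|S|)$, which combined with (a) gives the upper bound $p\le (n-|S|)/(k-w)$. (c) Applying the priceability constraint $\sum_{i\in N_c}r_i\le p$ to each $c\in T'$, summing over $T'$, and using $|A_i\cap T'|\ge a$ for $i\in S$ yields $(\ell-w')(|S|-\sigma_S)\le (t-w')p$, which together with (a) gives the lower bound $p\ge a|S|/(wa+b)$. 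Putting the upper and lower bounds on $p$ together, I obtain $|S|(ka+b)\le n(wa+b)$.

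The hypothesis $|S|>tn/(k+1)$ will then contradict this bound once I establish the algebraic inequality $t(ka+b)\ge (k+1)(wa+b)$, equivalently $ak(t-w)\ge aw+b(k+1-t)$, for all feasible parameters. This verification is the main obstacle. I plan to handle it by reparametrising with $\alpha=\ell-w'$, $\beta=\ell-w$, $s=t-\ell$, noting $\alpha\ge \beta\ge 1$, $s\ge 0$, and $t\le k$ (the last because $|S|\le n$ forces $t\le k$), and then expanding to obtain the identity
\[
ak(t-w) - aw - b(k+1-t) \;=\; (k+1)\bigl[\alpha(s+\beta-1)-s\bigr] + s(s+\ell).
\]
The bracket on the right equals $(\alpha-1)s + \alpha(\beta-1)\ge 0$, and $s(s+\ell)\ge 0$, so the whole expression is non-negative. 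In the borderline case $s=0$, $\beta=1$ where both summands vanish, the chain $|S|(ka+b)\le n(wa+b)$ tightens to $|S|\le tn/(k+1)$, still directly contradicting $|S|>tn/(k+1)$. Hence no such witness can exist, proving the theorem.
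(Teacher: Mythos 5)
Your proof is correct, and while it draws on exactly the same three consequences of priceability that the paper's proof uses, it assembles them differently. The shared ingredients are: (a) voters in $S$ pay only for candidates in $W_S$, so their total spending $\sigma_S$ is at most $p\cdot|W_S|$; (b) the voters outside $S$ must cover the remaining $(k-|W_S|)p$ of the committee's total cost $kp$, giving $(k-|W_S|)p\le n-|S|$ (this is where $|W|=k$ enters); and (c) the leftover-budget condition summed over $T\setminus W$, giving $(\ell-w')(|S|-\sigma_S)\le(|T|-w')p$. The paper combines (a) and (c) into the single inequality $|T|-|O|\ge(\ell-|O|)\bigl(\frac{|S|}{p}-|W_S|\bigr)$ and then splits into two cases by the size of $p$: for $p\le\frac{n}{k+1}$ it finishes with the elementary bound $xy+1\ge x+y$ (Lemma~\ref{lem:xy}), and for $p>\frac{n}{k+1}$ it deploys (b) to conclude $|W|<k$ directly. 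You instead keep (b) as an upper bound on $p$ throughout, sandwich $p$ between $\frac{a|S|}{aw+b}$ and $\frac{n-|S|}{k-w}$, and reduce the whole theorem to the single polynomial inequality $t(ka+b)\ge(k+1)(wa+b)$. I verified your reparametrised identity by expansion; it is correct, and the sign conditions $\alpha\ge\beta\ge1$, $s\ge0$ (and $k-w\ge1$, $aw+b\ge1$, needed to cross-multiply) all hold, so the argument is sound, including the borderline case where the strict hypothesis $|S|>\frac{tn}{k+1}$ still yields a strict contradiction. The trade-off is clear: your route eliminates the case analysis and yields one uniform chain of inequalities, at the price of a heavier algebraic verification; the paper's case split keeps each individual step one line long and isolates the role of $|W|=k$ in a single case.
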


\begin{proof}
    Let $(p,(p_i)_{i\in N})$ be a price system for the committee $W$. For each $i\in N$ let $b_i = 1 - \sum_{c\in W}p_i(c)$ be $i$'s remaining budget. Assume for contradiction that $W$ does not provide Droop-FPJR. Then there is a Droop weakly $(\ell,T)$-cohesive group $S$ such that $|W \cap \bigcup_{i\in S}A_i| < \ell$. Just as in the proof of Theorem~\ref{thm:droop-monroe-droop-fpjr}, we can assume without loss of generality that $T \subseteq \bigcup_{i\in S}A_i$.  
    Let $W_S = W\cap \bigcup_{i\in S}A_i$ and $O = W\cap T$.
    
    Let $B$ be the sum, over all candidates  $c\in T\setminus O$,  of the remaining budgets of the supporters  of $c$. We start by lower-bounding $B$. To this end, we observe that 
    \begin{align}\label{eq:forB}
    \sum_{i\in S}\sum_{c\in W}p_i(c) = \sum_{c\in W}\sum_{i\in S}p_i(c) = \sum_{c\in W_S}\sum_{i\in S}p_i(c) \le  p\cdot |W_S|, 
    \end{align}
    where the second and third transition follow from the properties of the price system, namely,
    that $p_i(c)=0$ for all $i\in S$, $c\in W\setminus W_S$ and that $\sum_{i\in S}p_i(c)\le p$ for each $c\in W$. 
    Moreover, each member of $S$ approves at least $\ell$ candidates in $T$, and hence at least $\ell - |O| > 0$ candidates in $T\setminus O$. Hence, 
    \begin{align*}
        B = \sum_{c\in T\setminus O}\sum_{i\in N_c} b_i \geq (\ell - |O|)\sum_{i\in S} b_i
        = (\ell - |O|)\sum_{i\in S}(1- \sum_{c\in W}p_i(c))
        \geq (\ell - |O|)(|S| - p\cdot |W_S|), 
    \end{align*}
    where the last transition follows from Eq.~\eqref{eq:forB}.
    To upper-bound $B$, we use the fact that $\sum_{i\in N_c}b_i \le p$ for each $c\notin W$ and hence
    \begin{align*}
        B = \sum_{c\in T\setminus O}\sum_{i\in N_c} b_i 
        \leq p\cdot |T\setminus O|
        = p\cdot (|T| - |O|).
    \end{align*}
    Putting these two bounds together and dividing by $p$, we obtain
    \begin{equation}
        \label{eq:priceability_droopfpjr}
        |T| - |O| \geq (\ell - |O|)\left(\frac{|S|}{p} - |W_S|\right).
    \end{equation}
    We now consider two cases: (1) $p \leq \frac{n}{k+1}$ and (2) $p>\frac{n}{k+1}$. 
    
    Suppose first $p \leq \frac{n}{k+1}$.
    Then, as $|S| >  |T|\cdot\frac{n}{k+1}$, we obtain
     $\frac{|S|}{p} > |T|$ and hence
     $$
     (\ell - |O|)\cdot\left(\frac{|S|}{p} - |W_S|\right) > (\ell-|O|)\cdot (|T|-|W_S|)\ge \ell-|O|+|T|-|W_S|-1\ge |T|-|O|, 
     $$
     where the second transition uses \Cref{lem:xy} with $x=\ell-|O|>0$, $y=|T|-|W_S|>0$, and the last transition uses the observation that $|W_S|<\ell$.
     This is a contradiction with Eq.~\eqref{eq:priceability_droopfpjr}.

    On the other hand, suppose $p>\frac{n}{k+1}$.
    We will then show that $|W\setminus W_S|<k+1-\ell$.
    Indeed, we have
    $$
    |N\setminus S| = n - |S| < n - |T|\cdot\frac{n}{k+1} = \frac{(k+1-|T|)\cdot n}{k+1}.
    $$
    Moreover, under the price system $(p, (p_i)_{i\in N})$ only the voters in $N\setminus S$ can pay for candidates in $W\setminus W_S$, and the price of each candidate in $W\setminus W_S$ is $p$, so 
    $$
    |W\setminus W_S|\le \frac{|N\setminus S|}{p}< \frac{(k+1-|T|)\cdot n}{k+1}\cdot \frac{1}{p} < \frac{(k+1-|T|)\cdot n}{k+1} \cdot \frac{k+1}{n} = k+1-|T| \leq k+1-\ell.
    $$
    As $|W_S|\le \ell-1$, we obtain 
    $|W|=|W\setminus W_S|+|W_S|<k$, a contradiction.
    
     In both cases we reached a contradiction, so we conclude that every priceable committee of size $k$ provides Droop-FPJR.
    \qed
\end{proof}

SeqPhragm{\'e}n~\cite{brill_phragmen_2024} and the Maximin Support Method (MMS)~\cite{MMS} are iterative voting rules that always output size-$k$ committees, and Peters et al.~\cite{MES} show that their outputs are priceable (we omit the definition of SeqPhragm{\'e}n and MMS, as they are not relevant to the discussion). 
Thus, we obtain the following corollary.

\begin{corollary}
    \label{cor:seq-phragmen-droop-fpjr}
    SeqPhragm{\'e}n and MMS satisfy Droop-FPJR.
\end{corollary}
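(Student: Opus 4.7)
The plan is to obtain this as an immediate consequence of Theorem~\ref{thm:priceable-droop-fpjr} together with the existing structural results about SeqPhragm{\'e}n and MMS cited just before the corollary. Since Theorem~\ref{thm:priceable-droop-fpjr} asserts that every priceable committee $W$ with $|W|=k$ provides Droop-FPJR, it suffices to verify two facts for each of the two rules: (i)~the output is a committee of size exactly $k$, and (ii)~the output is priceable in the sense of Definition~\ref{def:priceability}.

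For (i), I would simply invoke the definitions of SeqPhragm{\'e}n and MMS: both are iterative rules that, by construction, add one candidate per round for $k$ rounds and hence always return a committee of size exactly $k$ (so, unlike MES with unit budgets, there is no possibility of under-filling the committee). For (ii), I would cite the result of Peters and Skowron~\cite{MES} that constructs explicit price systems supporting the outputs of SeqPhragm{\'e}n and MMS. In the case of SeqPhragm{\'e}n, the price $p$ is taken to be the final load, and the payment functions are derived from the per-candidate load increments; in the case of MMS, the supporting price system can be read off directly from the max-min support structure that the rule optimises.

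With both (i) and (ii) in hand, Theorem~\ref{thm:priceable-droop-fpjr} applies verbatim and yields that every committee returned by SeqPhragm{\'e}n or MMS provides Droop-FPJR, which is exactly the statement of the corollary. Since Droop-FPJR implies Droop-PJR, this also strengthens the previously known fact that SeqPhragm{\'e}n satisfies Droop-PJR~\cite{brill_phragmen_2024}.

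Because the technical content has already been absorbed into Theorem~\ref{thm:priceable-droop-fpjr}, I do not anticipate any real obstacle; the only thing to be careful about is making sure we are citing the priceability results for SeqPhragm{\'e}n and MMS in their correct form (namely, that the output is priceable for every input, not just under some tie-breaking). Once that citation is in place, the corollary follows in a single line.
\qed
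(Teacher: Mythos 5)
Your proposal is correct and matches the paper's own argument exactly: the paper likewise derives the corollary from Theorem~\ref{thm:priceable-droop-fpjr} by noting that SeqPhragm{\'e}n and MMS always output size-$k$ committees and that Peters et al.~\cite{MES} show their outputs are priceable. No difference in approach.
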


We note that Brill et al.~\cite{brill_phragmen_2024} directly show that SeqPhragm{\'e}n satisfies Droop-PJR.

Kalayc{\i} et al.~\cite{FPJR} show that EJR and FPJR are incomparable.
By combining Corollary~\ref{cor:seq-phragmen-droop-fpjr} with the facts that SeqPhragm\'{e}n fails EJR~\cite{brill_phragmen_2024}, while PAV satisfies Droop-EJR~\cite{janson_2018}, but not FPJR~\cite{FPJR}, we obtain a Droop quota equivalent of this result.

\begin{corollary}
\label{cor:ejr-fpjr-incomparable}
    Droop-EJR and Droop-FPJR are incomparable.
\end{corollary}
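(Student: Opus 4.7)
The plan is to witness each direction of non-implication with a single election and a known voting rule, relying only on results stated earlier in the paper and on Hare-level counterexamples already in the literature. I would first record the one-line observation that Droop-X implies Hare-X for every axiom X in the family: any Hare $\ell$-cohesive or Hare weakly $(\ell,T)$-cohesive group already meets the Droop size bound, since $\ell n/k \ge \ell n/(k+1)$, so an outcome satisfying the Droop requirement automatically satisfies the Hare one. Contrapositively, an outcome failing Hare-X fails Droop-X, which lets me lift Hare-level counterexamples into the Droop setting with no additional construction.

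For Droop-FPJR $\not\Rightarrow$ Droop-EJR, I would take the profile of Brill et al.~\cite{brill_phragmen_2024} on which SeqPhragm{\'e}n fails Hare-EJR: Corollary~\ref{cor:seq-phragmen-droop-fpjr} guarantees that SeqPhragm{\'e}n's output there provides Droop-FPJR, while the contrapositive observation above forces it to fail Droop-EJR. For Droop-EJR $\not\Rightarrow$ Droop-FPJR, I would analogously take the profile of Kalayc{\i} et al.~\cite{FPJR} on which PAV fails Hare-FPJR: Janson's~\cite{janson_2018} theorem that PAV satisfies Droop-EJR supplies Droop-EJR of the PAV output on that profile, while failure of Hare-FPJR forces failure of Droop-FPJR.

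There is no substantial obstacle: the argument is essentially bookkeeping, combining three existing ingredients (Corollary~\ref{cor:seq-phragmen-droop-fpjr}, Janson's Droop-EJR result for PAV, and the two Hare-level separating examples) via the contrapositive observation. The only small care required is verifying that the separating examples cited really do exhibit failure of Hare-EJR and Hare-FPJR respectively, which is already established in the cited papers, so no recomputation is needed.
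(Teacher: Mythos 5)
Your proposal is correct and matches the paper's argument exactly: the paper likewise combines Corollary~\ref{cor:seq-phragmen-droop-fpjr} with SeqPhragm{\'e}n's failure of (Hare-)EJR for one direction, and Janson's result that PAV satisfies Droop-EJR with PAV's failure of (Hare-)FPJR for the other, using the observation that each Droop axiom implies its Hare counterpart. No differences worth noting.
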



Peters et al.~\cite{MES} show that MES always outputs priceable committees.
Kalayc{\i} et al.~\cite{FPJR} use this result,  together with the claim that any size-$k$ priceable committee satisfies FPJR, to conclude that MES satisfies FPJR. However, this proof strategy is problematic, since MES may output committees with fewer than $k$ members. Fortunately, this issue can be circumvented by combining MES with SeqPhragm{\'e}n: Peters et al.~\cite{MES} observe that if MES returns a committee $W$ with $|W|<k$, we can run SeqPhragm{\'e}n with starting budgets equal to the remaining budgets of all voters at the end of MES to select the remaining $k-|W|$ candidates, and the resulting committee will be priceable. Thus, we obtain the following corollary.

\begin{corollary}
    \label{cor:mes-seq-phragmen-droop-fpjr}
    If MES/EES are completed with SeqPhragm{\'e}n, then their outcomes provide Droop-FPJR.
\end{corollary}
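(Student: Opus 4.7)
The plan is to show that the committee produced by MES or EES when completed with SeqPhragm{\'e}n satisfies both hypotheses of Theorem~\ref{thm:priceable-droop-fpjr}: it has size exactly $k$, and it is priceable. Once this is established, the conclusion (Droop-FPJR) is immediate from that theorem.

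First, I would recall that Peters et al.~\cite{MES} show that any partial committee $W_0$ produced by MES (or, by an analogous argument, EES) is priceable: the payments accumulated during its execution, together with the uniform candidate cost, furnish a valid price system supporting $W_0$. If $|W_0| = k$, we are done. Otherwise, we run SeqPhragm{\'e}n starting from the residual voter budgets $b_i = 1 - \sum_{c\in W_0} p_i(c)$ and pick $k - |W_0|$ additional candidates. The content of the Peters et al.~\cite{MES} observation that I invoke is that the combined payment scheme, obtained by concatenating the MES/EES payments with the Phragm{\'e}n payments (with the latter reinterpreted in the budget units used throughout), forms a valid price system for the full size-$k$ committee $W$. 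Hence $W$ is a priceable committee with $|W| = k$, and Theorem~\ref{thm:priceable-droop-fpjr} yields Droop-FPJR.

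There is no substantial obstacle here beyond checking that the combination really does yield priceability in the sense of Definition~\ref{def:priceability}: each candidate in $W$ is paid for at the common price, each voter's total payment remains within their starting budget, and every unpurchased candidate's supporters have remaining budget at most the price (otherwise either MES would still have been able to buy it, or, after MES halts, SeqPhragm{\'e}n would have selected it sooner). This bookkeeping is exactly the priceability-preservation property recorded by Peters et al.~\cite{MES}, and is insensitive to the uniform scaling of budgets and prices, so the same argument covers both the Hare MES/EES and the Droop variant of Theorem~\ref{thm:droop-mes-droop-ejrplus}. Thus the corollary follows from a single application of Theorem~\ref{thm:priceable-droop-fpjr} to the completed committee.
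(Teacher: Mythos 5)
Your proposal matches the paper's argument exactly: the paper likewise observes (citing Peters et al.~\cite{MES}) that completing MES/EES with SeqPhragm{\'e}n run on the residual budgets yields a priceable committee of size $k$, and then applies Theorem~\ref{thm:priceable-droop-fpjr}. The extra bookkeeping you sketch for why the concatenated payment scheme remains a valid price system is correct and consistent with what the paper delegates to the cited observation.
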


In fact, we can show that the completion-by-Phragm{\'e}n trick is not necessary: we will now give a direct proof that MES/EES satisfy FPJR, whereas Droop MES/EES satisfy Droop FPJR.

Our proof relies on two lemmas. 
The first lemma is technical; for readability, we relegate its proof to Appendix~\ref{app:omit}.

\begin{lemma}
\label{lem:maximizer-of-inverse-variables}
    Consider a set of positive integers 
    $x_1 \geq x_2 \geq \hdots \geq x_{t}$ 
    such that for some $s, \tau\in \mathbb N$
    it holds that $x_i\in [s]$ for all $i\in [t]$ and 
    $\sum_{i\in [t]} x_i \geq s\tau$. Then
    \[
    \sum_{i\in [\tau]} \frac{1}{x_i} \leq \frac{t}{s}.
    \]
\end{lemma}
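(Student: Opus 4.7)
The plan is to use the convexity of $f(x) = 1/x$, anchoring the chord at the value $x_\tau$ (the smallest of the first $\tau$ terms). The hypothesis $\sum_{i=1}^{t} x_i \geq s\tau$ and the monotonicity of the sequence then close the arithmetic exactly.

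The first step is to observe that for every $i \in [\tau]$, $x_i \in [x_\tau, s]$. Since $1/x$ is convex on the positive reals, it lies below the chord joining $(x_\tau, 1/x_\tau)$ and $(s, 1/s)$, so
\[
\frac{1}{x_i} \;\leq\; \frac{s + x_\tau - x_i}{s \cdot x_\tau} \qquad \text{for each } i \in [\tau].
\]
Summing over $i \in [\tau]$ gives
\[
\sum_{i \in [\tau]} \frac{1}{x_i} \;\leq\; \frac{\tau(s + x_\tau) - \sum_{i \in [\tau]} x_i}{s \cdot x_\tau}.
\]

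The second step is to lower-bound $\sum_{i \in [\tau]} x_i$. Since the sequence is non-increasing, $x_i \leq x_\tau$ for all $i > \tau$, so $\sum_{i > \tau} x_i \leq (t - \tau) x_\tau$; combined with the hypothesis this yields $\sum_{i \in [\tau]} x_i \geq s\tau - (t-\tau) x_\tau$. Substituting, the numerator in the previous display simplifies to $\tau x_\tau + (t-\tau)x_\tau = t x_\tau$, and the $x_\tau$ factor cancels to give $t/s$, as desired. The degenerate cases $\tau = 0$ (empty sum) and $x_\tau = s$ (all of $x_1,\dots,x_\tau$ equal $s$, so the sum is $\tau/s \leq t/s$) are handled trivially.

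I expect the main obstacle to be spotting the correct chord. The naive chord from $(1,1)$ to $(s,1/s)$ produces a bound that is too loose once $x_\tau$ is small, and alternative bounds such as $1/x_i \leq 1/x_\tau$, Cauchy--Schwarz, or AM--HM go in the wrong direction or lose too much (in the example $t = 3$, $s = 2$, $\tau = 2$ with $x_1 = 2$, $x_2 = x_3 = 1$, one finds $x_\tau = 1 < s\tau/t$, so any bound that relies on $x_\tau \geq s\tau/t$ fails). Using $x_\tau$ itself as the left endpoint of the chord is what aligns the $x_\tau$ introduced by the chord with the $x_\tau$ arising from the estimate on $\sum_{i > \tau} x_i$, producing the exact cancellation. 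Once this observation is made, the remainder is elementary algebra, and no appeal to the integrality of the $x_i$'s is needed.
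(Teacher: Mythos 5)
Your proof is correct, and it takes a genuinely different route from the paper's. The paper proves the lemma by a discrete smoothing argument: it first shows (via the auxiliary inequality $\frac{1}{a+1}+\frac{1}{b-1}>\frac{1}{a}+\frac{1}{b}$ for $a\ge b\ge 1$) that transferring one unit of mass from a smaller variable to a larger one can only increase $\sum_{i\in[\tau]}\frac{1}{x_i}$, then repeatedly applies such transfers until the first $\tau$ variables reach a canonical configuration ($x_1=\dots=x_j=s$, then one intermediate value $x_{j+1}$, then $x_{j+2}=\dots=x_\tau$ all equal), and finally verifies the bound for that configuration by direct algebra. Your argument replaces all of this with a single application of the chord bound for the convex function $1/x$ on the interval $[x_\tau,s]$, and the choice of $x_\tau$ as the left endpoint is exactly what makes the numerator telescope to $tx_\tau$ after inserting the estimate $\sum_{i>\tau}x_i\le(t-\tau)x_\tau$. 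I checked the details: the chord formula $\frac{1}{x_i}\le\frac{s+x_\tau-x_i}{sx_\tau}$ is the correct secant line, the hypothesis $\sum_{i\in[t]}x_i\ge s\tau$ together with $x_i\le s$ forces $\tau\le t$ (so the tail estimate is legitimate), and the degenerate cases are handled. Your proof is shorter, avoids the termination/canonical-form bookkeeping of the transformation, and — as you note — establishes the strictly more general statement for positive reals, since integrality is never used; the paper's approach, by contrast, is tied to integer variables through its unit-transfer step. Both are valid; yours is arguably the cleaner argument for this statement.
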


The second lemma can be thought of as a monotonicity property of MES/EES: it shows that if under these rules a subset of voters $S$ can afford to pay for a subset of candidates $T$, then in a bigger election with additional voters and approvals MES/EES will guarantee these voters a collective utility of $|T|$. We formulate this lemma so that it can be used with both the Hare quota and the Droop quota. 

\begin{lemma}
    \label{lem:mes-pjr-monotonicity}
     Consider an execution of MES/EES where the initial budget of each voter is $\beta$ and the cost of each candidate is~$\gamma$. Consider a group of voters $S\subseteq N$ and a set of candidates $D\subseteq \cup_{i\in S}A_i$, 
     and for each $c\in D$ let $x_c=|N_c\cap S|$ be the number of voters in $S$ who approve $c$.
     Then if $\sum_{c\in D}\frac{\gamma}{x_c} \leq \beta$,  
     MES/EES selects at least $|D|$ candidates from $\cup_{i\in S}A_i$, and each voter in $S$ spends at most $\sum_{c\in D}\frac{\gamma}{x_c}$ on the first $|D|$ candidates selected by MES/EES from $\cup_{i\in S}A_i$. 
\end{lemma}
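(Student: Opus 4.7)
The plan is to order $D$ in decreasing order of in-group support, writing $D=\{d_1,\dots,d_m\}$ with $x_{d_1}\ge\cdots\ge x_{d_m}$, and to induct on $j\in\{0,1,\ldots,m\}$ on the strengthened claim: MES/EES selects at least $j$ candidates from $\bigcup_{i\in S}A_i$, and immediately after the $j$-th such selection every voter in $S$ has spent at most $\sigma_j:=\sum_{l=1}^{j}\gamma/x_{d_l}$. The case $j=0$ is trivial, and $j=m$ yields both conclusions of the lemma, since $\sigma_m\le\beta$ is exactly the hypothesis.

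For the inductive step, the induction hypothesis together with $\sigma_m\le\beta$ guarantees that every voter in $S$ has remaining budget at least $\beta-\sigma_{j-1}\ge\sigma_m-\sigma_{j-1}\ge\gamma/x_{d_j}$. Among $d_1,\dots,d_j$ at most $j-1$ candidates have been selected so far, so pigeonhole yields an unselected $d_{j^*}$ with $j^*\le j$, hence $x_{d_{j^*}}\ge x_{d_j}$. The subset $N_{d_{j^*}}\cap S$ has size $x_{d_{j^*}}$ and each of its voters still has budget at least $\gamma/x_{d_j}\ge\gamma/x_{d_{j^*}}$. This subset therefore witnesses, for MES, that $q=\gamma/x_{d_{j^*}}$ satisfies $\sum_{i\in N_{d_{j^*}}}\min(b_i,q)\ge\gamma$, and, for EES, that $\gamma/x_{d_{j^*}}$ lies in the set whose minimum defines the threshold. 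Either way the affordability threshold of $d_{j^*}$ is at most $\gamma/x_{d_j}$, so MES/EES does not terminate and selects some candidate with threshold at most $\gamma/x_{d_j}$.

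If the newly selected candidate lies in $\bigcup_{i\in S}A_i$, it counts as the $j$-th such selection; each voter in $S$ who pays for it contributes at most the threshold $\gamma/x_{d_j}$ (under MES because payments are $\min(b_i,q^*)\le q^*$, and under EES because payments are exactly $q^*$), while voters in $S$ who do not pay inherit the previous bound, so the spending cap $\sigma_j$ is preserved. If instead the selected candidate lies outside $\bigcup_{i\in S}A_i$, no voter in $S$ approves it and hence no voter in $S$ pays, so the remaining budgets of $S$ and the unselected candidates in $D$ are unchanged and the same lower bound on $d_{j^*}$'s threshold still applies; we iterate until the next selection lands in $\bigcup_{i\in S}A_i$, which must eventually happen because the algorithm cannot terminate while some candidate still has threshold at most $\gamma/x_{d_j}$.

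The main obstacle will be the bookkeeping around this interleaving: the counter $j$ tracks only selections from $\bigcup_{i\in S}A_i$, but MES/EES may insert arbitrarily many selections outside that set, and I must argue that those intermediate steps neither consume any budget of $S$ nor invalidate the budget-based bound on the threshold of $d_{j^*}$. The uniform treatment of MES and EES goes through because both rules share the crucial feature that every paying voter pays at most the affordability threshold of the selected candidate.
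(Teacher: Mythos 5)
Your proof is correct and is essentially the paper's argument: the paper also proceeds by induction, peeling off the minimum-support candidate of $D$ at each level (which, unrolled, is exactly your forward induction on $j$ with $D$ sorted by decreasing in-group support), using the fact that voters in $S$ only spend when a candidate from $\cup_{i\in S}A_i$ is selected, and that MES/EES always picks a candidate of minimum affordability threshold so the $j$-th such selection costs each voter at most $\gamma/x_{d_j}$. Your handling of the interleaved selections outside $\cup_{i\in S}A_i$ and of the EES threshold is a slightly more explicit version of what the paper leaves implicit.
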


\begin{proof}
    The proof proceeds by induction on the size of $D$. For the base case, suppose $D$ is a singleton, i.e., $D=\{c\}$. Then the affordability threshold of $c$ at the first iteration of MES/EES is at most $\frac{\gamma}{x_c} \leq \beta$, 
    since the voters in $S$ can purchase $c$ by splitting its cost evenly amongst them. 
    Thus, once the rule terminates, at least one candidate from $\cup_{i\in S}A_i$ is purchased: otherwise, each voter in $S$ still has $\beta$ dollars, so voters in $S$ can collectively afford $c$.
    Further, if the first candidate from $\cup_{i\in S}A_i$ that is purchased by MES/EES is $c$, each voter in $S$ spends at most $\frac{\gamma}{x_c}$ on this purchase. On the other hand, if MES/EES 
    purchases some candidate $d\in (\cup_{i\in S}A_i)\setminus\{c\}$ before $c$, then the affordability threshold of $d$
    is at most $\frac{\gamma}{x_c}$, so no voter in $S$ spends more than $\frac{\gamma}{x_c}$ on this candidate. Thus, our claim holds for $|D|=1$.

    For the inductive step, assume that the claim holds for $|D| = \delta$; we will prove it for $|D| = \delta+1$. Pick $c^*\in\arg\min\{x_c: c\in D\}$ and set $D'=D\setminus\{c^*\}$.
    Then, by applying the inductive hypothesis to $D'$, we conclude that MES/EES purchases at least $\delta$ candidates from $\cup_{i\in S}A_i$. Let $W_\delta$ be the set of the first $\delta$ candidates from $\cup_{i\in S}A_i$ purchased by MES/EES; by the inductive hypothesis, each voter in $S$ spends at most $\sum_{c\in D'}\frac{\gamma}{x_c}=\sum_{c\in D}\frac{\gamma}{x_c}-\frac{\gamma}{x_{c^*}}\le \beta-\frac{\gamma}{x_{c^*}}$ on candidates in $W_\delta$. Therefore, after purchasing the candidates in $W_\delta$, each voter in $S$ has at least $\frac{\gamma}{x_{c^*}}$ dollars remaining. As $|D|>|W_\delta|=\delta$, the set $D\setminus W_\delta$ is non-empty; consider some candidate $d\in D\setminus W_\delta$. By the choice of $c^*$, we have $x_d\ge x_{c^*}$, so at this point the voters in $S$ can purchase $d$ at an affordability threshold of at most $\frac{\gamma}{x_{c^*}}$. The only reason why they may fail to do that is that some other candidate from $\cup_{i\in S}A_i$ is purchased; since MES/EES always pick a candidate with the lowest affordability threshold, in that case, too, the cost to each voter in $S$ is at most $\frac{\gamma}{x_{c^*}}$. Either way, MES/EES  purchases at least $\delta+1$ candidates, with no voter spending more than $\sum_{c\in D}\frac{\gamma}{x_c}$ on them, so our proof is complete.
    \qed
\end{proof}

We are now ready to prove that MES/EES satisfy FPJR even when they select fewer than $k$ candidates.

\begin{theorem}
    \label{thm:mes-fpjr}
    MES/EES satisfy FPJR, and Droop MES/EES satisfy Droop-FPJR. 
\end{theorem}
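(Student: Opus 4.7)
I would argue the Hare and Droop statements in parallel, since the only difference lies in the values of the candidate cost $\gamma$ and the voter budget $\beta$. Assume for contradiction that the output committee $W$ fails the relevant version of FPJR, as witnessed by a weakly $(\ell, T)$-cohesive group $S$ with $\left|W\cap\bigcup_{i\in S}A_i\right| < \ell$. As in the proof of Theorem~\ref{thm:droop-monroe-droop-fpjr}, I may assume $T\subseteq\bigcup_{i\in S}A_i$; also, $|T|\le k$, since $|T|=k+1$ would force $|S|>n$ (both in the Hare case via $|S|\ge|T|\cdot n/k$ and in the Droop case via $|S|>|T|\cdot n/(k+1)$).

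Next, for $c\in T$ let $x_c=|N_c\cap S|$. Each voter in $S$ approves at least $\ell$ candidates in $T$, so by double-counting $\sum_{c\in T}x_c = \sum_{i\in S}|A_i\cap T|\ge \ell|S|$. Sort $T$ in decreasing order of $x_c$ and let $D\subseteq T$ consist of the top $\ell$ candidates. Applying Lemma~\ref{lem:maximizer-of-inverse-variables} with $t=|T|$, $s=|S|$ and $\tau=\ell$ yields
\[
\sum_{c\in D}\frac{1}{x_c}\ \le\ \frac{|T|}{|S|}.
\]

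The endgame is to verify the hypothesis of the monotonicity lemma (Lemma~\ref{lem:mes-pjr-monotonicity}), namely $\sum_{c\in D}\gamma/x_c\le\beta$. For the Hare case $\gamma=n/k$ and $\beta=1$, and the bound is immediate from $|S|\ge|T|\cdot n/k$. For the Droop case $\gamma=n/k$ and $\beta=(k+1)n/(kn+1)$; here I first invoke Lemma~\ref{lem:size-of-cohesive-group} to upgrade $|S|>|T|n/(k+1)$ to $|S|\ge(|T|n+1)/(k+1)$, and then a short algebraic manipulation shows the required inequality is equivalent to $|T|\le k$, which I have already established. In both cases Lemma~\ref{lem:mes-pjr-monotonicity}, applied to $S$ and $D$, guarantees that MES/EES purchases at least $|D|=\ell$ candidates from $\bigcup_{i\in S}A_i$, contradicting our assumption.

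The main obstacle is really the \emph{setup} rather than any single step: one has to package the weakly cohesive condition into a concrete affordability statement for a carefully chosen subset $D\subseteq T$, which is exactly what Lemma~\ref{lem:maximizer-of-inverse-variables} is designed for. Once $D$ is in hand, Lemma~\ref{lem:mes-pjr-monotonicity} automatically handles the fact that MES/EES may terminate early with fewer than $k$ winners, which circumvents the gap in the argument of Kalayc{\i} et al.~\cite{FPJR}. The Droop algebra (with $|T|\le k$) is the only place where the specific choice of virtual budget $\beta=(k+1)n/(kn+1)$ is used, and it is what pins down this value.
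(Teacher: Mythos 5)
Your proposal is correct and follows essentially the same route as the paper's proof: the same reduction to $T\subseteq\bigcup_{i\in S}A_i$, the same double-counting bound $\sum_{c\in T}x_c\ge\ell|S|$, the same application of Lemma~\ref{lem:maximizer-of-inverse-variables} to the top-$\ell$ candidates, and the same budget verification feeding into Lemma~\ref{lem:mes-pjr-monotonicity} (your observation that the Droop inequality reduces exactly to $|T|\le k$ is a slightly cleaner packaging of the paper's algebra, which instead uses monotonicity of $\frac{nz}{nz+1}$).
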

\begin{proof}
    Consider any (Droop) weakly $(\ell, T)$-cohesive group $S$. Let $x_i = |N_{c_i} \cap S|$ denote the number of voters in $S$ that approve a candidate $c_i\in T$. Relabel the candidates so that $x_1 \geq x_2 \geq\ldots\geq x_{|T|}$. We assume without loss of generality that $x_i \geq 1$ for each $c_i\in T$: if not, we can replace $T$ with $T\cap(\cup_{i\in S}A_i)$. By definition we also have $x_i \leq |S|$. Since $S$ is (Droop) weakly $(\ell, T)$-cohesive, each voter $i\in S$ approves at least $\ell$ candidates in $T$, so $x_1 + \cdots + x_{|T|} \geq \ell\cdot|S|$. Therefore, by applying Lemma~\ref{lem:maximizer-of-inverse-variables} with $s=|S|, t=|T|, \tau = \ell$, we obtain $\sum_{i\in [\ell]} \frac{1}{x_i} \leq \frac{|T|}{|S|}$. To conclude the proof, we will invoke \Cref{lem:mes-pjr-monotonicity} with $D=\{c_1, \dots, c_\ell\}$ and $\gamma=\frac{n}{k}$; to do so, we need to show that 
    $\sum_{i\in [\ell]}\frac{n}{k}\cdot\frac{1}{x_i}\le \beta$, where $\beta$ is the Hare (resp., Droop) MES/EES per voter budget.

    
    For the standard (Hare) MES/EES with per-voter budget $1$, since $S\ge\frac{n}{k}\cdot|T|$, we have
    $$
    \sum_{i\in[\ell]} \frac{n}{k}\cdot\frac{1}{x_i}\le
    \frac{n}{k}\cdot\frac{|T|}{|S|}\le 1.
    $$
    For Droop MES/EES with per-voter budget $\frac{(k+1)n}{kn+1}$, 
    we have $|S|>\frac{n}{k+1}\cdot|T|$;
    by \Cref{lem:size-of-cohesive-group}, this implies
    $|S|\ge\frac{n\cdot|T|+1}{k+1}$.
    Moreover, we have $|T|\le k$: indeed, $|T|\ge k+1$ implies $|S|>n$, which is impossible.
    Hence, we obtain
    \begin{align*}
    \sum_{i\in[\ell]} \frac{n}{k}\cdot\frac{1}{x_i}\le
        \frac{n}{k}\cdot \frac{|T|}{|S|} \leq \frac{n}{k}\cdot\frac{|T|\cdot(k+1)}{n\cdot|T|+1}
        = \frac{n\cdot|T|}{n\cdot|T|+1}\cdot\frac{k+1}{k}
        \le\frac{kn}{nk+1}\cdot\frac{k+1}{k}=\frac{(k+1)n}{kn+1}, 
    \end{align*} 
    where we use the fact that the function $\frac{nz}{nz+1}$ is monotonically increasing in $z$ for $z\ge 0$.
    
    In either case, we can invoke Lemma~\ref{lem:mes-pjr-monotonicity} to conclude that every outcome $W$ of (Droop) MES/EES satisfies $|W\cap \bigcup_{i\in S} A_i| \geq \ell$, which is what we wanted to prove.
    \qed
\end{proof}
\section{Full Justified Representation}\label{sec:fjr}
Full Justified Representation is known to be a challenging axiom to satisfy, even for the Hare quota. Indeed, there is only one voting rule known to satisfy Hare-FJR, namely, the Greedy Cohesive Rule (GCR), which is not known to be polynomial-time computable. Below, we give a formal definition of this rule; as it proceeds by identifying cohesive groups, we define two variants of this rule: Hare-GCR (which is identical to the GCR rule defined in prior work) and Droop-GCR.

\begin{definition}[Hare/Droop-Greedy Cohesive Rule (GCR)~\cite{FJR}] 
The rule starts with $W = \varnothing$ and all voters $v\in N$ marked as active. It proceeds iteratively.
At each step, it constructs a set $\mathcal T$ that consists of all triples $(\ell, T, S)$ with
$\ell\in [k]$, $T\subseteq C\setminus W$ and $S \subseteq N$ such that all voters in $S$ are active and $S$ is Hare (resp., Droop) weakly $(\ell, T)$-cohesive. If ${\mathcal T}=\varnothing$, it adds $\max\{0, k-|W|\}$ arbitrary candidates from $C\setminus W$ to $W$, and outputs $W$. Otherwise,
among all triples $(\ell, T, S)\in\mathcal T$ it identifies the ones with the largest value of $\ell$, 
and picks one with the smallest $|T|$ among these. It then adds all candidates in $T$ to $W$, and marks all voters in $S$ as inactive.   
\end{definition}

Peters et al.~\cite{FJR} show that Hare-GCR satisfies Hare-FJR. We will now adapt their proof to show that Droop GCR satisfies Droop-FJR. 

\begin{theorem}
    \label{thm:droop-gcr-droop-fjr}
    Droop GCR selects a committee of size~$k$ and satisfies Droop-FJR.
\end{theorem}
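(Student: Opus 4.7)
The plan is to adapt Peters et al.'s proof that Hare-GCR satisfies Hare-FJR to the Droop setting, replacing the Hare quota $n/k$ with the Droop quota $n/(k+1)$ throughout and invoking \Cref{lem:size-of-cohesive-group} to turn strict inequalities into integer bounds.

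For the size bound, each iteration of the main loop selects a Droop weakly $(\ell^{(t)}, T^{(t)})$-cohesive group $S^{(t)}$ with $|S^{(t)}| > |T^{(t)}| \cdot n/(k+1)$, so by \Cref{lem:size-of-cohesive-group} we have $|S^{(t)}| \ge (|T^{(t)}|n + 1)/(k+1)$. Since voters are deactivated at most once, summing over all $q$ iterations gives $\sum_t (|T^{(t)}|n + 1) \le n(k+1)$, hence $\sum_t |T^{(t)}| \le k + 1 - q/n$. If $q \ge 1$ this forces $\sum_t |T^{(t)}| \le k$ because the sum is an integer, and the final padding step then fills the committee to exactly $k$.

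For Droop-FJR I would proceed by contradiction, assuming a violating witness $(\ell, T, S)$, so $S$ is Droop weakly $(\ell, T)$-cohesive and $|A_i \cap W| < \ell$ for all $i \in S$. The key observation is that whenever a voter $i \in S$ is deactivated at iteration $t$, she belongs to $S^{(t)}$, and the cohesiveness of $S^{(t)}$ gives $|A_i \cap W| \ge |A_i \cap T^{(t)}| \ge \ell^{(t)}$; combined with $|A_i \cap W| < \ell$ this forces $\ell^{(t)} \le \ell - 1$. Moreover, the sequence $\ell^{(t)}$ is non-increasing in $t$, since both enlarging $W$ and shrinking the active voter pool can only remove triples from $\mathcal{T}$; and $(\ell, T, S) \in \mathcal{T}$ at the start of iteration $1$, so $\ell^{(1)} \ge \ell$. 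Hence there is a well-defined last iteration $t^*$ with $\ell^{(t^*)} \ge \ell$, and by the previous observation no voter of $S$ has been deactivated through iteration $t^*$, so $S$ is fully active at the start of iteration $t^* + 1$ (or at termination). The goal is then to exhibit a triple in $\mathcal{T}^{(t^*+1)}$ with $\ell$-parameter at least $\ell$, contradicting either the greedy choice $\ell^{(t^*+1)} < \ell$ or the termination condition $\mathcal{T}^{(t^*+1)} = \varnothing$.

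The principal obstacle is that voters of $S$ may already have accumulated some approvals in $W^{(t^*+1)}$ from candidates selected during iterations $1, \dots, t^*$, so the naive witness $(\ell, T \setminus W^{(t^*+1)}, S)$ need not be Droop weakly $\ell$-cohesive. To overcome this I would pick the violating witness with $\ell$ minimum: then for each $\alpha < \ell/2$, the set $S_\alpha = \{i \in S : |A_i \cap W| \le \alpha\}$ must satisfy $|S_\alpha| \le |T \setminus W| \cdot n/(k+1)$, because otherwise $(\ell - \alpha, T \setminus W, S_\alpha)$ would itself be a violating witness with smaller $\ell$-parameter, contradicting minimality. Combining these bounds across the different values of $\alpha$ with the Droop slack from \Cref{lem:size-of-cohesive-group} and the strict inequality $|S| > |T| \cdot n/(k+1)$, I would derive the required integer contradiction.
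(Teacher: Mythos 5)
Your argument that Droop-GCR fills all $k$ seats is correct and is essentially the paper's: each selected triple deactivates more than $|T^{(t)}|\cdot\frac{n}{k+1}$ voters, each voter is deactivated at most once, hence $\sum_t |T^{(t)}|<k+1$, and integrality gives $\sum_t|T^{(t)}|\le k$, after which the padding step fills the committee.

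The Droop-FJR half, however, is not a proof: the final step is announced but never carried out, and the device you propose for it does not work. You already have the two facts that suffice: (i) if a voter $i\in S$ is deactivated at iteration $t$ as part of $S^{(t)}$, then $|A_i\cap W|\ge|A_i\cap T^{(t)}|\ge\ell^{(t)}$, so $\ell^{(t)}\le\ell-1$; and (ii) some voter of $S$ must eventually be deactivated. The paper simply collides these at a single moment: let $i$ be the \emph{first} voter of $S$ to be deactivated, as part of a triple $(\ell',T',S')$. At that instant all of $S$ is still active, so $(\ell,T,S)$ competes and the greedy choice forces $\ell'\ge\ell$; but (i) applied to $i$ gives $\ell'\le|A_i\cap T'|\le|A_i\cap W|<\ell$ --- an immediate contradiction. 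No monotonicity of $\ell^{(t)}$, no ``last iteration with $\ell^{(t)}\ge\ell$'', and crucially no re-certification of a pruned witness is needed, because the contradiction is extracted from the representation that $i$ receives by belonging to the \emph{selected} group, not from exhibiting a fresh admissible triple after $W$ has grown. (Your worry that $T$ may meet the partial committee touches the step ``$\ell'\ge\ell$'' as well; the paper takes $(\ell,T,S)$ as directly admissible there, which is immediate if, as in the original formulation of GCR, one does not insist that $T$ avoid the current $W$.)

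Your detour manufactures the obstacle and then fails to clear it. At iteration $t^*+1$ you must produce a triple with $\ell$-parameter at least $\ell$, which forces you to prune $T$ by $W^{(t^*+1)}$; the minimal-$\ell$/$S_\alpha$ argument does not recover cohesiveness. Concretely, for $\ell=2$ the only admissible value is $\alpha=0$, so all you learn is that at most $|T\setminus W|\cdot\frac{n}{k+1}$ voters of $S$ approve nothing in $W$; the voters with exactly one approved winner --- possibly all but a sliver of $S$ --- are entirely uncontrolled, and no contradiction with $|S|>|T|\cdot\frac{n}{k+1}$ follows. For general $\ell$ the restriction $\alpha<\ell/2$ likewise says nothing about voters with $\lceil\ell/2\rceil\le|A_i\cap W|\le\ell-1$. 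As written the proof does not go through; reassembling your own observations (i) and (ii) at the first deactivation, as above, repairs it.
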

\begin{proof}
    First, we prove that Droop-GCR satisfies Droop-FJR. Let $W$ be an output of Droop-GCR, and assume for contradiction that there is a Droop weakly $(\ell,T)$-cohesive group $S$ such that $|A_i \cap W| < \ell$ for all $i\in S$. Since the algorithm terminated without adding $T$ to $W$, some member of $S$ must have been marked as inactive by the rule; let $i$ be the first such voter, and suppose
    that $i$ was marked as inactive as part of some Droop weakly $(\ell', T')$-cohesive group $S'$. When $(\ell', T', S')$ was chosen, all members of $S$ were still active, so $S'$ being chosen means that $\ell' \geq \ell$. But we also have $\ell > |A_i \cap W| \geq |A_i \cap T'| \geq \ell'$. Thus, we obtain $\ell > \ell'$, a contradiction. Hence, $W$ provides Droop-FJR.

    Next, we show that Droop-GCR outputs $k$ candidates. Suppose that Droop-GCR constructs $W$ by adding sets of candidates $T_1, \dots, T_r$, so that, after $T_r$ is added, in the next iteration the set $\mathcal T$ is empty. When $T_j$ is added, the algorithm marks more than $\frac{|T_j|n}{k+1}$ voters as inactive. As each voter is marked as inactive at most once, we have $\sum_{j=1}^r|T_j|<k+1$. Thus, after $T_r$ is added, we have $|W|\le k$, and in the next iteration the algorithm outputs a set of size exactly $k$. 
    \qed
\end{proof}

We note that considering Droop weakly $(\ell, T)$-cohesive groups in the definition of the rule is necessary for Droop FJR:
\Cref{thm:gjcr-gcr-not-droop-jr} shows that Hare GCR fails even the weaker Droop JR axiom (see appendix).


\section{Experiments}\label{sec:experiments}
The Droop quota versions of the proportionality axioms we have introduced are by definition stronger than their Hare quota cousins. A natural question, then, is whether the Droop versions are actually harder to satisfy in practice. That is, for typical instances, how common is it for a committee to provide, say, Hare-EJR, but not Droop-EJR? We study this question from three different perspectives, using a variety of sampling models.
For reasons of computational efficiency, we focus on easy-to-verify axioms, namely, Hare/Droop-JR and Hare/Droop-EJR+ (for a discussion of complexity of verification, see~\cite{EJR+,aziz_2017,aziz_2018,FPJR}).
Our experiments indicate that the Droop versions of these axioms are satisfied meaningfully less frequently than the Hare versions.


\medskip

\noindent{\bf Experiment 1\ }
In our first experiment, 
we ask what is the probability that a randomly selected committee satisfies an axiom; this experimental design was used in several prior works~\cite{EJR+,szufa_how_2022}. 

We use three sampling models: Resampling, Noise and Truncated Urn~\cite{szufa_how_2022} (see Appendix~\ref{app:exp} for their descriptions). All three sampling models are parameterized by a value $p$. For each model, we test four values of $p$: 0.2, 0.4, 0.6, and 0.8. The Resampling and Noise models are additionally parameterized by a value $\phi$, and the Truncated Urn model is parameterized by a value $\alpha$; for both $\phi$ and $\alpha$
 we test 100 values: each value from 0.01 to 1, in increments of 0.01.  We use $500$ voters, $50$ candidates and sample a random committee of size $10$. For each  combination of parameters, we run 400 repetitions. Our parameter settings replicate those 
 of Brill and Peters~\cite{EJR+} except for the number of voters: their experiments have $n=100$, while we set $n=500$. We do this because Hare and Droop quotas are the same for $n=100, k=10$: $\frac{100}{10}=\lfloor\frac{100}{11}\rfloor+1$; in contrast, for $n=500$ we get $\frac{500}{10}=50$, $\lfloor\frac{500}{11}\rfloor+1=46$.
 For each parameter combination we plot the fraction of committees that satisfy JR, Droop-JR, EJR+, and Droop-EJR+. 

\begin{figure}
    \centering
    \includegraphics[width=\textwidth]{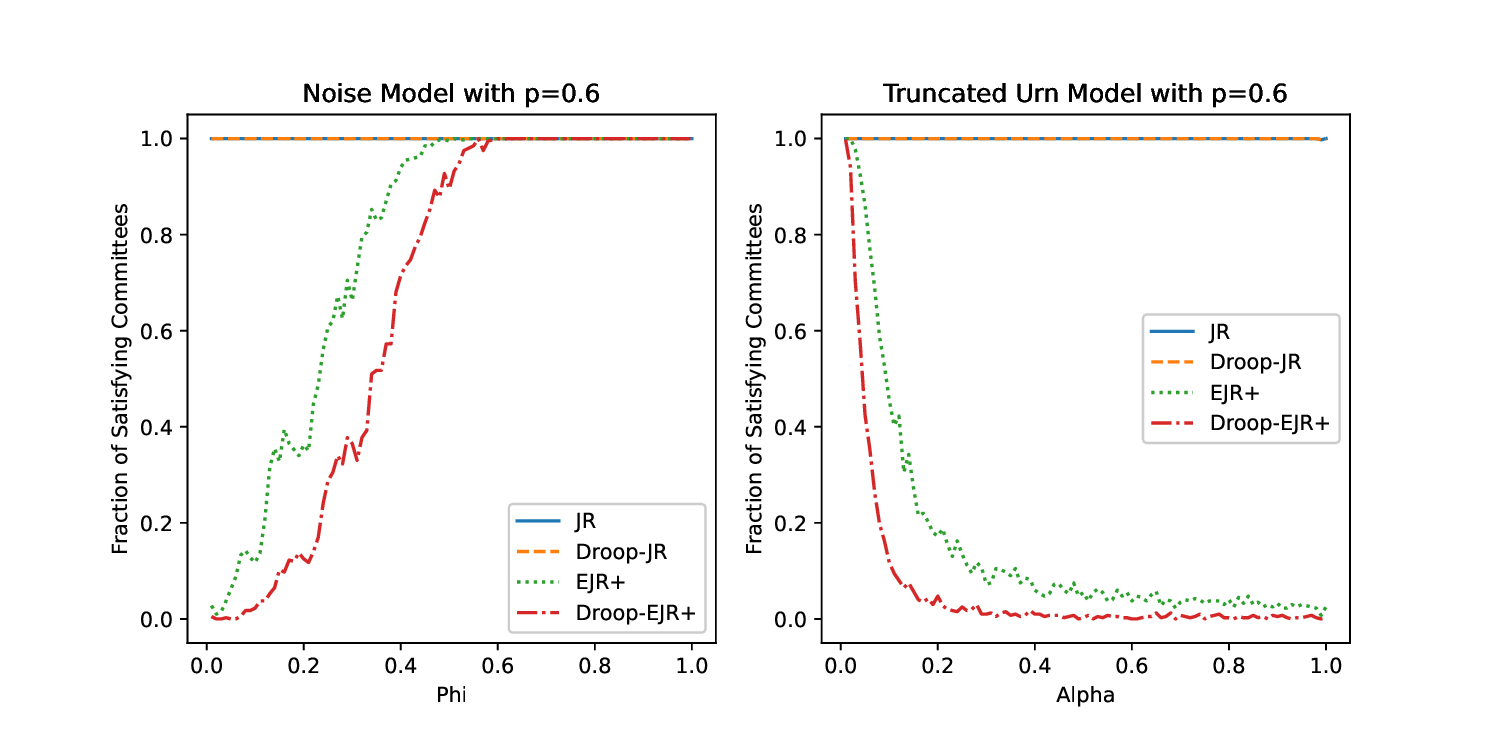}
    \caption{Experiment 1. The line for JR is not visible, as it coincides with the line for Droop-JR.}
    \label{fig:exp1}
\end{figure}

\medskip

\noindent{\bf Experiment 2\ }
Our second experiment uses the same sampling models and parameter settings. However, instead of selecting a committee at random, we generate a committee using the standard version of MES. Since MES always satisfies JR and EJR+, we only plot the fraction of committees that satisfy Droop-JR and Droop-EJR+.

\medskip

\noindent{\bf Experiment 3\ }
Our final experiment tests how the gap between satisfiability of the Hare and Droop axioms changes with the number of candidates and the size of the committee, $k$. We simulate elections following the  $p$-Impartial Culture model, where each voter approves each candidate independently with probability $p$ \cite{bredereck_experimental_2019}. This model can be viewed as a special case of the resampling model with $\phi = 1$. We use this model instead of the ones from the previous experiments, because it is parameterized by a single value, allowing for easier visualization. We test 100 values for $p$: each value from 0.01 to 1, in increments of 0.01. We set $n=100$, and test three values for the number of candidates: 50, 100, and 200. For the committee size $k$ we test values 1 to 9; we do not consider larger values of $k$, because for $k\geq \sqrt{n} = 10$ the Hare and Droop quotas are either identical or very close. For each of the combinations of parameters, we run 500 repetitions. We plot the fraction of committees that satisfy JR, Droop-JR, EJR+, and Droop-EJR+.

\smallskip

\noindent{\bf Results\ }
The full results of the experiments are given in Appendix~\ref{app:exp}:
 \Cref{fig:resampling-random-committee,fig:noise-random-committee,fig:truncated-urn-random-committee} for Experiment~1, \Cref{fig:resampling-mes-committee,fig:noise-mes-committee,fig:truncated-urn-mes-committee} for Experiment~2, and 
\Cref{fig:IC-50,fig:IC-100,fig:IC-200} for Experiment 3.
 
In Experiment~1, for many of the parameter settings, JR and Droop-JR are easily satisfied, with Droop-JR being slightly more demanding, especially for $p=0.2$ (see \Cref{fig:exp1} for a representative sample of results). On the other hand, EJR+ is satisfied far less often, and Droop-EJR+ is satisfied by far the least, especially for $p=0.8$. This offers evidence that Droop-EJR+ is significantly more difficult to satisfy than EJR+ (which is already quite demanding~\cite{EJR+}).
Thus, Droop-EJR+ serves as a powerful test of proportionality of committees.

\Cref{fig:exp2} offers a glimpse of results for Experiment~2. Although it is possible for MES to output a committee that fails Droop-JR, this rarely occurs in our experiments. 
In contrast, there is a large range of parameter values where MES outcomes rarely or never provide Droop-EJR+. This is especially true in the $p=0.8$ setting of the Resampling and Truncated Urn models. These results suggest that the standard MES rule cannot be relied on to provide Droop-EJR+ in practice; rather, one needs to use MES with a larger budget, as specified in Theorem~\ref{thm:droop-mes-droop-ejrplus}.

The results of Experiment 3 (see \Cref{fig:IC-50,fig:IC-100,fig:IC-200} in Appendix~\ref{app:exp})  
indicate that, even for relatively large values of $k$ there is a non-trivial range of parameters for which there is a meaningful difference between EJR+ and Droop-EJR+, even though  
$\frac{1}{k}$ and $\frac{1}{k+1}$ get closer as $k$ increases.

\begin{figure}
    \centering
    \includegraphics[width=\textwidth]{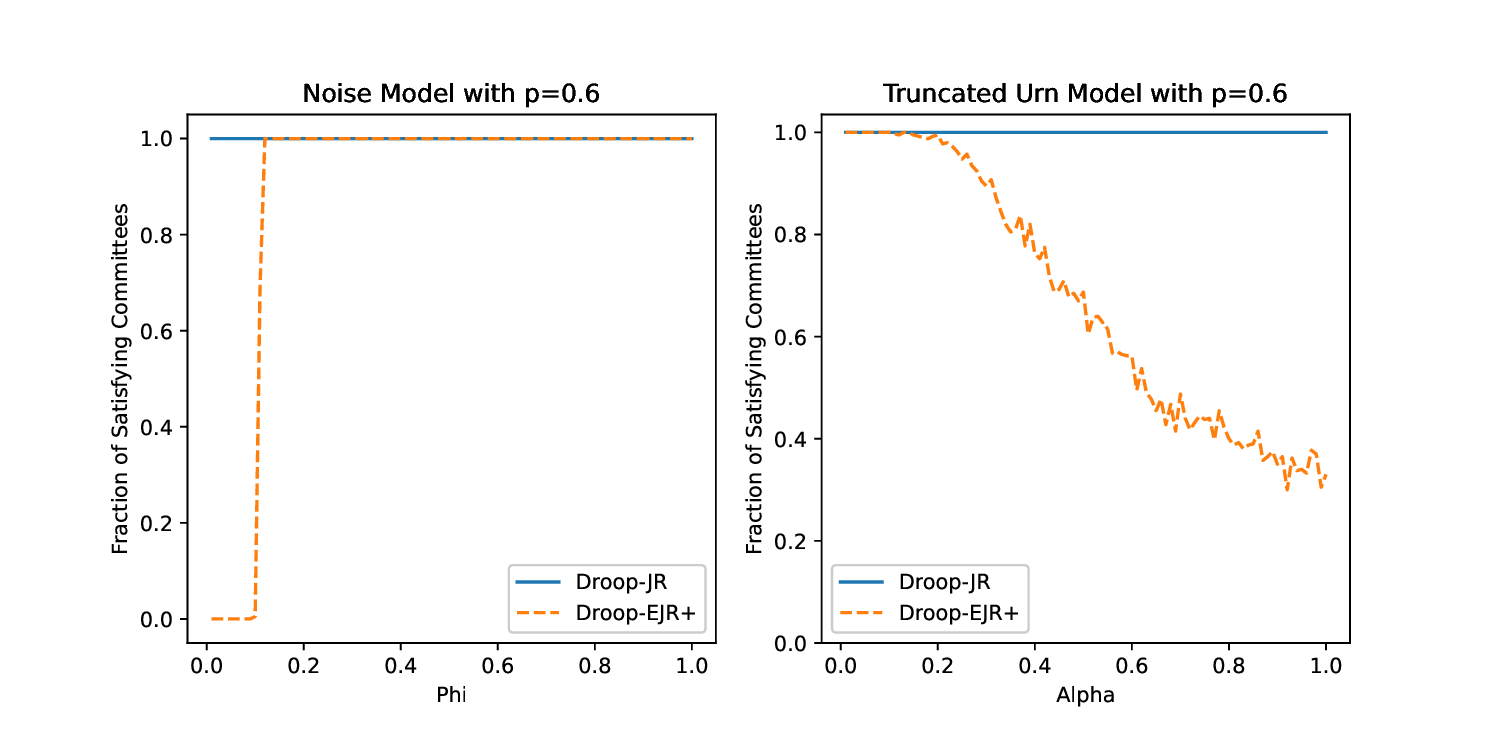}
    \caption{Experiment 2. Only the lines for Droop-JR and Droop-EJR+ are shown, as MES satisfies JR and EJR+.}
    \label{fig:exp2}
\end{figure}
 


\section{Conclusions and Future Work}\label{sec:concl}
We have put forward Droop versions of the seven Justified Representation axioms that have been studied in the literature. For each axiom, we prove analogs of all major satisfiability results for voting rules from the Hare setting. While some of our proofs are simple adaptations of relevant proofs from prior work, others require more careful analysis or additional insights; this is the case, e.g., for Theorems~\ref{thm:pav-ejrplus} and~\ref{thm:droop-gjcr-droop-ejrplus}. Prior to our work, the strongest proportionality axioms known to be satisfiable were EJR+ and FJR. Our work advances this frontier by showing that the Droop versions of these axioms are always satisfiable. Furthermore, our experiments give evidence that in practice Droop-EJR+ is much harder to satisfy than EJR+.

Future work should try to fill in the remaining unknown entries in Table~\ref{tbl:results}. Furthermore, a natural direction for future work is to recover other known results from the Hare setting, such as for average satisfaction~\cite{aziz_2018}, and for hardness of verification~\cite{EJR+,aziz_2017,aziz_2018,FPJR}. It would also be natural to explore the use of the Droop quota in the burgeoning field of proportional clustering~\cite{li-clustering,micha-clustering,aziz-clustering,kellerhals-clustering}. Finally, there is a need for a more extensive set of experiments related to the Droop quota. In order to get a fuller picture, it would be interesting to repeat the experiments from this paper with FPJR and FJR. Furthermore, there should be experiments performed using data from real-world elections, which is  readily available from the PrefLib repository~\cite{MaWa13a}.

%
%
%
\bibliographystyle{splncs04}
\bibliography{refs.bib}

\appendix
\section{Omitted Proofs}\label{app:omit}

\noindent{\bf Lemma~\ref{lem:maximizer-of-inverse-variables}.\ }
{\em
    Consider a set of positive integers 
    $x_1 \geq x_2 \geq \hdots \geq x_{t}$ 
    such that for some $s, \tau\in \mathbb N$
    it holds that $x_i\in [s]$ for all $i\in [t]$ and 
    $\sum_{i\in [t]} x_i \geq s\tau$. Then
    \[
    \sum_{i\in [\tau]} \frac{1}{x_i} \leq \frac{t}{s}.
    \]
}    
\begin{proof}
    We first establish an auxiliary lemma.
    \begin{lemma}
    \label{lem:transferring-increases-sum}
    For all $a\geq b \geq 1$ we have
    \[
    \frac{1}{a+1} + \frac{1}{b-1} > \frac{1}{a} + \frac{1}{b}.
    \]
    \end{lemma}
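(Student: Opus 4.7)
The plan is to reduce the inequality to an elementary polynomial comparison via standard algebraic manipulation. First I would move $\frac{1}{b}$ to the left-hand side and $\frac{1}{a+1}$ to the right, transforming the claim into
\[
\frac{1}{b-1} - \frac{1}{b} \;>\; \frac{1}{a} - \frac{1}{a+1}.
\]
Each side can be combined into a single fraction, giving $\frac{1}{b(b-1)} > \frac{1}{a(a+1)}$. All denominators here are positive (the statement implicitly requires $b \geq 2$ for $\frac{1}{b-1}$ to be defined; the boundary case $b = 1$ is vacuous if one interprets $\frac{1}{0}$ as $+\infty$), so cross-multiplication reduces the claim to the polynomial inequality $a(a+1) > b(b-1)$.

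The remaining step is to verify this polynomial inequality from the hypothesis $a \geq b \geq 2$. From $a \geq b$ together with $a + 1 \geq b + 1 > b - 1$, and using that both $a$ and $b-1$ are positive, I can multiply term-by-term to obtain $a(a+1) \geq b(b+1) > b(b-1)$, which is the desired strict inequality. The whole argument is elementary algebra, so there is no substantive obstacle; the only thing to be slightly careful about is the boundary case $b = 1$ where the left-hand side has a divide-by-zero term. If a more conceptual derivation is preferred, the claim can instead be read as a consequence of strict convexity of $f(x) = 1/x$ on $(0,\infty)$: strict convexity implies that the forward differences $f(x+1) - f(x)$ are strictly increasing in $x$, which is exactly the lemma after rearrangement.
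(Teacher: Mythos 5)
Your proof is correct and takes essentially the same route as the paper: both reduce the inequality to an elementary comparison of products of denominators (the paper shows $(a+1)(b-1)<ab$ after writing each side as $\frac{a+b}{\cdot}$, while you show $b(b-1)<a(a+1)$ after pairing the terms as forward differences). Your explicit remark about the boundary case $b=1$ is a fair observation that applies equally to the paper's proof, and is harmless since the lemma is only invoked with $b\ge 2$.
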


\begin{proof}
We have $(a+1)(b-1)=ab+b-a-1<ab$ and hence
\[
    \frac{1}{a+1} + \frac{1}{b-1} =\frac{a+b}{(a+1)(b-1)}> \frac{a+b}{ab} = \frac{1}{a} + \frac{1}{b}.
    \]
    \qed
\end{proof}
We are now ready to present the proof of \Cref{lem:maximizer-of-inverse-variables}.
Consider any positive integers $x_1 \geq x_2 \geq \cdots \geq x_{t}$ that satisfy the condition in the statement of the lemma. 

Suppose first that $\tau=1$. Since $x_1\ge x_i$ for all $i\in [t]$, we have $x_1\ge\frac{1}{t}\cdot\sum_{i\in [t]}x_i\ge \frac{s}{t}$, and hence $\frac{1}{x_1} \leq \frac{t}{s}$, and our claim follows. 
    
Therefore, for the remainder of the proof we assume $\tau\geq 2$. We will now modify our variables in a way that does not decrease the objective $\sum_{i\in [\tau]} \frac{1}{x_i}$, and then show that our bound applies to the modified variables. Intuitively, we will transfer as much mass as possible from the later variables in $x_1, \dots, x_\tau$ to the earlier ones, while respecting the ordering and the constraints $x_i \leq s$ for all $i\in [t]$, and use Lemma~\ref{lem:transferring-increases-sum} to show that this can only increase our objective. 

Formally, our transformation proceeds as follows. 
We define $x_0=s$, $x_{t+1}=1$. Then, at each iteration we
check if there are variables $x_j, x_\ell$ with $1\le j<\ell\le \tau$ such that $x_{j-1}>x_j$ and $x_\ell>x_{\ell+1}$; if yes, we set $x_j:=x_j+1$, $x_\ell:=x_\ell-1$.
We repeat this step until no such pair can be found. Clearly, each step
of our transformation does not change the sum $\sum_{i\in [\tau]}x_i$, and the constraint
$s\ge x_1\ge\dots\ge x_t\ge 1$ remains satisfied. Moreover, since $x_j\ge x_\ell$ at the start of the step, by \Cref{lem:transferring-increases-sum} our transformation can only increase the sum $\sum_{i\in [\tau]}\frac{1}{x_i}$.
Thus, it suffices to show that $\sum_{i\in [\tau]}\frac{1}{x_i}\le\frac{t}{s}$ after the transformation.

If after the transformation we have $x_\tau=s$, our constraints imply $x_i=s$ for all $i\in [\tau]$ and it follows that $\sum_{i\in [\tau]}\frac{1}{x_i}=\frac{\tau}{s}\le\frac{t}{s}$, so we are done. Thus, suppose that after the transformation we have $x_\tau<s$. Then we claim that there exists a $j$ with $0\le j<\tau$ such that $x_1=\dots=x_j=s$ and $x_{j+2}=\dots=x_\tau$, with $x_\tau\le x_{j+1}< s$. Indeed, let $j=\max\{\ell\in [\tau]: x_\ell=s\}$; by our assumption, $j<\tau$. If $j\ge\tau-2$, our claim is vacuously true. On the other hand, suppose that $j< \tau-2$. If it is not the case that $x_{j+2}=\dots=x_{\tau}$, then for $\ell'=\max\{\ell\in[\tau]: x_\ell>x_\tau\}$ the variables 
$x_{j+1}$, $x_{\ell'}$ satisfy the constraints $x_j>x_{j+1}$, $x_{\ell'}>x_{\ell'+1}$, a contradiction with the termination condition for our transformation. This establishes our claim.

Now, we lower-bound $x_{j+1}$ by observing that
$$
x_{j+1}=\sum_{i\in [t]}x_i-\sum_{i\in [j]}x_i-\sum_{i=j+2}^t x_i \ge s\tau-sj-(t-j-1)x_\tau.
$$
Consequently, 
\begin{align}\label{eq:xj+1}
x_\tau - x_{j+1}\le 
x_\tau - s\tau+sj+(t-j-1)x_\tau = 
(t-j)x_\tau-(\tau-j)s, 
\end{align}
and we obtain
\begin{align*}
\sum_{i=1}^\tau\frac{1}{x_i} &= \sum_{i\in [j]}\frac{1}{x_i}+\frac{1}{x_{j+1}}+\sum_{i=j+2}^\tau\frac{1}{x_i} 
=
\frac{j}{s} + \frac{1}{x_{j+1}}+\frac{\tau-j-1}{x_\tau} =
\frac{j}{s} +\frac{1}{x_{j+1}}-\frac{1}{x_\tau}+\frac{\tau-j}{x_\tau}\\ 
&=
\frac{j}{s} +\frac{x_\tau-x_{j+1}}{x_\tau\cdot x_{j+1}} + \frac{(\tau-j)s}{x_\tau\cdot s}
\le
\frac{j}{s} + \frac{x_\tau-x_{j+1}}{x_\tau\cdot s} + \frac{(\tau-j)s}{x_\tau\cdot s}\le
\frac{j}{s} + \frac{(t-j)x_\tau}{x_\tau\cdot s} =\frac{j}{s} + \frac{t-j}{s} = \frac{t}{s}, 
\end{align*}
where the first inequality uses the facts that $x_{j+1}\le s$ and $x_\tau-x_{j+1}\le 0$, and the second inequality follows from~\eqref{eq:xj+1}.
    \qed
\end{proof}

\section{Negative Results}\label{app:neg}
In this section, we provide a number of (mostly) negative results. Most of these results consider a rule that satisfies the Hare version of a proportionality axiom and show that this rule fails the Droop version of the same axiom, thereby justifying our modifications of these rules.

Our first result explains why we need to run MES/EES with inflated budget to guarantee Droop-EJR+: if we use the standard budget, these rules are not even guaranteed to satisfy the much weaker Droop-JR axiom.

\begin{proposition}
    \label{prop:mes-not-droop-jr}
    MES/EES do not satisfy Droop-JR.
\end{proposition}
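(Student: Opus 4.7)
The plan is to exhibit a small explicit counterexample. The driving intuition is that under standard MES/EES, each voter starts with budget $1$ and each candidate has cost $n/k$, so a group of voters $S$ whose approvals are disjoint from the rest of the electorate can purchase a jointly approved candidate only if $|S|\ge n/k$. If I can engineer a group whose size lies strictly between $n/(k+1)$ and $n/k$, then this group is Droop $1$-cohesive with respect to their shared candidate yet permanently unable to pool enough money to buy anything, which is exactly what a Droop-JR violation requires.

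Concretely, I would take $n=7$, $k=2$ (so the Hare quota is $7/2$, the Droop quota is $3$, and each candidate costs $7/2$), with two candidates $c_1, c_2$ and two disjoint voter blocs: four voters approving only $c_1$, and three voters approving only $c_2$. In the first round, $c_1$ has affordability threshold $7/8$ (the four supporters have total budget $4\ge 7/2$), while $c_2$ is unaffordable since its three supporters have combined budget $3<7/2$. So MES buys $c_1$, each supporter paying $7/8$. In the second round, $c_2$ is still unaffordable and no other candidate exists, so the algorithm terminates with $W=\{c_1\}$. The three $c_2$-supporters form a set of size $3>7/3=n/(k+1)$ and jointly approve $c_2$, making them Droop $1$-cohesive, yet none of them approves any candidate in $W$, witnessing a failure of Droop-JR.

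The only step requiring separate verification is that EES behaves identically on this instance, since its affordability rule is more restrictive. For $c_1$, the equal split already gives four equal contributions of $7/8$, so EES picks $c_1$ at the same threshold. For $c_2$, every subset $S\subseteq N_{c_2}$ yields a per-voter share $n/(k|S|)\in\{7/2, 7/4, 7/6\}$, all strictly greater than $1$, so no subset meets the budget constraint and $c_2$ is never affordable under EES either. No further subtlety arises, because both rules are explicitly permitted to return committees of size less than $k$, so no completion step is applied and the violating outcome $W=\{c_1\}$ stands. There is no real obstacle here; the whole proof is a one-shot calculation, and the main care is only in checking the EES affordability bookkeeping.
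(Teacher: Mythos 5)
Your proposal is correct and uses essentially the same counterexample as the paper: $n=7$, $k=2$, with disjoint blocs of four and three voters each approving a single distinct candidate, so that the bloc of three is Droop $1$-cohesive ($3>7/3$) but can never afford its candidate at cost $n/k=7/2$. The paper's write-up is slightly more economical (it only needs that the minority's candidate cannot be purchased, without tracing the full run), but your extra bookkeeping for MES and EES is a harmless verification of the same argument.
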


\begin{proof}
    The basic idea of the proof is that the size requirement for a Droop 1-cohesive group $S$ does not guarantee that $S$ will be able to afford their candidates. In particular, recall that when MES/EES are used, each voter gets a budget of 1, and each candidate costs $\frac{n}{k}$. The cohesiveness requirement only enforces that the group members have more than $\frac{n}{k+1}$ dollars, but they need $\frac{n}{k}$ dollars to buy their jointly approved candidate.
    
    Concretely, consider an election with $n=7$ voters and target committee size $k=2$. There are three voters who approve candidate $a$ only, and four voters who approve candidate $b$ only. When using MES/EES on this instance, each candidate costs $\frac{7}{2} = 3.5$. The voters in $N_a$ collectively have only 3 dollars to spend, so they cannot afford $a$.

    However, the group $N_a$ is Droop 1-cohesive: they have one jointly approved candidate, $a$, and $|N_a|=3 > \frac{7}{2+1}$. Therefore, in order to satisfy Droop-JR, MES/EES would have to output an outcome in which at least one voter in $N_a$ gets one of their approved candidates elected. However, as we noted, voters in $N_a$ only approve $a$, and MES/EES will not elect $a$. Thus, MES/EES do not satisfy Droop-JR.
    \qed
\end{proof}

Our next result shows that if we use the standard (Hare) versions of Monroe and Greedy Monroe, 
we may fail to satisfy Droop-PJR (and hence Droop-FPJR). The reader may wonder if this is caused by divisibility  issues (i.e., whether $n$ is divisible by $k$ or by $k+1$), but our proof demonstrates that this is not the case.

\begin{proposition}
    \label{prop:monroe-not-droop-pjr}
    Monroe and Greedy Monroe do not satisfy Droop-PJR, even when $k$ divides $n$ or $k+1$ divides $n$.
\end{proposition}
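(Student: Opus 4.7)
The plan is to exhibit one small election that simultaneously satisfies $k\mid n$ and $(k+1)\mid n$ and on which both rules fail Droop-PJR; this settles both divisibility sub-cases at once. Concretely I would take $n=12$, $k=2$ (so $2\mid 12$ and $3\mid 12$, and the Droop $2$-cohesion threshold is strict size $>\tfrac{2n}{k+1}=8$), with four candidates $a,b,c,d$, $9$ voters approving $\{a,b\}$, and $3$ voters approving $\{c,d\}$. The group $S$ of $9$ voters then has $|S|=9>8$ and $\bigcap_{i\in S}A_i=\{a,b\}$, so $S$ is Droop $2$-cohesive, and Droop-PJR requires $|W\cap\{a,b\}|\ge 2$, i.e.\ both $a$ and $b$ must be selected.

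For Hare Monroe, every valid assignment allocates exactly $\tfrac{n}{k}=6$ voters per selected candidate. I would compute the maximum Monroe score of each size-$2$ committee by direct inspection: $\{a,b\}$ achieves $6+3=9$ (six of the $\{a,b\}$-voters cover one of them, three cover the other); each committee of the form $\{a,c\},\{a,d\},\{b,c\},\{b,d\}$ also achieves $6+3=9$ (six of the $\{a,b\}$-voters cover the first seat, the three $\{c,d\}$-voters cover the second); and $\{c,d\}$ tops out at $3+3=6$. Since the maximum is $9$ and the Monroe rule is set-valued, it outputs every score-$9$ committee, and in particular it outputs $\{a,c\}$. But $|\{a,c\}\cap\{a,b\}|=1<2$, so $\{a,c\}$ violates Droop-PJR, showing that Hare Monroe does not satisfy it.

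For Hare Greedy Monroe, I would trace an execution under an adversarial tie-breaking. In round~1 the candidates $a$ and $b$ tie with $9$ approvals, so the rule may pick $a$ and assign six $\{a,b\}$-voters to it, deactivating them. In round~2 each of $b,c,d$ has exactly $3$ active approvals, so the rule may pick $c$, producing $\{a,c\}$, which again violates Droop-PJR. This gives a tie-breaking under which the rule fails, matching the tie-breaking caveats already present elsewhere in the paper.

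The main obstacle is purely bookkeeping: one must verify that no committee exceeds Monroe score $9$ (which follows because each candidate contributes at most $\min(|N_c|,6)$ and at least one selected candidate has $|N_c|\le 3$ unless both selected candidates lie in $\{a,b\}$), and confirm that Greedy Monroe's tie-breaking really can be resolved to pick $a$ first and then $c$. Both checks are immediate on this $12$-voter instance, so no deeper structural argument is required.
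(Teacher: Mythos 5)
Your construction is sound as far as it goes, but it takes a different route from the paper and proves a strictly weaker statement. The paper gives two separate instances (one with $k\mid n$, one with $(k+1)\mid n$) in which the \emph{unique} committee providing Droop-PJR is not a Monroe maximizer at all, and in which Greedy Monroe is forced into a violating committee up to symmetric relabelling. In your single instance ($n=12$, $k=2$, nine $\{a,b\}$-voters, three $\{c,d\}$-voters), the Droop-PJR committee $\{a,b\}$ is itself a Monroe co-winner (score $9$, tied with $\{a,c\}$, $\{a,d\}$, $\{b,c\}$, $\{b,d\}$), and Greedy Monroe reaches $\{a,c\}$ only via an adversarial resolution of a genuine three-way tie in round~2 (after $a$ is chosen and six of its supporters deactivated, $b$, $c$ and $d$ each have exactly three active approvals, and picking $b$ would yield a Droop-PJR committee). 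Under the irresolute convention that a rule satisfies an axiom only if \emph{every} winning committee provides it, your example does establish the proposition for Monroe; but for Greedy Monroe it only shows failure under \emph{some} tie-breaking, which the paper explicitly treats as a weaker statement (the \xmark* versus \xmark{} distinction in Table~\ref{tbl:results}; the Droop-PJR entry for Greedy Monroe is an unqualified \xmark). So you should either perturb the group sizes so that the unwanted candidate strictly wins round~2 (as in the paper's instances, where the leftover cohesive voters are strictly outnumbered by the outside bloc), or explicitly acknowledge that you are only proving the tie-breaking-dependent version. Two small slips: the maximum Monroe score of $\{c,d\}$ is $3$, not $3+3=6$, since only three voters approve either of $c,d$; and your bound ``each candidate contributes at most $\min(|N_c|,6)$'' gives $12$ for $\{a,b\}$, so the claim that its score is $9$ needs the additional observation that only nine voters approve any member of $\{a,b\}$. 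Neither slip affects the conclusion.
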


\begin{proof}
    First, we give an example with $k|n$. Consider an instance with $n=9, k=3$ where the first 7 voters approve candidates $c_1,c_2,c_3$, while the last two voters approve candidate $c_4$. Let $S$ be the group that consists of the first 7 voters. Then $S$ is Droop 3-cohesive: the voter in $S$ jointly approve three candidates and $|S|=7 > \frac{3\cdot9}{3+1}$. Therefore, the only outcome that provides Droop-PJR for this instance is $\{c_1,c_2,c_3\}$. However, the Monroe score of this committee is only 7, while the Monroe score of $\{c_1,c_2,c_4\}$ is 8. Furthermore, the Greedy Monroe rule would first select two candidates in $\{c_1, c_2, c_3\}$; assume without loss of generality that it selects $c_1$ and $c_2$. It would assign three voters from $S$ to $c_1$, then three more voters from $S$ to $c_2$. Finally, $c_4$ would be the remaining candidate with the most approvals from active voters, so Greedy Monroe would select it. Thus, neither Monroe nor Greedy Monroe are guaranteed to satisfy Droop-PJR when $k$ divides $n$.

    Now, we give an example with $(k+1)|n$. Consider an instance with $n=15, k=2$ where the first 11 voters approve candidates $c_1,c_2$ and the remaining 4 voters approve $c_3$. Let $S$ be the group that consists of the first 11 voters. Then $S$ form a Droop $2$-cohesive group: the voters in $S$ jointly approve 2 candidates, and $|S|=11 > \frac{2\cdot 15}{2+1}$. Therefore, the only outcome that provides Droop-PJR for this instance is $\{c_1, c_2\}$. However, the Monroe score of this committee is only 11, while the Monroe score of $\{c_1,c_3\}$ is 12. Furthermore, Greedy Monroe would first select one of $c_1, c_2$ and assign 8 of the voters in $S$ to it. Then, of the remaining two candidates, $c_3$ receives more approvals from active voters, so Greedy Monroe would select it. Therefore neither Monroe nor Greedy Monroe are guaranteed to satisfy Droop-PJR when $k+1$ divides $n$.
    \qed
\end{proof}

\Cref{prop:monroe-droop-jr} is the only positive result in this section, and offers a somewhat surprising observation: while Monroe fails Droop-PJR, it nevertheless satisfies a weaker Droop axiom, namely, Droop-JR, as long as $k$ divides $n$. 
We provide no matching negative result when $k$ does not divide $n$, so what happens in that regime remains an open question.

\begin{proposition}
    \label{prop:monroe-droop-jr}
    Monroe satisfies Droop-JR if $k$ divides $n$.
\end{proposition}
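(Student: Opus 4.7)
\medskip

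\noindent\textbf{Proof proposal.} The approach is a swap argument that strengthens the classical proof that Hare-Monroe satisfies Hare-JR by invoking an averaging argument to exploit the tighter Droop-quota hypothesis.

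Suppose for contradiction that Monroe outputs a committee $W$ with an associated optimal Hare valid assignment $\pi$ that fails Droop-JR. Then there is a group $S$ with $|S| > \tfrac{n}{k+1}$ and a candidate $c^{*} \in \bigcap_{i\in S} A_i$ such that $W \cap \bigl(\bigcup_{i\in S} A_i\bigr) = \varnothing$. In particular $c^{*} \notin W$ and $\pi(i) \notin A_i$ for every $i \in S$. Because $k$ divides $n$, every $c \in W$ satisfies $|\pi^{-1}(c)| = n/k$.

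First I would dispose of the easy case $|S| \ge n/k$: here $S$ is Hare $1$-cohesive, so the known fact that Hare Monroe satisfies Hare-JR (\cite{aziz_2017}) yields an immediate contradiction. So I may assume $|S| < n/k$ throughout the rest of the proof. For each $c \in W$, let $\alpha_c = |\pi^{-1}(c) \cap N_c|$, so the Monroe score of $(W, \pi)$ equals $\sum_{c \in W} \alpha_c$. Because every voter in $S$ contributes $0$ to this sum, $\sum_{c \in W} \alpha_c \le n - |S|$. The hypothesis $|S| > \tfrac{n}{k+1}$ rearranges to $k|S| > n - |S|$, and since there are $k$ candidates in $W$, averaging yields some $c^{\dagger} \in W$ with $\alpha_{c^{\dagger}} < |S|$. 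This averaging step---with no slack whatsoever---is where the Droop threshold gets used.

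The core step is to construct a committee-assignment pair $(W', \pi')$ with strictly larger Monroe score, where $W' = (W \setminus \{c^{\dagger}\}) \cup \{c^{*}\}$. I would define $\pi'$ by: (i) sending every $i \in S$ to $c^{*}$; (ii) keeping $\pi'(i) = \pi(i)$ for every $i \notin S \cup \pi^{-1}(c^{\dagger})$; and (iii) redistributing the voters in $X := \pi^{-1}(c^{\dagger}) \setminus S$ to fill the $n/k - |S|$ still-empty slots at $c^{*}$ together with the $|S \setminus \pi^{-1}(c^{\dagger})|$ slots vacated in $W \setminus \{c^{\dagger}\}$ by voters of $S$ leaving their original assignments. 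The count $|X| = n/k - |S \cap \pi^{-1}(c^{\dagger})|$ matches exactly the $(n/k - |S|) + (|S| - |S \cap \pi^{-1}(c^{\dagger})|)$ slots to be filled---this is where $k \mid n$ is essential, since then every committee member has exactly $n/k$ slots and the arithmetic closes. For the score change: voters in $S$ collectively gain $|S|$ (each now approves their assignment $c^{*}$); voters in $X$ lose at most $\alpha_{c^{\dagger}}$ (using $S \cap N_{c^{\dagger}} = \varnothing$ to identify $\alpha_{c^{\dagger}}$ with $|X \cap N_{c^{\dagger}}|$); and all other voters are unaffected. The total change is at least $|S| - \alpha_{c^{\dagger}} > 0$, contradicting optimality of $(W, \pi)$.

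The main obstacle I anticipate is the bookkeeping for the redistribution in step (iii)---making sure the number of voters in $X$ needing reassignment equals the total number of slots available in $W'$. This closes cleanly thanks to $k \mid n$, which makes every candidate's slot count equal to $n/k$. The substantive content of the proof is the averaging bound, which uses the Droop inequality at its sharpest.
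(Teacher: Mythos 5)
Your proof is correct, and it reaches the contradiction by a genuinely different route than the paper. Both arguments are swap arguments against the optimality of $(W,\pi)$, but you and the paper choose the candidate to evict by different criteria. The paper applies the pigeonhole principle to the quantities $|\pi^{-1}(c)\cap S|$: it finds a $c\in W$ absorbing at least $|S|/k$ voters of $S$, so that the set $N'=\pi^{-1}(c)\setminus S$ of displaced outsiders has size strictly below $\frac{n}{k+1}$, and then compares this loss to a gain of $|S'|=\min\{\frac{n}{k},|S|\}>\frac{n}{k+1}$. You instead average the per-candidate satisfaction $\alpha_c=|\pi^{-1}(c)\cap N_c|$, using the fact that the $S$-voters contribute nothing to the total Monroe score together with $n-|S|<k|S|$ to find a committee member $c^\dagger$ with $\alpha_{c^\dagger}<|S|$, and compare the loss $\alpha_{c^\dagger}$ directly to the gain $|S|$. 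Your version uses the Droop inequality globally (to bound the total score) rather than locally, and arguably isolates the ``weakest'' committee member more transparently; the price is that assigning all of $S$ to $c^*$ forces the side case $|S|\ge n/k$, which you discharge by citing the known fact that Monroe satisfies Hare-JR, whereas the paper stays self-contained by capping the reassigned subgroup at $\min\{\frac{n}{k},|S|\}$. Your slot-counting for the redistribution of $X=\pi^{-1}(c^\dagger)\setminus S$ checks out ($|X|=\frac{n}{k}-|S\cap\pi^{-1}(c^\dagger)|$ matches the vacancies exactly), and the identification $\alpha_{c^\dagger}=|X\cap N_{c^\dagger}|$ via $S\cap N_{c^\dagger}=\varnothing$ is the right observation to make the loss bound tight.
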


\begin{proof}
    Assume for contradiction that on some instance $(C, N, \calA, k)$ the Monroe rule selects a Hare valid assignment $(W, \pi)$ such that the committee $W$ fails Droop-JR. Then there exists a group of voters $S\subseteq N$ with size $|S| > \frac{n}{k+1}$ and $|\bigcap_{i\in S}A_i| \neq\varnothing$ such that $A_i \cap W = \varnothing$ for all $i\in S$. Note that voters in $S$ do not approve the candidates assigned to them by $\pi$, i.e., for each $i\in S$ we have $\pi(i)\not\in A_i$. By the pigeonhole principle, there must be some $c\in W$ such that $|\pi^{-1}(c)\cap S|\ge\frac{|S|}{k}$. Let $N'=\pi^{-1}(c)\setminus S$; since $|\pi^{-1}(c)|=\frac{n}{k}$, we have 
    $$
    |N'|= \frac{n}{k}-\frac{|S|}{k}<\frac{n}{k}-\frac{n}{k(k+1)}=\frac{n}{k+1}.
    $$
    Pick a candidate $c'\in\bigcap_{i\in S}A_i$, let $W'=W\cup\{c'\}\setminus\{c\}$, and consider a Hare valid assignment $(W', \pi')$ constructed as follows. Let $S'\subseteq S$ be a subset of $S$ of size $\min\{\frac{n}{k}, |S|\}$ that contains all voters in $\pi^{-1}(c)\cap S$. The mapping $\pi'$ assigns all voters in $S'$ to $c$, coincides with $\pi$ on all other voters in $N\setminus\pi^{-1}(c)$, and assigns all voters 
    from $N'$ to candidates in $W'$
    so that the resulting assignment is valid. 
    The voters in $S'$ prefer the new assignment to the old one, the voters in $N'$ may prefer the old assignment to the new one, and all other voters are indifferent between the two assignments. As we have argued that $|N'|<\frac{n}{k+1}< |S'|$, it follows that the new assignment has a higher Monroe score, a contradiction.
    \qed
\end{proof}

Interestingly, \Cref{prop:monroe-droop-jr} does not extend to Greedy Monroe, as shown in the following proposition. 

\begin{proposition}
    \label{prop:droop-monroe-not-droop-pjr}
    Droop Monroe and Droop Greedy Monroe do not satisfy Droop-PJR if $k+1$ does not divide $n$, even if $k$ divides $n$.
\end{proposition}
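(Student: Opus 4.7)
The plan is to exhibit a single small instance satisfying $k\mid n$ and $(k+1)\nmid n$ on which both Droop Monroe and Droop Greedy Monroe admit a Droop-PJR-violating output. Concretely, I would use $n=9$, $k=3$ (so $k\mid 9$ but $4\nmid 9$), with seven voters approving $\{c_1,c_2,c_3\}$ and two voters approving $\{c_4\}$; this mirrors the Hare counterexample used in \Cref{prop:monroe-not-droop-pjr}. Since $7>\frac{3\cdot 9}{4}=\frac{27}{4}$, the seven-voter block is Droop $3$-cohesive, so Droop-PJR forces the outcome to be $W=\{c_1,c_2,c_3\}$.

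For Droop Monroe I would compute the maximum Monroe score of both the ``good'' committee $\{c_1,c_2,c_3\}$ and the ``bad'' committee $\{c_1,c_2,c_4\}$. A valid Droop assignment gives the dummy $\lfloor 9/4\rfloor=2$ seats and splits the remaining seven across the three committee members as $3+2+2$. Placing all seven cohesive voters inside $\{c_1,c_2,c_3\}$ and both $c_4$-approvers at the dummy yields score $7$; for $\{c_1,c_2,c_4\}$, placing five cohesive voters at $c_1,c_2$, the two $c_4$-approvers at $c_4$, and the two ``spare'' cohesive voters at the dummy also yields score $7$. Both committees thus attain the absolute maximum Monroe score $n-\lfloor n/(k+1)\rfloor=7$, so $\{c_1,c_2,c_4\}$ belongs to the Droop Monroe output, witnessing failure of Droop-PJR.

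For Droop Greedy Monroe on the same instance, observe that $n-(k+1)\lfloor n/(k+1)\rfloor=1$, so only the first round assigns $\lceil 9/4\rceil=3$ voters while the two later rounds each assign $\lfloor 9/4\rfloor=2$. Round one picks some candidate in $\{c_1,c_2,c_3\}$ (each tied at $7$ approvals); say $c_1$, deactivating three cohesive voters. Round two picks one of $\{c_2,c_3\}$ (each with four active approvals against $c_4$'s two); say $c_2$, deactivating two more cohesive voters. Round three then faces two remaining cohesive voters approving $c_3$ and the two $c_4$-approvers approving $c_4$, so $c_3$ and $c_4$ tie at exactly two active approvals; an adversarial tie-break in favor of $c_4$ produces $W=\{c_1,c_2,c_4\}$, violating Droop-PJR.

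The main obstacle is that, unlike the Hare analogue of \Cref{prop:monroe-not-droop-pjr} where the bad committee strictly beats the good one, the Droop dummy absorbs irrelevant voters for free and forces a tie at the universal maximum Monroe score $n-\lfloor n/(k+1)\rfloor$. The argument must therefore rely on ties rather than strict inequalities: for Droop Monroe the winning set simply contains a non-PJR committee, while for Droop Greedy Monroe we appeal to adversarial tie-breaking in the final round. The remaining steps are routine -- carefully writing out the two Monroe-optimal assignments above, and noting that no alternative assignment can exceed $7$, which is immediate from the dummy-size bound.
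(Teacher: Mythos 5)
Your instance is correct as far as the arithmetic goes, but it proves something weaker than the proposition asserts, and you have correctly diagnosed the reason yourself: every violation in your example hinges on a tie. The paper's proof deliberately avoids this. It uses $n=21$, $k=7$ with a $16$-voter group jointly approving $\{c_1,\dots,c_6\}$ (Droop $6$-cohesive, since $16>\frac{6\cdot 21}{8}$) and three small disjoint groups approving $c_7$, $c_8$, $c_9$. There the Droop-valid capacity profile ($5$ candidates at $3$ voters, $2$ at $2$ voters, $2$ voters to the dummy) makes the bad committee $\{c_1,\dots,c_5,c_7,c_8\}$ \emph{strictly} beat the Droop-PJR committee ($19$ vs.\ $18$), and Droop Greedy Monroe drops one of $c_1,\dots,c_6$ under \emph{every} tie-breaking, because after five rounds the last cohesive candidate has only one active approver against two for each of $c_7,c_8$.

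The gap this creates is most serious for the Greedy Monroe half. The paper is explicit that tie-breaking-dependent failures are a strictly weaker kind of negative result: \Cref{prop:greedy-monroe-not-droop-jr} is stated with the qualifier ``depending on the way that ties are broken,'' and Table~\ref{tbl:results} marks that entry with an asterisk, whereas \Cref{prop:droop-monroe-not-droop-pjr} carries no such qualifier. Your round-three tie between $c_3$ and $c_4$ (two active approvers each) means that with favorable tie-breaking Droop Greedy Monroe outputs $\{c_1,c_2,c_3\}$, so you have only shown ``may fail under adversarial tie-breaking,'' not the unconditional claim. The Monroe half is more defensible---the rule is irresolute and the Droop-PJR-violating committee $\{c_1,c_2,c_4\}$ genuinely lies in the output set at the universal maximum score $7$---but as you note, at $n=9$, $k=3$ the good committee also attains that maximum, so a two-block instance of this size can never yield strictness; escaping the tie requires a richer construction along the lines of the paper's, where the cohesive group approves more candidates than the Droop capacities can absorb and the surplus is outweighed by the small outside groups.
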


\begin{proof}
    Consider an election with $n=21,k=7$, where $A_i = \{c_1,\hdots, c_6\}$ for voters $i=1,\dots,16$, $A_{17} = A_{18} = \{c_7\}, A_{19} = A_{20} = \{c_8\}$, and $A_{21} = \{c_9\}$. Note that voters $1,\dots,16$ form a Droop $6$-cohesive group, since they jointly approve 6 candidates and $16 > \frac{6\cdot21}{7+1}$. Therefore, for a committee to provide Droop-PJR, it must contain all of $\{c_1,\dots,c_6\}$. However, both Droop Monroe and Droop Greedy Monroe will only select five of these candidates. Indeed since $21\mod (7+1) = 5$, a Droop valid assignment has $5$ candidates assigned to $3$ voters each, and $2$ candidates assigned to $2$ voters each (with two voters assigned to the dummy candidate $d$). Droop Greedy Monroe will select $\{c_1,\hdots,c_5\}$ first, assigning three voters to each of them. Then $c_7$ and $c_8$ will be selected because they have two supporters, while $c_6$ and $c_9$ only has one. For Droop Monroe, we note that $W = \{c_1,\hdots,c_6, c_7\}$ has Monroe score of 18, while $W = \{c_1,\hdots,c_5, c_7, c_8\}$ has Monroe score of 19. Therefore neither Droop Monroe nor Droop Greedy Monroe satisfy Droop-PJR.
    \qed
\end{proof}

\begin{proposition}
\label{prop:greedy-monroe-not-droop-jr}
    Depending on the way that ties are broken when assigning voters to a candidate they do not approve, Greedy Monroe may not satisfy Droop-JR, even if both $k$ and $k+1$ divide $n$.
\end{proposition}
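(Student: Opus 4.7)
The plan is to exhibit a concrete election on which Hare Greedy Monroe, run with an adversarial tiebreaking scheme, outputs a committee violating Droop-JR. I take $n=24$ and $k=2$, so both $k\mid n$ and $(k+1)\mid n$; the Hare block size is $n/k=12$ and Droop $1$-cohesiveness requires a group of size $>n/(k+1)=8$. The election has candidates $x,c_1,c_2$ and voter set $\{1,\dots,24\}$ where voters $1,\dots,9$ approve only $\{x\}$, voters $10,\dots,18$ approve only $\{c_1\}$, and voters $19,\dots,24$ approve only $\{c_2\}$. Setting $S=\{1,\dots,9\}$, we have $|S|=9>8$ and $\bigcap_{i\in S}A_i=\{x\}\neq\varnothing$, so $S$ is Droop $1$-cohesive.

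Next I trace Hare Greedy Monroe under adversarial tiebreaks. In round~$1$, both $x$ and $c_1$ have $9$ approvers (tied for the maximum); candidate tiebreaking selects $c_1$. Because $|N_{c_1}|=9$ is strictly less than the block size $12$, the rule must fill the remaining three slots with arbitrarily chosen non-approvers, and the adversarial voter tiebreaking---precisely the source of ties highlighted in the proposition---picks three members of $S$, say voters $1,2,3$, which are deactivated. The active voter set becomes $\{4,\dots,9\}\cup\{19,\dots,24\}$, with both $x$ and $c_2$ receiving six active approvals. In round~$2$, candidate tiebreaking then selects $c_2$; its six active approvers again fall short of the block, so six non-approvers must be assigned, and voter tiebreaking takes the remaining six members of $S$. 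The resulting committee is $W=\{c_1,c_2\}$; since $\bigcup_{i\in S}A_i=\{x\}$ is disjoint from $W$, Droop-JR fails for the Droop $1$-cohesive group $S$.

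The main obstacle to overcome in designing such an example is a tight counting constraint: assigning any $S$-voter as a non-approver in round~$t$ requires the chosen candidate to simultaneously tie $x$'s current active approval count (otherwise $x$ is picked) and to have strictly fewer than $n/k$ active approvers (otherwise no non-approvers are needed). This forces $|S|$, $|N_{c_1}|$, and $|N_{c_2}|$ to be tightly calibrated against both $n/k$ and $n/(k+1)$. Smaller divisibility-compliant sizes such as $n=12$ do not admit a counterexample, because there $\lfloor n/(k+1)\rfloor+1=n/k$ leaves no slack between the cohesiveness threshold and the block size; the choice $n=24$ is essentially the smallest instance in which all of these inequalities can be met simultaneously, giving the adversarial tiebreaker just enough room to derail the rule.
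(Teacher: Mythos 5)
Your instance is internally consistent and the trace of Hare Greedy Monroe is correct: with $n=24$, $k=2$, the block size is $12$, the group $S=N_x$ of size $9>\frac{24}{3}$ is Droop $1$-cohesive, and the execution you describe does output $\{c_1,c_2\}$, violating Droop-JR. However, your counterexample is genuinely different from the paper's, and in one respect weaker. The paper uses $n=100$, $k=4$ with five disjoint voter groups of sizes $25,22,19,13,21$; these sizes are calibrated so that in every round the selected candidate has \emph{strictly} more active approvers than every alternative, so the only nondeterminism being exploited is the one named in the proposition: which non-approvers get assigned to fill out a block. Your example, by contrast, needs the rule to lose \emph{two} candidate-selection ties against $x$ (a $9$--$9$ tie in round~1 and a $6$--$6$ tie in round~2) before the voter-assignment tiebreaking even comes into play; if candidate ties are broken in favor of $x$, your instance satisfies Droop-JR. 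As your own counting argument shows, this reliance on candidate ties is unavoidable at $k=2$: if $|N_{c_1}|\ge 10$ then at most $2$ members of $S$ are absorbed in round~1 and $x$ strictly wins round~2. So your proof establishes that Greedy Monroe, as a nondeterministic rule, may fail Droop-JR even when $k$ and $k+1$ both divide $n$ --- which suffices under the paper's convention that a rule fails an axiom if some output fails it --- but it does not quite isolate voter-assignment tiebreaking as the sole culprit, which is what the proposition's phrasing (and the paper's larger-$k$ construction) is specifically designed to do. To match the statement exactly you would need to move to a larger $k$, where there is enough slack to make all candidate choices forced.
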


\begin{proof}
    Consider an election with $n=100$ and $ k=4$. Note that $k$ and $k+1$ both divide $n$. The voters
    are partitioned into five pairwise disjoint groups $N_1, \dots, N_5$, so that voters in the group $N_i$ approve candidate $c_i$ for $i=1, \dots, 5$, and $N_1\cup\dots\cup N_5=N$.
    The sizes of these groups are $|N_1|=25$, $|N_2|=22$, $|N_3|=19$, $|N_4|=13$, $|N_5|=21$.
    For a committee to provide Droop-JR, it must select $c_5$, since it is jointly approved by voters in $N_5$, and $|N_5|=21 > \frac{100}{4+1}$. However, Greedy Monroe may select $c_1, c_2, c_3, c_4$, in this order. Indeed, $c_1$ has the largest number of supporters, so it is selected, and voters in $N_1$ are assigned to it. Next, $c_2$ is chosen, and Greedy Monroe may choose to assign all voters in $N_2$ as well as three voters from $N_5$ to it. At this point, $c_3$ would be the candidate with the largest number of approvals from active voters, so Greedy Monroe would choose it, and it may assign all voters in $N_3$ as well as six voters from $N_5$ to it. If that happens, $c_4$ will be the final candidate to be selected. Under this scenario, $c_5$ is not chosen, so Greedy Monroe does not satisfy Droop-JR, even when both $k$ and $k+1$ divide $n$.
    \qed
\end{proof}

The final result of this section offers the strongest evidence that the Droop quota is much more demanding than the Hare quota: we consider two voting rules that satisfy very strong Hare proportionality axioms (namely, GJCR, which satisfies Hare-EJR+, and GCR, which satisfies Hare-FJR) and show that they fail a very weak Droop axiom, namely, Droop-JR. In particular, this shows that Hare-EJR+ and Hare-FJR do not imply Droop-JR.

\begin{proposition}
    \label{thm:gjcr-gcr-not-droop-jr}
    GJCR and GCR do not satisfy Droop-JR.
\end{proposition}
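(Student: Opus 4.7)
The plan is to prove this by exhibiting a small counterexample---an election with only five voters and three candidates---and tracing both rules through it. The example exploits the fact that the Hare quota $n/k$ can strictly exceed the Droop quota $\lfloor n/(k+1)\rfloor+1$, so a group may be Droop 1-cohesive without being Hare 1-cohesive, causing rules defined in terms of the Hare threshold to ignore it. Specifically, I would take $n=5$, $k=2$, candidates $C=\{c^*,c_1,c_2\}$, voters 1 and 2 approving $\{c^*\}$, and voters 3, 4, 5 approving $\{c_1,c_2\}$. Then $\{1,2\}$ is Droop 1-cohesive (its size exceeds $5/3$, and it agrees on $c^*$), so every Droop-JR committee must contain $c^*$; yet $\{1,2\}$ is not Hare 1-cohesive because $2<5/2$.

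For Hare-GJCR, I would first verify that no eligible group exists at $\ell=2$, since no group of size $\ge 5$ jointly approves two candidates. When $\ell$ is decremented to $1$, the only group of size $\ge 5/2$ available is $\{3,4,5\}$, which supports $c_1$ and $c_2$ but not $c^*$. The rule therefore adds (say) $c_1$; voters 3, 4, 5 now each hold one approval in $W$, so no further $\ell=1$ selection is possible. At $\ell=0$ the rule completes the committee by picking one candidate from $\{c^*,c_2\}$ arbitrarily, and under the adversarial tiebreak that chooses $c_2$ the output is $W=\{c_1,c_2\}$, missing every approved candidate of $\{1,2\}$.

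For Hare-GCR, at the start the largest $\ell$ for which $\mathcal{T}$ is non-empty is $1$, and the triples with smallest $|T|$ are $(1,\{c_1\},\{3,4,5\})$ and $(1,\{c_2\},\{3,4,5\})$. The rule adds the $T$ it selects (say $\{c_1\}$) and deactivates $\{3,4,5\}$; in the next round only voters 1 and 2 are active, which is too few to form any Hare weakly $(\ell,T)$-cohesive group, so $\mathcal{T}=\varnothing$. The rule then completes the committee with one arbitrary candidate, and the same adversarial tiebreak yields $W=\{c_1,c_2\}$, again violating Droop-JR. The main subtlety is that both negative results rely on an adversarial interpretation of the ``arbitrary'' choices in the rule definitions, which is consistent with the convention used elsewhere in the appendix (see, e.g., Proposition~\ref{prop:greedy-monroe-not-droop-jr}).
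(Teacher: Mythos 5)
Your proof is correct and takes essentially the same approach as the paper: an explicit counterexample in which the Droop $1$-cohesive group falls below the Hare threshold, so both rules ignore it and their arbitrary completion step can exclude the group's candidate. The paper uses an even smaller instance ($n=3$, $k=1$, two voters on $a$ and one on $b$), where neither rule selects anything in its main loop, but the underlying mechanism is identical to yours.
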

\begin{proof}
     Consider an election with $n=3,k=1$, where voters 1 and 2 approve $a$ and voter 3 approves $b$. Since $2 > \frac{3}{1+1}$, voters 1 and 2 form a Droop 1-cohesive group. Thus, the only outcome that provides Droop-JR is $\{a\}$. However, GJCR will not select any candidates in its main loop, since in never considers $\ell>1$ (indeed, for $\ell>1$ we have $\frac{\ell n}{k} > n$), and there is no group of size $\frac{n}{k} = 3$ that jointly approves a candidate. Similarly, GCR will not select any candidates in its main loop, since the maximum size that $T$ can be is 1 (otherwise $\frac{|T|\cdot n}{k} > n$), and there is no group of size $\frac{n}{k}$ that jointly approves a candidate.
    \qed
\end{proof}

\section{Experiments}\label{app:exp}
In this section, we provide additional information about our sampling models and full experimental results. For our experiments, we use the \emph{abcvoting} Python library~\cite{joss-abcvoting} 

\subsection{Sampling Models}

\paragraph{Resampling Model} The resampling model is parameterized by two values: $p$ and $\phi$. We start by randomly sampling a central ballot $B$ that contains $\lfloor p\cdot m\rfloor$ approvals, where $m$ is the number of candidates in the election. Then for each voter $i$ we set $A_i = B$. Finally, with probability $1-\phi$ we resample $i$'s ballot and with probability $\phi$ we leave it as is. 

\paragraph{Noise Model} The noise model is parameterized by two values: $p$ and $\phi$. We start by randomly sampling a central ballot $B$ as in the resampling model. Then each voter's ballot is generated by sampling from the space of ballots with probability proportional to $\phi^d$, where $d$ is the Hamming distance between the ballot and $B$.

\paragraph{Truncated Urn Model} The truncated urn model is a truncated version of the P{\'o}lya-Eggenberger Urn Model. The model is parameterized by two values: $p$ and $\alpha$. We start with an urn that contains all $m!$ linear orderings of the candidates. For each voter we generate their ballot by drawing an ordering $r$ from the urn and truncating it so as to approve the first $\lceil p\cdot m\rceil$ candidates. Then we return $\alpha m!$ copies of $r$ to the urn.

\subsection{Full Results}
In Experiment 3 (\Cref{fig:IC-50,fig:IC-100,fig:IC-200}), for readability purposes we do not plot JR and Droop-JR, because the JR line almost exactly overlaps the EJR+ line, and the Droop-JR line almost exactly overlaps the Droop-EJR+ line.
\begin{figure}
    \centering
    \includegraphics[width=\textwidth]{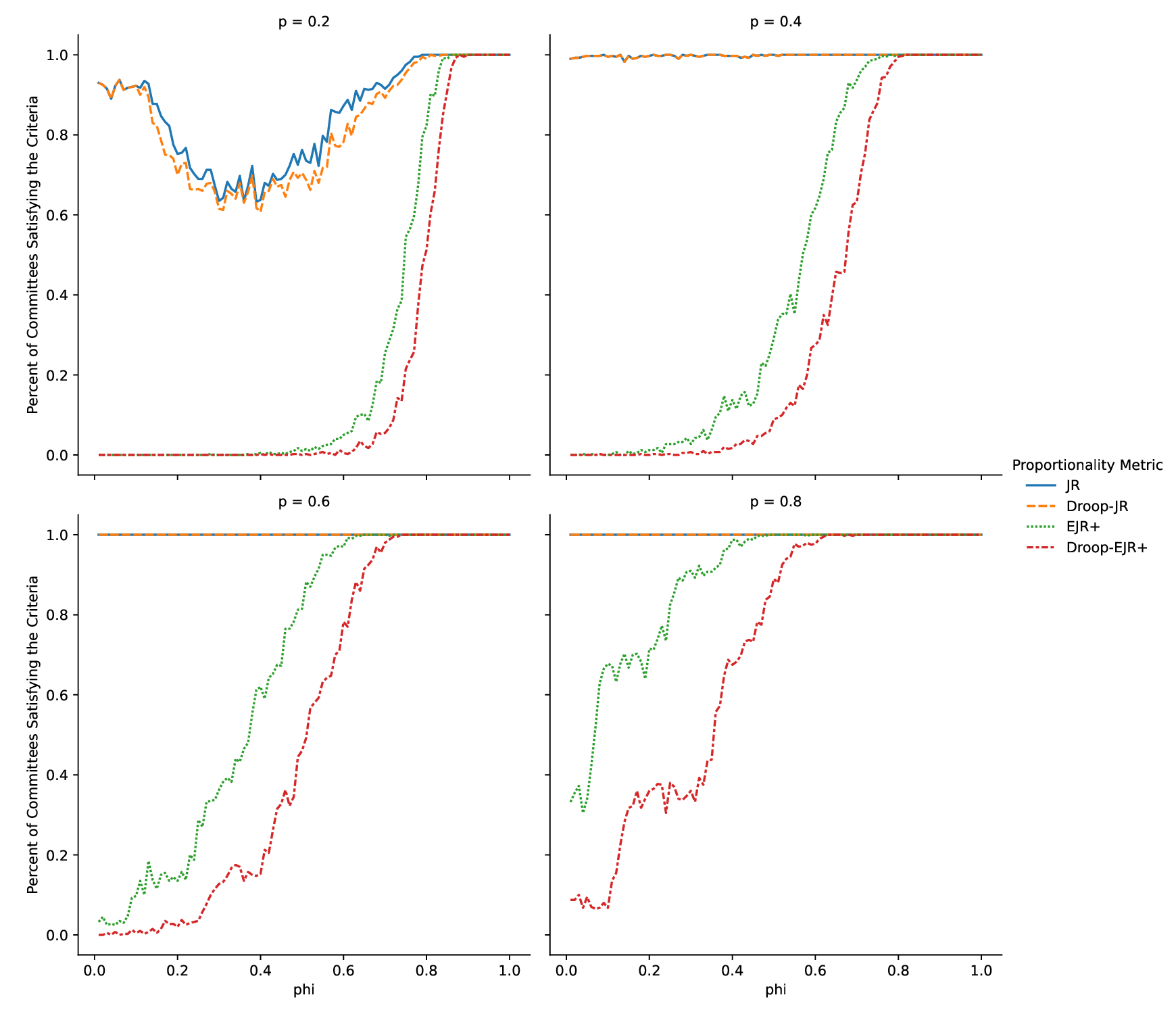}
    \caption{Results for the Resampling Model with a randomly generated committee.}
    \label{fig:resampling-random-committee}
\end{figure}

\begin{figure}
    \centering
    \includegraphics[width=\textwidth]{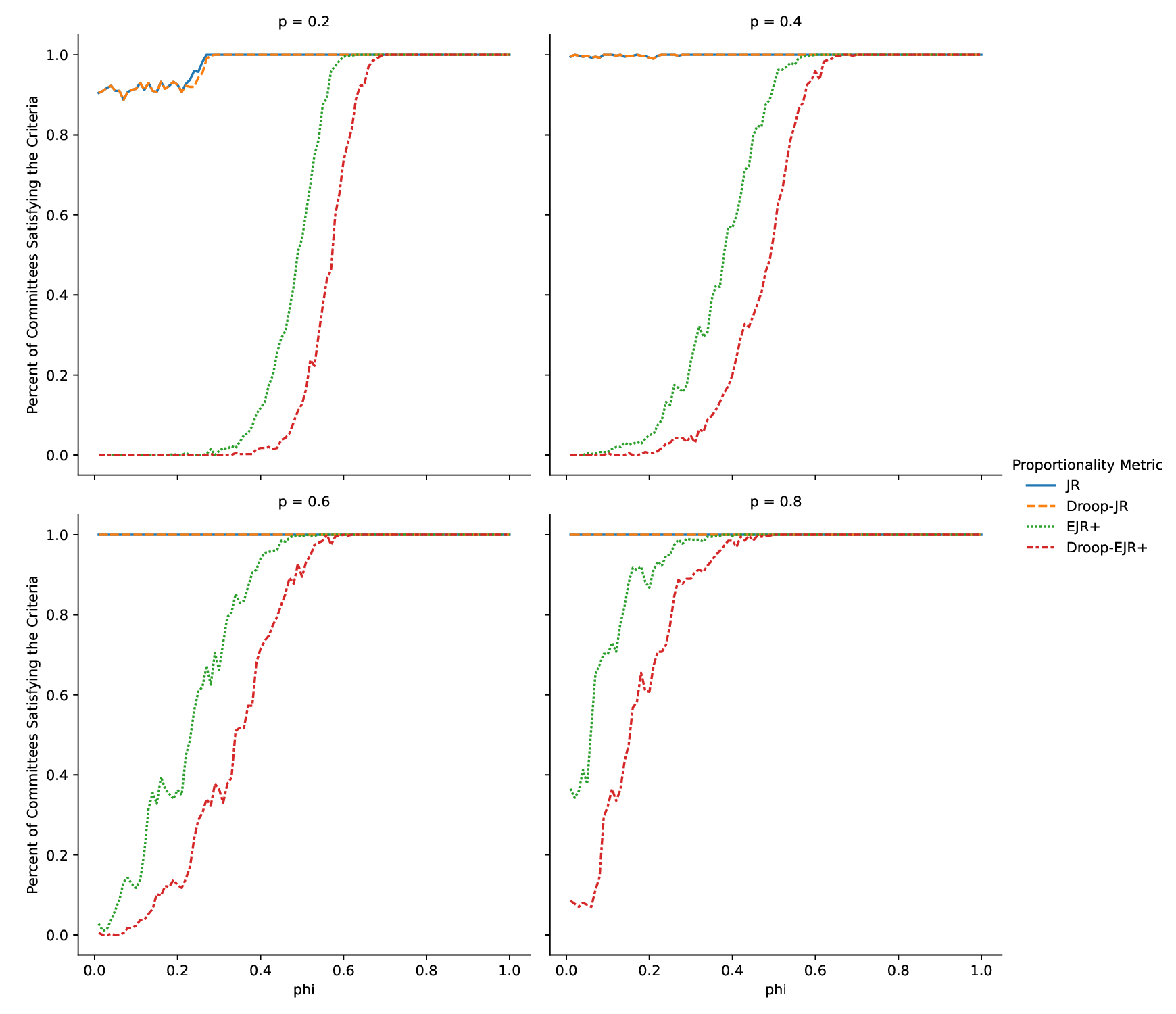}
    \caption{Results for the Noise Model with a randomly generated committee.}
    \label{fig:noise-random-committee}
\end{figure}

\begin{figure}
    \centering
    \includegraphics[width=\textwidth]{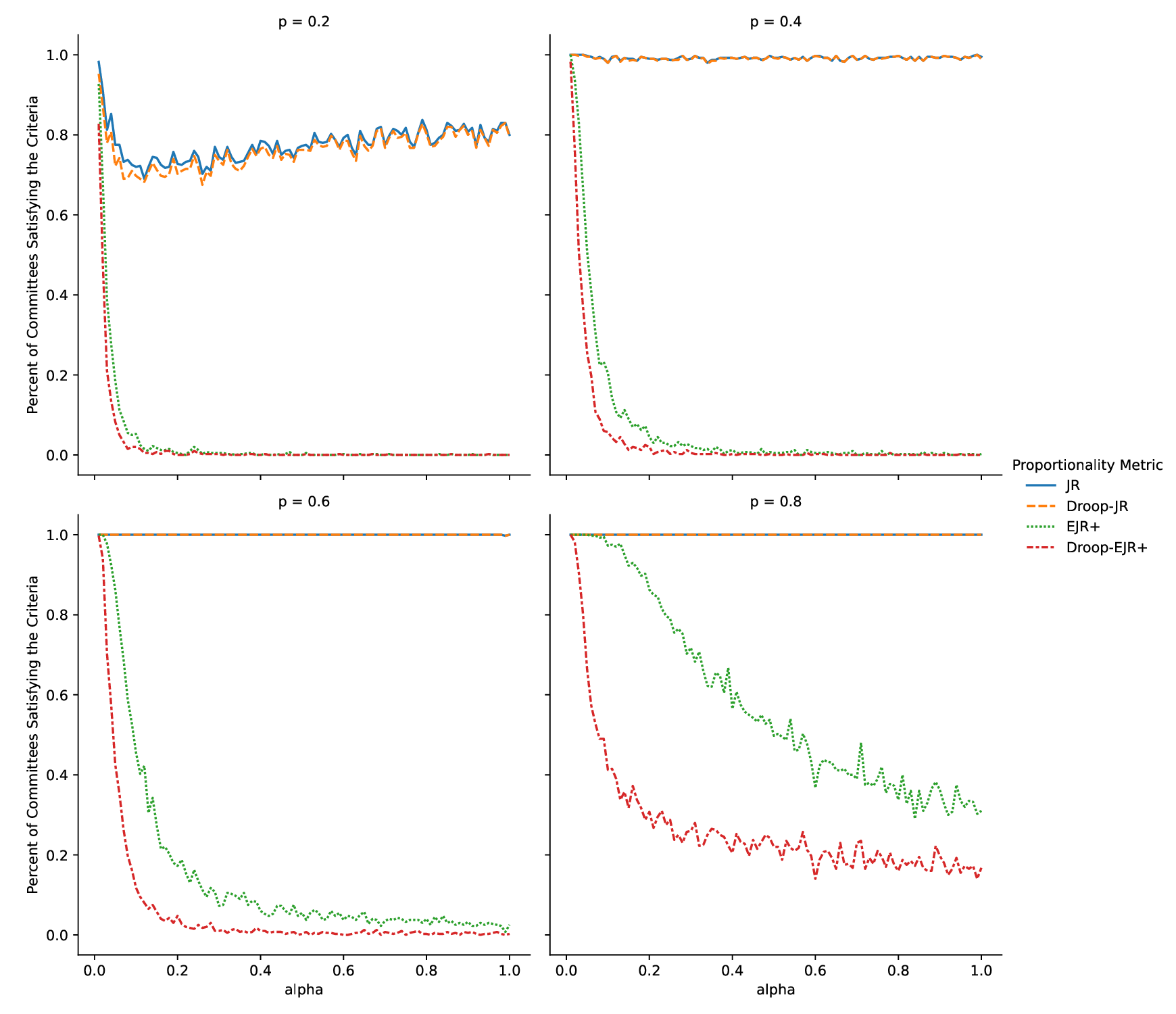}
    \caption{Results for the Truncated Urn Model with a randomly generated committee.}
    \label{fig:truncated-urn-random-committee}
\end{figure}

\begin{figure}
    \centering
    \includegraphics[width=\textwidth]{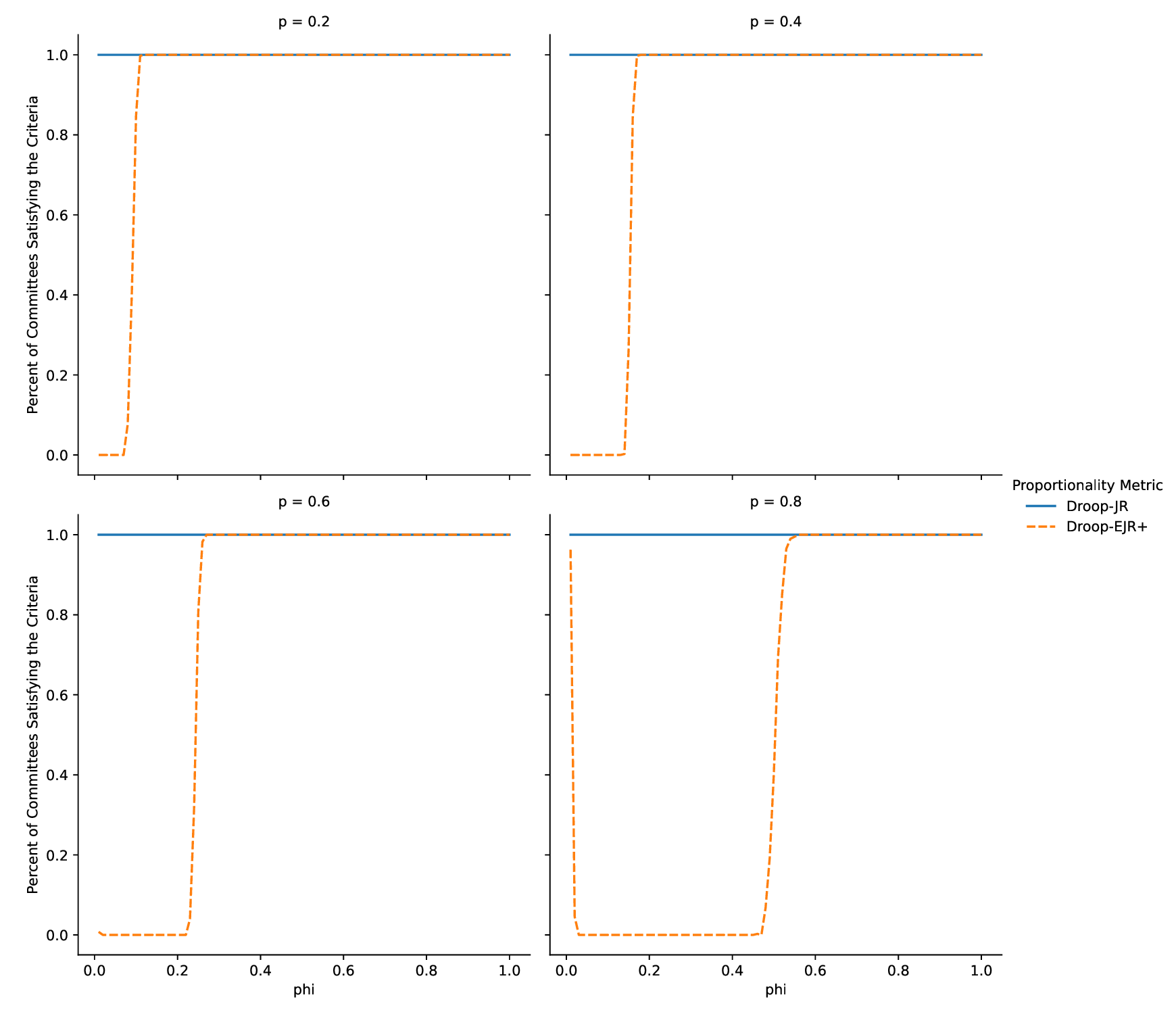}
    \caption{Results for the Resampling Model with an MES committee.}
    \label{fig:resampling-mes-committee}
\end{figure}

\begin{figure}
    \centering
    \includegraphics[width=\textwidth]{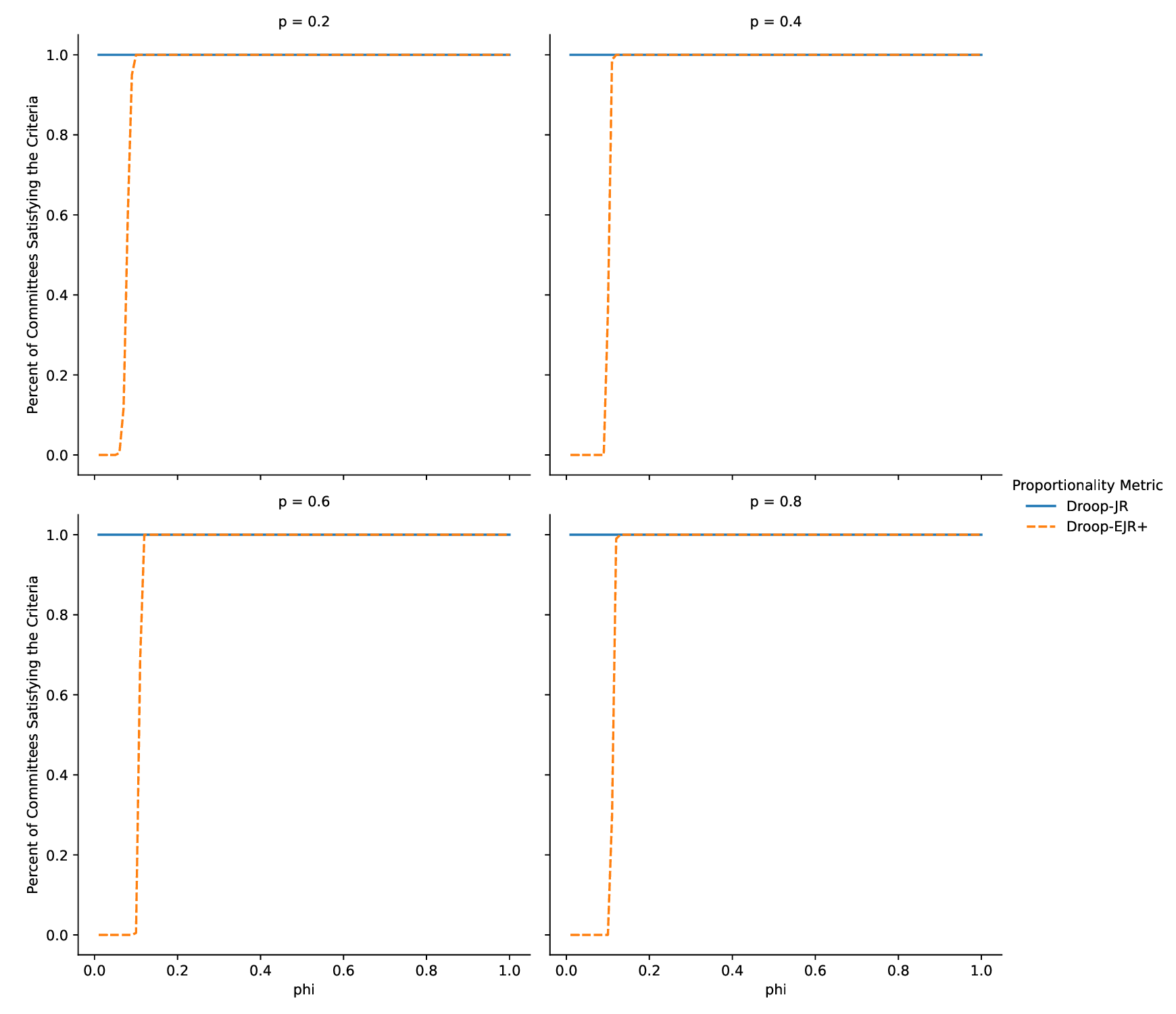}
    \caption{Results for the Noise Model with an MES committee.}
    \label{fig:noise-mes-committee}
\end{figure}

\begin{figure}
    \centering
    \includegraphics[width=\textwidth]{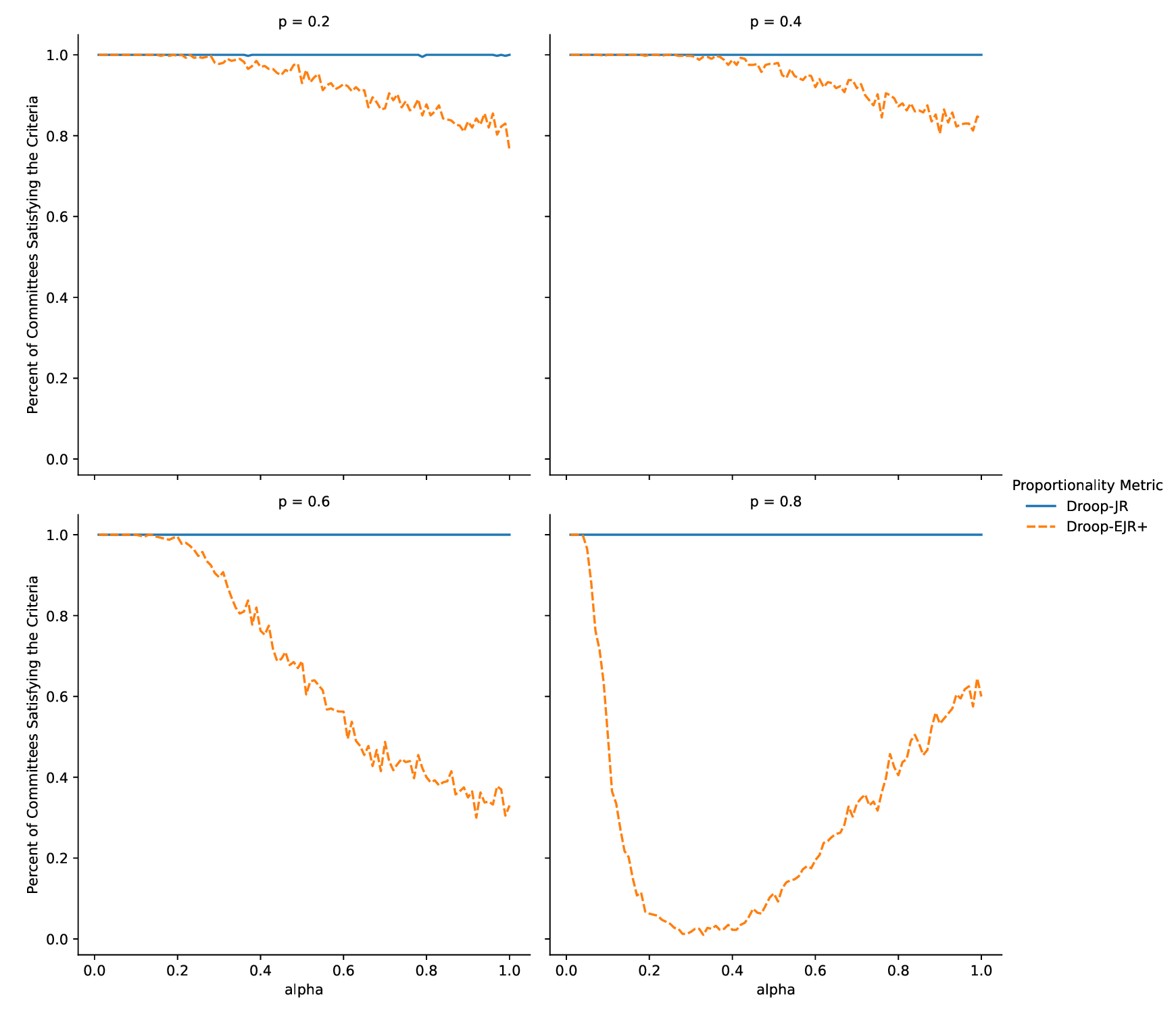}
    \caption{Results for the Truncated Urn Model with an MES committee.}
    \label{fig:truncated-urn-mes-committee}
\end{figure}

\begin{figure}
    \centering
    \includegraphics[width=\textwidth]{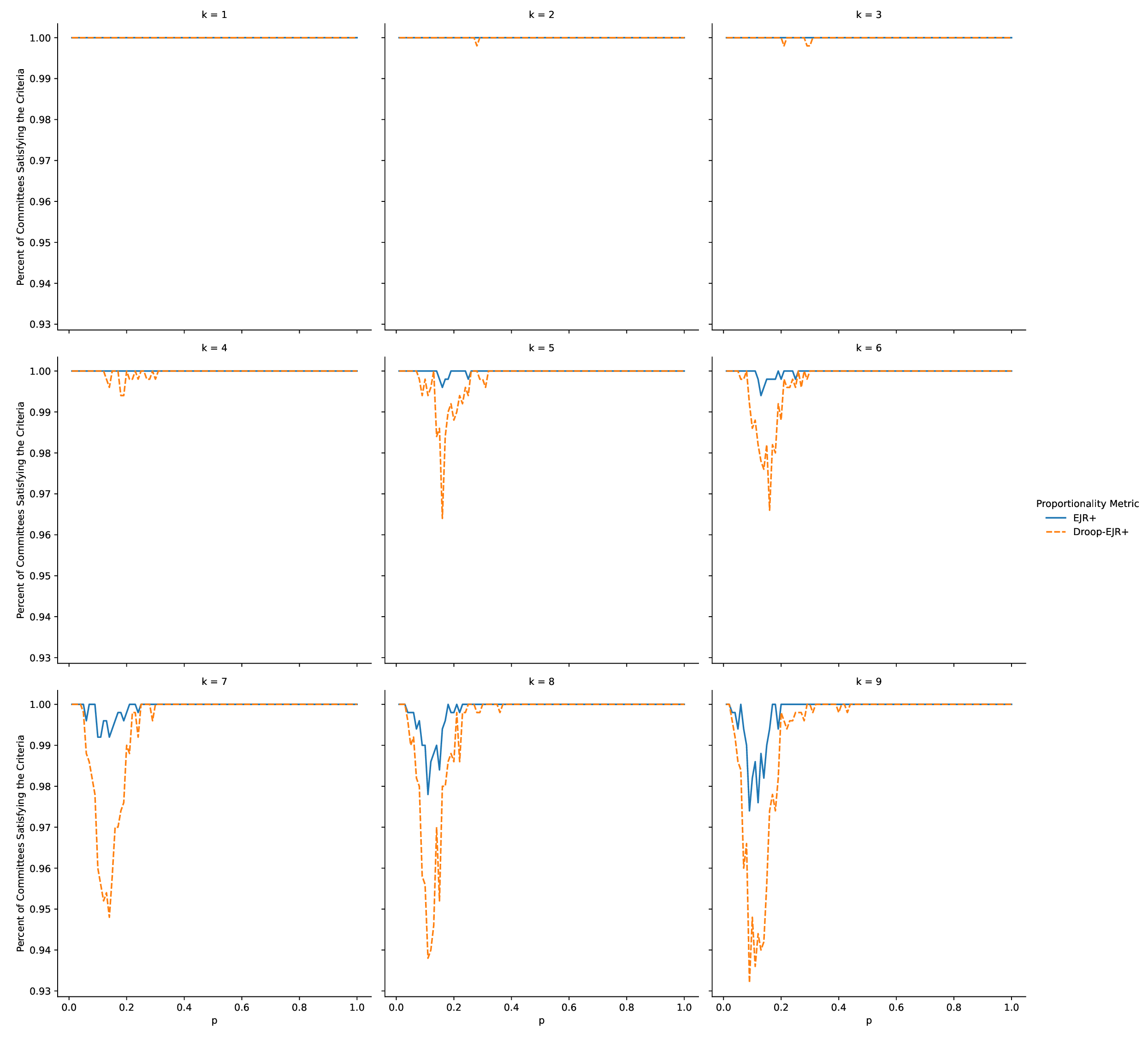}
    \caption{Results for the $p$-Impartial Culture model with 50 candidates.}
    \label{fig:IC-50}
\end{figure}

\begin{figure}
    \centering
    \includegraphics[width=\textwidth]{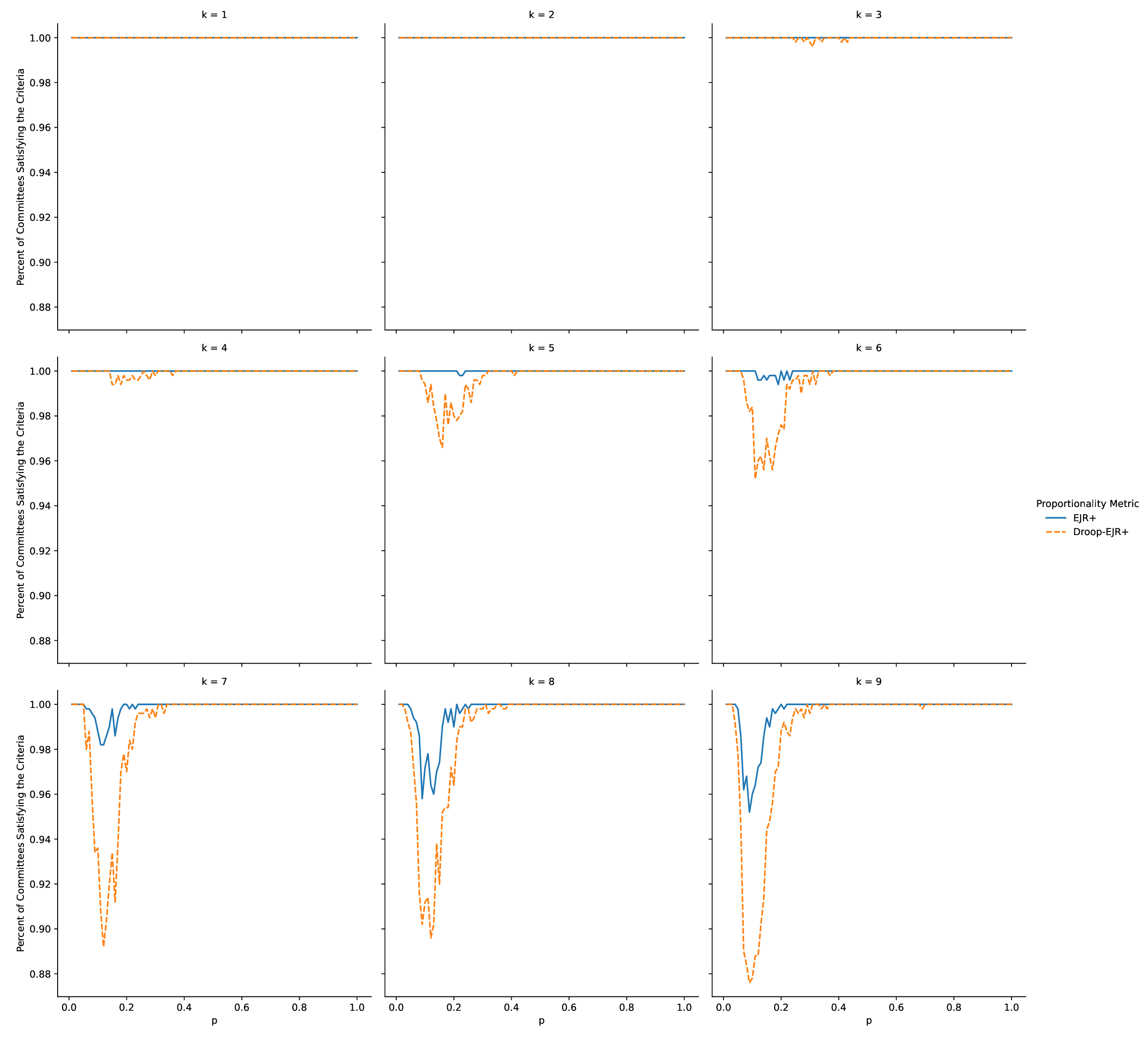}
    \caption{Results for the $p$-Impartial Culture model with 100 candidates.}
    \label{fig:IC-100}
\end{figure}

\begin{figure}
    \centering
    \includegraphics[width=\textwidth]{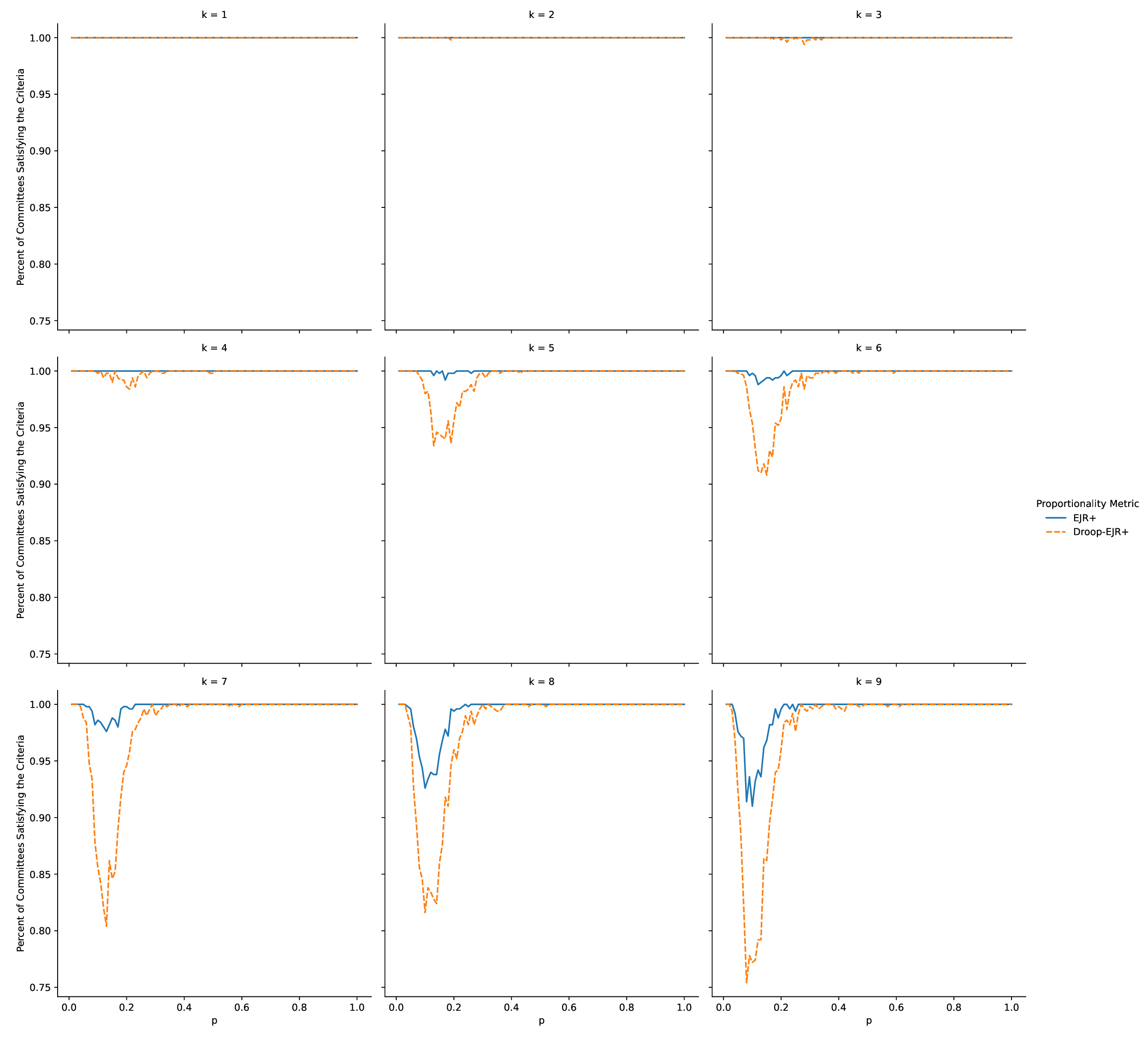}
    \caption{Results for the $p$-Impartial Culture model with 200 candidates.}
    \label{fig:IC-200}
\end{figure}


\end{document}